\definecolor{OliveGreen}{rgb}{0,0.6,0}
\def\brho{\bm{\rho} }
\def\bA{\bm{A} }
\def\bI{\bm{I} }
\def\bC{\mathbb{C} }
\def\bW{\bm{W} }
\def\bU{\bm{U} }
\def\bu{\bm{u} }
\def\bUs{\bm{U}_{\star} }
\def\bus{\bm{u}_{\star} }
\def\bV{\bm{V} }
\def\E{\mathbb{E} }
\def\P{\mathbb{P} }
\def\O{\mathcal{O} }
\def\lV{\left\lVert }
\def\rV{\right\lVert }
\def\lv{\left\lvert}
\def\rv{\right\rvert}
\def\l{\left\langle}
\def\r{\right\rangle}
\def\fro{\mathrm{F}}
\def\Tr{\mathrm{Tr}}
\def\sigmas{\sigma^{\star}}
\def\bSigma{\bm{\Sigma}}
\def\brhos{\bm{\rho}_{\star}}
\Crefname{assumption}{Assumption}{Assumptions}
\Crefname{example}{Example}{Examples}
\Crefname{remark}{Remark}{Remarks}
\Crefname{claim}{Claim}{Claims}
\Crefname{lemma}{Lemma}{Lemmas}
\newtheorem{theorem}{Theorem}
\newtheorem{lemma}{Lemma}
\newtheorem{corollary}{Corollary}
\newtheorem{remark}{Remark}
\Crefname{assumption}{Assumption}{Assumptions}
\Crefname{example}{Example}{Examples}
\Crefname{remark}{Remark}{Remarks}
\Crefname{claim}{Claim}{Claims}
\def\BibTeX{{\rm B\kern-.05em{\sc i\kern-.025em b}\kern-.08em
    T\kern-.1667em\lower.7ex\hbox{E}\kern-.125emX}}
\begin{document}
%\title{How to Use the IEEEtran \LaTeX \ Templates}
\title{Online Quantum State Tomography via Stochastic Gradient Descent}
\author[1]{Jian-Feng Cai \thanks{jfcai@ust.hk}}
\author[2,3,4]{Yuling Jiao \thanks{yulingjiaomath@whu.edu.cn}}
\author[2,3,4]{Yinan Li \thanks{Yinan.Li@whu.edu.cn}}
\author[5,3,4]{Xiliang Lu \thanks{xllv.math@whu.edu.cn}}
\author[6,3,5,4]{Jerry Zhijian Yang \thanks{zjyang.math@whu.edu.cn}}
\author[2,3,4,7]{Juntao You \thanks{youjuntao@whu.edu.cn}}

\affil[1]{Department of Mathematics, Hong Kong University of Science and Technology, Hong Kong}
\affil[2]{School of Artificial Intelligence, Wuhan University, Wuhan, China}
\affil[3]{National Center for Applied Mathematics in Hubei, Wuhan, China}
\affil[4]{Hubei Key Laboratory of Computational Science, Wuhan, China}
\affil[5]{School of Mathematics and Statistics, Wuhan University, Wuhan, China}
\affil[6]{Institute for Math \& AI, Wuhan University, Wuhan, China}
\affil[7]{Institute for Advanced Study, Shenzhen University, Shenzhen, China}

\maketitle

\begin{abstract}
We initiate the study of online quantum state tomography (QST), where the matrix representation of an unknown quantum state is reconstructed by sequentially performing a batch of measurements and updating the state estimate using only the measurement statistics from the current round. Motivated by recent advances in non-convex optimization algorithms for solving low-rank QST, we propose non-convex mini-batch stochastic gradient descent (SGD) algorithms to tackle online QST, which leverage the low-rank structure of the unknown quantum state and are well-suited for practical applications. Our main technical contribution is a rigorous convergence analysis of these algorithms. With proper initialization, we demonstrate that the SGD algorithms for online low-rank QST achieve linear convergence both in expectation and with high probability. Our algorithms achieve nearly optimal sample complexity while remaining highly memory-efficient. In particular, their time complexities are better than the state-of-the-art non-convex QST algorithms, in terms of the rank and the logarithm of the dimension of the unknown quantum state. 
\end{abstract}

\begin{IEEEkeywords}
Quantum State Tomography, Online Optimization, non-convex Stochastic Gradient Descent, mini-batch Stochastic Gradient Descent.
\end{IEEEkeywords}

\section{Introduction}
\subsection{Background}
\IEEEPARstart{Q}{uantum} state tomography (QST) asks to recover the matrix representation of an \emph{unknown} quantum state $\brhos\in\bC^{d\times d}$ using the measurement statistics $\{y_i=\Tr(\bA_i\brhos):~i=1,\dots,m\}$ of a set of $2$-outcome measurements (POVMs) $\bA_1,\dots, \bA_m$. Despite its wide applications in quantum theory, quantum computation, and quantum information processing, the running time of QST algorithms can be extremely slow, as the matrix dimension $d$ grows exponentially with the number of individual quantum systems (the number of qubits). Experimental implementation of QST algorithms has only been achieved for very few qubits~\cite{PhysRevLett.113.040503,Riofro2017}, showcasing its computational difficulty.

Understanding the computational complexity of QST has been a fruitful research line in computer science and physics. There are many different QST schemes whose feasibilities rely on different hardware requirements and optimization algorithms. Consider the number of copies of the unknown state $\brhos$ needed to recover the matrix representation of $\brhos$ (sample complexity). If we are allowed to perform joint (entangled) measurements on all the copies of $\brhos$,~\cite{7956181} and~\cite{10.1145/2897518.2897544} proved that $\Theta(rd)$ many samples are sufficient and necessary to find an approximation of $\brhos$ up to constant error (in trace distance or infidelity distance)~\footnote{We shall focus on constant error setting in the introduction to simplify the presentation.}, where $r$ is the rank of the unknown state $\brhos$ (see also~\cite{wrightthesis,Yuen2023improvedsample}). These results can be improved if state-preparation unitaries are provided~\cite{doi:10.1137/1.9781611977554.ch47}, or generalized to performing joint measurements on a subset of all the copies~\cite{10.1145/3618260.3649704}. Recently, (optimal) memory complexity analysis of these QST schemes has been proposed in~\cite{hu2024sampleoptimalmemoryefficient}.

A major drawback of these schemes is that they require storing all copies of $\brhos$ simultaneously and performing highly entangled measurements — both of which remain challenging given the current hardware development. To address this limitation, one can consider QST schemes that rely on single-copy (unentangled) measurements. In particular, if random rank-$1$ measurements are allowed on each copy of the unknown state $\brhos$,~\cite{KUENG201788} proposed QST algorithms based on low-rank matrix recovery (see~\cite[Sec. 5.1]{wrightthesis} for another similar algorithm based on an empirical averaging technique). In this setting, $\Theta(r^2d)$ many copies of the unknown state are necessary and sufficient, even the measurements are chosen adaptively~\cite{KUENG201788,7956181,wrightthesis,Guta_2020,lowe2022lowerboundslearningquantum,10353129,Flammia2024quantumchisquared,franca_et_al:LIPIcs.TQC.2021.7}. This indicates that entangled measurements are strictly powerful than unentangled measurements (even in the non-adaptive vs.~adaptive setting) in terms of sample complexity. 

Nevertheless, implementing random rank-$1$ measurements (when using certain techniques from matrix recovery algorithms), or implementing a sufficiently accurate approximate $4$-design~\cite{KUENG201788} instead, are still inefficient in practice due to the hardware restrictions. Easily implementable measurements, such as the (local) Pauli measurements~\footnote{A Pauli measurement on an $n$-qubit system is the tensor product of $n$ Pauli operators.}, are preferable for practical QST schemes. In fact, single-copy local Pauli measurements can produce enough information to reconstruct the unknown quantum states~\cite{PhysRevLett.105.150401,liu2011universal,flammia2011direct,Flammia_2012,10.1214/15-AOS1382,yu2020sampleefficienttomographypauli}. The most famous QST algorithms with Pauli measurements are those based on \emph{compressed sensing}~\cite{1614066}, which have been implemented in practice~\cite{Riofro2017}.

Despite significant efforts to understand the sample complexity of QST, the time complexity of QST algorithms has received comparatively less attention. These algorithms typically involve processing large matrices through computationally expensive operations, such as matrix inversion, eigen-decomposition, or singular value decomposition—each requiring cubic time complexity, making them slow in practice. To mitigate these challenges, one can leverage the linear algebraic structure of quantum states. Many practical instances of QST involve recovering \emph{low-rank} quantum states (e.g., pure states). The low-rank structure has already been exploited in QST algorithms based on compressed sensing~\cite{PhysRevLett.105.150401,liu2011universal,Flammia_2012}, where certain convex optimization solvers have been utilized to provide time complexity estimations. 

More recently, \emph{non-convex optimization} techniques have been applied to optimization problems over low-rank matrices, yielding improved performance~\cite{NIPS2015_39461a19,pmlr-v49-bhojanapalli16,7536166,pmlr-v48-tu16,pmlr-v70-ge17a,doi:10.1137/17M1150189}. In particular, compressed sensing QST can be analyzed and implemented using the \emph{Projected Factored Gradient Descent} (ProjFGD) algorithm~\cite{kyrillidis2018provable,photonics10020116,9810003}, where the main idea is to utilize the low-rank decomposition $\brhos=\bU\bU^\dagger$ and work with the parameter matrix $\bU\in\bC^{d\times r}$, assuming the unknown state $\brhos$ is of rank (at most) $r$.~\cite{photonics10020116} showed that MiFGD, a variant of ProjFGD, can already outperform QST algorithms based on convex optimization, even those based on deep neural networks~\cite{Gao2017,Torlai2018,10.21468/SciPostPhys.7.1.009}. Utilizing the more advanced \emph{Riemannian Gradient Descent} (RGD) algorithm~\cite{hsu2024quantum}, the time complexity can be further improved in terms of the condition number of $\brhos$. We note that if the unknown quantum state satisfies certain sparsity conditions, the optimal convergence rate of QST algorithms was discussed in~\cite{10.1214/15-AOS1382}.

\subsection{Main Results}
In this paper, we initial the study of QST algorithms through the lens of \emph{online optimization}. The online setting has been extensively explored in quantum learning theory, where the goal is to predict properties of unknown quantum states~\cite{Aaronson_2019,chen2020practicaladaptivealgorithmsonline,yang2020revisiting,Chen2024adaptiveonline,tseng2024onlinelearningquantumstates,lumbreras2024learningpurequantumstates,Fawzi2024,meyer2025onlinelearningpurestates} instead of obtaining the full matrix representation of the unknown quantum states (see~\cite{Anshu2024} for a survey of quantum state learning). Many interesting classical learning objectives, such as shadow tomography~\cite{doi:10.1137/18M120275X} and classical shadow~\cite{Huang2020}, have been proposed and proven useful for many quantum learning tasks, such as predicting ground-state properties of gapped Hamiltonians~\cite{doi:10.1126/science.abk3333}. 

For QST, we consider the following online setting: We \emph{sequentially} perform \emph{a batch of} measurements and use \emph{only} these statistics to update the quantum state estimation in each round. Although online QST algorithms have been studied in~\cite{Zhang2020,pmlr-v238-tsai24a,lin2021maximumlikelihoodquantumstatetomography,lin2022maximumlikelihoodquantumstatetomography,tsai2022fasterstochasticfirstordermethod}, their (sample and time) complexity analysis is incomplete or incomparable with the offline QST algorithms. Meanwhile, it is worth noting that the aforementioned methods analyze the reconstruction problem without exploiting the low-rank structure, leading to significant challenges in terms of sample complexity, memory complexity, and computational cost.

Our main motivation to investigate such a setting is its potential advantages on the \emph{experimental side}: Note that many QST instances focus on verifying the outcomes of quantum computation or quantum communication tasks. Preparing different measurement statistics of the unknown state requires extra time to reinstall the selected measurement setting. Utilizing online optimization in QST, the classical optimization process and the measurements can be performed simultaneously: Once we obtain the measurement statistics of the current measurement setting, online optimization algorithms can update the estimation of the unknown state using the measurement information. Meanwhile, the experimenter may take this time to install and perform the next measurement setting. If the online optimization algorithm is sufficiently ``efficient'', it can be integrated with experimental measurement schemes to design more time-efficient QST protocols — optimizing both the duration of experimental measurements and the computational overhead of classical optimization. This synergy could lead to powerful tools for quantum hardware verification, particularly in the Noisy Intermediate-Scale Quantum (NISQ) era.

Following the recent progress on non-convex QST~\cite{kyrillidis2018provable,photonics10020116,9810003,hsu2024quantum}, we propose \emph{simple online optimization algorithms for solving low-rank QST using single-copy Pauli measurements}. Our algorithms are based on the \emph{(mini-batch) stochastic gradient descent} (SGD) method, which is particularly advantageous for large-scale problems due to its efficiency and scalability~\cite{hu2009accelerated,bottou2010large,duchi2011adaptive,li2018statistical}. Non-convex SGD algorithms have been extensively studied in online estimation, with theoretical analysis demonstrating its convergence and effectiveness in various high dimensional tasks, including matrix completion~\cite{jin2016provable,recht2013parallel}, matrix factorization~\cite{gemulla2011large,de2015global}, and tensor decomposition~\cite{ge2015escaping}.

Our SGD algorithms for online QST sequentially estimate the quantum state based on measurements collected among $T$ rounds,  where only a small number $B$ of randomly selected (local) Pauli measurements $\{\bm{A}_{t,k}\}_{k=1}^B$ are performed at each round $t$. For the online data, we have access to the measurement outcomes
\begin{align}
    y_{t,k} = \Tr(\bm{A}_{t,k} \brhos)+z_{t,k},~ k=1,\dots,B
\end{align}
at round $t$, where $z_{t,k}$ denotes the statistical noise of the $k$-th measurement $\bA_{t,k}$. Since only a few measurements are used in each iteration ($B$ is small), the online algorithms benefit from lower per-iteration complexity, albeit at the expense of increased iteration counts. The main contribution of this paper is a thorough convergence analysis of the mini-batch SGD algorithm for online QST. In particular, we prove the following:
\begin{theorem}[Informal]\label{thm:informal}
Let $\brhos\in\bC^{d\times d}$ be an unknown rank-$r$ quantum state of an $n$-qubit quantum system ($d=2^n$) with condition number $\kappa\leq\sqrt{dr}$. Let $\mathbb{W}$ be the set of all local Pauli measurements on the $n$-qubit quantum system. Let $B\le \min\{40\kappa^{2/3},d\}$. There exists an (online) algorithm that utilizes $B$ Pauli measurements sampled from $\mathbb{W}$ uniformly and independently within each round, satisfies the following:
\begin{itemize}
    \item The quantum state estimation $\brho_t$ is computed in $\O(Brd\log d)$ floating-point operations (FLOPs) within each round $t$;
    \item It takes at most $T=\O(B^{-1}\kappa^{2}rd\log d\log\frac{1}{\epsilon})$ many rounds to output a quantum state $\brho_T$ satisfying $\|\brho_T-\brhos\|_{\fro}\leq\epsilon$ with high probability;
    \item The total sample complexity is $\O(\kappa^2rd\log d\max\{r\log^{5} d,\log \frac{1}{\epsilon}\})$.
\end{itemize}
\end{theorem}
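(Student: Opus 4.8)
The plan is to cast online QST as a mini-batch stochastic gradient descent on the factored objective obtained by writing $\brho=\bU\bU^\dagger$ with $\bU\in\bC^{d\times r}$, and to prove a local linear-convergence statement in a basin around $\bUs$. Concretely, the algorithm has (a) a warm-start phase consuming the first $m_0=\tO(\kappa^2 r^2 d)$ Pauli samples and returning a rank-$r$ spectral estimate $\bU_0$, followed by (b) $T$ rounds of the update $\bU_{t+1}=\bU_t-\eta\,\widehat{\mathcal A}_t^{*}\big(\widehat{\mathcal A}_t(\bU_t\bU_t^\dagger)-y_t\big)\bU_t$, followed by the ProjFGD-type projection keeping $\bU_t$ bounded and a final trace normalization; here $\widehat{\mathcal A}_t$ is the suitably rescaled measurement operator built from the $B$ Paulis drawn at round $t$. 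The structural fact driving everything is that the $n$-qubit Paulis form an orthogonal operator basis, so that, up to a known dimension-dependent rescaling, $\E[\widehat{\mathcal A}_t^{*}\widehat{\mathcal A}_t]=\mathrm{Id}$; hence the conditional expectation of the stochastic gradient equals the gradient of the population objective $\tfrac12\|\bU\bU^\dagger-\brhos\|_\fro^2$, which obeys the standard factored restricted strong convexity and smoothness (``regularity'') conditions in a neighborhood of $\bUs$.

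The analysis needs two quantitative inputs. First, a Pauli concentration estimate -- matrix Bernstein together with a covering-number bound over rank-$\O(r)$ matrices, as in the compressed-sensing QST literature -- shows that $m_0=\O(\kappa^2 r^2 d\log^6 d)$ uniformly sampled Paulis make the spectral estimator land in the basin $\mathrm{dist}(\bU_0,\bUs)\le c_0\,\sigma_r(\bUs)$ with high probability; this is where $\kappa\le\sqrt{dr}$ is invoked (to keep the accuracy requirement compatible with the sample budget) and it is the source of the $r\log^5 d$ factor in the sample complexity. Second, a per-round gradient second-moment bound: since each rescaled single-Pauli term has Frobenius norm $\O(\sqrt d)$ while inside the basin the residual $\widehat{\mathcal A}_t(\bU_t\bU_t^\dagger)-y_t$ is itself $\O(\mathrm{dist}(\bU_t,\bUs))$, a second-moment computation over the Pauli ensemble (restricted to the tangent-type directions that actually appear in $\widehat{\mathcal A}_t^{*}(\cdot)\bU_t$) gives
\begin{align}
\E\!\left[\big\|\widehat{\mathcal A}_t^{*}\big(\widehat{\mathcal A}_t(\bU_t\bU_t^\dagger)-y_t\big)\bU_t\big\|_\fro^2 \,\middle|\, \bU_t\right]
\;\lesssim\; \frac{dr\log d}{B}\,\|\bUs\|_{\mathrm{op}}^{4}\,\mathrm{dist}^2(\bU_t,\bUs)\;+\;\frac{\sigma^2 dr}{B},
\end{align}
the last term being the shot-noise floor. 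The decisive point is that the first term is \emph{multiplicative} in $\mathrm{dist}^2$, which is what permits genuinely linear convergence rather than an $\epsilon^{-2}$ rate.

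Combining the regularity condition with the variance bound gives, on the basin, a one-step inequality of the form
\begin{align}
\E\!\left[\mathrm{dist}^2(\bU_{t+1},\bUs)\,\middle|\,\bU_t\right]
\;\le\; \big(1-\eta\mu+\eta^2 L_{\mathrm{var}}\big)\,\mathrm{dist}^2(\bU_t,\bUs)\;+\;\eta^2\frac{\sigma^2 dr}{B},
\end{align}
with $\mu\asymp\sigma_r(\bUs)^2\asymp\|\bUs\|_{\mathrm{op}}^2/\kappa$ from restricted strong convexity and $L_{\mathrm{var}}\asymp (dr\log d/B)\,\|\bUs\|_{\mathrm{op}}^4$ from the variance bound; choosing $\eta\asymp\mu/L_{\mathrm{var}}$ makes the bracket equal to $1-\Theta(\mu^2/L_{\mathrm{var}})=1-\Theta\big(B/(\kappa^2 dr\log d)\big)$, and the hypothesis $B\le 40\kappa^{2/3}$ is precisely what keeps the higher-order factorization error terms -- which grow with $B$ through $\|\widehat\nabla_t\|_{\mathrm{op}}$ -- dominated, so that this rate holds and the iterates stay in the basin. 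Iterating yields $\mathrm{dist}(\bU_T,\bUs)\le\epsilon$ after $T=\O\big(B^{-1}\kappa^2 dr\log d\,\log(1/\epsilon)\big)$ rounds in expectation. For the high-probability version one runs the same recursion under the stopping time $\tau=\inf\{t:\mathrm{dist}(\bU_t,\bUs)>c_0\sigma_r(\bUs)\}$, observes that $\mathrm{dist}^2(\bU_{t\wedge\tau},\bUs)\,\rho^{-(t\wedge\tau)}$ (with $\rho$ the contraction factor) is a nonnegative supermartingale with increments of controlled size and variance, and applies Ville's inequality together with a Freedman-type bound to conclude simultaneously that $\tau>T$ and $\mathrm{dist}(\bU_T,\bUs)\le\epsilon$. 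A standard perturbation bound $\|\bU\bU^\dagger-\bUs\bUs^\dagger\|_\fro\lesssim\|\bUs\|_{\mathrm{op}}\,\mathrm{dist}(\bU,\bUs)$ converts this into $\|\brho_T-\brhos\|_\fro\le\epsilon$; the per-round FLOP count is $\O(Brd\log d)$ because applying a Pauli tensor product to each of the $r$ columns of $\bU_t$ costs $\O(d\log d)$ and these products are reused when forming $\widehat\nabla_t$; and the total sample count is $m_0+BT=\O\big(\kappa^2 rd\log d\cdot\max\{r\log^5 d,\log(1/\epsilon)\}\big)$.

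I expect the main obstacle to be the variance control together with the $B\le 40\kappa^{2/3}$ threshold: getting the sharp $dr\log d/B$ scaling (rather than a crude $dr$-per-sample bound) requires a careful restricted second-moment estimate over the Pauli ensemble, and the high-probability statement additionally needs uniform-in-$t$ operator-norm tail bounds on the mini-batch stochastic gradient -- so that the iterates provably never leave the basin over all $T$ rounds -- which is where a truncation argument or a union bound over rounds enters and where the constraint on $B$ is sharpest. The remaining ingredients (warm-start concentration, the factored regularity lemmas, the supermartingale bookkeeping) are by now fairly standard.
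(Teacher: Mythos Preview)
Your plan mirrors the paper's: factored mini-batch SGD, spectral warm-start, a local one-step contraction in expectation, then a supermartingale (Azuma--Bernstein) argument for the high-probability statement. Two simplifications the paper makes relative to your proposal are worth noting. First, it tracks $e_t=\lV\bU_t\bU_t^{\dag}-\brhos\rV_\fro$ directly rather than a factor-space distance $\mathrm{dist}(\bU_t,\bUs)$; this eliminates the final conversion lemma and streamlines the local smoothness and regularity estimates (there is also no projection step in the algorithm---the analysis itself keeps the iterates in the basin). Second, and more substantively, the second-moment bound you flag as the main obstacle is essentially immediate from $\lV\bA\rV\le 1$ together with Parseval for the Pauli basis, with \emph{no} $\log d$ factor and no covering or matrix-Bernstein argument: one gets $\E\lV\nabla_{\bU}\ell_t(\bU)\rV_\fro^2\lesssim (Br/d)\,\lV\bU\rV^2\,e_t^2$ plus a noise term directly. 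The $\log d$ in the final iteration count enters only through the step-size restriction $\eta\lesssim 1/(\kappa r\log d)$ in the high-probability step, and the constraint $B\le 40\kappa^{2/3}$ is used there---not in the expected contraction, which only needs $B\le 40\kappa^2$---to control the fourth-moment term $\E\lV\nabla_{\bU}\ell_t(\bU)\rV_\fro^4\lesssim B^4 r^3 e_t^4/d$ that appears in the supermartingale variance bound (via $\eta^2 B^3 r^2\lesssim 1$).
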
  

Compared to the other non-convex QST algorithms (cf. Table~\ref{tab:comparison}), our online SGD algorithms offer greater flexibility in terms of measurements. These offline algorithms require at least a batch of $m = \Omega(rd\log^6 d)$ random Pauli measurements within each iteration to ensure the so-called \emph{restricted isometry property} (RIP)~\cite{liu2011universal}, which is critical for their convergence analysis~\cite{kyrillidis2018provable,photonics10020116,hsu2024quantum}. In contrast, our convergence analysis does not rely on RIP, and the probabilistic linear convergence holds even for small batch sizes $B$. In fact, our analysis demonstrates that \emph{the mini-batch SGD achieves an iteration complexity that is $B$-times faster than standard SGD}, provided the batch size $B$ is not excessively large. Thus, our online algorithms require much lower per-iteration time complexity for computing the quantum state estimation updates, and the total time complexity (per-iteration complexity $\times$ number of iterations) can be even better than the other non-convex algorithms (in terms of the rank $r$ and the logarithmic of the dimension $d$), providing a suitable initialization.

\begin{table*}[htbp!]
    \center
    \footnotesize
    \label{tab:comparison} 
    \caption{Complexity comparison of our SGD algorithms for online QST with other non-convex offline QST algorithms ($B\le\{40\kappa^{2/3},d\}$).}
    \setlength{\tabcolsep}{1pt} % Default value: 6pt
    \renewcommand{\arraystretch}{1.5} % Default value: 1
    \begin{tabular}{|c|c|c|c|c|c|c|}
        \toprule
        Algorithms &\makecell[c]{Memory \cr complexity} &Sample complexity $m$ &\makecell[c]{Per-iteration\cr complexity} &\makecell[c]{Number of iterations \cr  for $\epsilon$-solution}  & \makecell[c]{Computational\cr complexity}\\
        \midrule
         \makecell[c]{Convex Optimization~\cite{PhysRevLett.105.150401}}&$\O(d^2)$ & $\O(rd\log^{6} d)$ &  $-$ &$-$& $-$\\
         \hline
        \makecell[c]{ProjFGD/\cr MIFGD~\cite{photonics10020116}}&$\O(rd)$ & $\O(\kappa^2r^2d\log^{6} d)$ &  $\O(m rd\log d )$ &$\O(\kappa^{\alpha}\log\frac{1}{\epsilon})$, $\alpha\ge \frac{1}{2}$&$\O(\kappa^{2+\alpha}r^3d^2\log^{7} d\log\frac{1}{\epsilon})$\\
        \hline
        \makecell[c]{RGD~\cite{hsu2024quantum}}&$\O(rd)$  & $\O(\kappa^2r^2d\log^{6} d)$   & $\O(mr d\log d)$ &$\O(\log\frac{1}{\kappa\epsilon})$ &$\O(\kappa^{2}r^3d^2\log^{7} d\log\frac{1}{\kappa\epsilon})$ \\
        \hline
        SGD (this work)&$\O(rd)$ & $\O(\kappa^2rd\log d\max\{r\log^{5} d,\log \frac{1}{\epsilon}\})$ &  $\O(rd\log d )$ &$\O(\kappa^{2}rd\log d\log\frac{1}{\epsilon})$&$\O(\kappa^{2}r^2d^2\log^{2} d\log\frac{1}{\epsilon})$ \\
        \hline
        \makecell[c]{Mini-batch SGD \cr (this work)}&$\O(rd)$ & $\O(\kappa^2rd\log d\max\{r\log^{5} d,\log \frac{1}{\epsilon}\})$  &  $\O(Br d\log d )$ &$\O(\frac{1}{B}\kappa^{2}rd\log d\log\frac{1}{\epsilon})$&$\O(\kappa^{2}r^2d^2\log^{2} d\log\frac{1}{\epsilon})$ \\

        \bottomrule
    \end{tabular}
\end{table*}

\begin{table*}[htbp!]
    \centering
    \footnotesize
    \label{tab:comparisonpurestate} 
    \caption{Complexity comparison of our SGD algorithms for online QST with other non-convex offline QST algorithms ($r=1$).}
    \setlength{\tabcolsep}{1pt} % Default value: 6pt
    \renewcommand{\arraystretch}{1.5} % Default value: 1
    \begin{tabular}{|c|c|c|c|c|c|c|}
        \toprule
        Algorithms &\makecell[c]{Memory \cr complexity} &Sample complexity $m$ &\makecell[c]{Per-iteration\cr complexity} &\makecell[c]{Number of iterations \cr  for $\epsilon$-solution}  & \makecell[c]{Computational\cr complexity}\\
        \midrule
         \makecell[c]{Convex Optimization~\cite{PhysRevLett.105.150401}}&$\O(d^2)$ & $\O(d\log^{6} d)$ &  $-$ &$-$& $-$\\
         \hline
        \makecell[c]{ProjFGD/\cr MIFGD~\cite{photonics10020116}}&$\O(d)$ & $\O(d\log^{6} d)$ &  $\O(m d\log d )$ &$\O(\log\frac{1}{\epsilon})$ &$\O(d^2\log^{7} d\log\frac{1}{\epsilon})$\\
        \hline
        \makecell[c]{RGD~\cite{hsu2024quantum}}&$\O(d)$  & $\O(d\log^{6} d)$   & $\O(m d\log d)$ &$\O(\log\frac{1}{\epsilon})$ &$\O(d^2\log^{7} d\log\frac{1}{\epsilon})$ \\
        \hline
        SGD (this work)&$\O(d)$ & $\O(d\log^3 d\log \frac{1}{\epsilon})$ &  $\O(d\log d )$ &$\O(d\log d\log\frac{1}{\epsilon})$&$\O(d^2\log^{4} d\log\frac{1}{\epsilon})$ \\
\bottomrule
    \end{tabular}
\end{table*}

\subsection{Overview of the online SGD Algorithm and its analysis: \texorpdfstring{$B=1$}{B=1}}\label{sec:overview}
We briefly describe the online SGD algorithm for QST and the key ingredients for the convergence analysis, focusing on the setting of Batch size $B=1$. More precisely, at each round $t$, the experimenter provides a measurement outcome $y_t = \Tr(\bm{A}_t \brho_{\star}) + z_t$, where the measurement $\bm{A}_t=\bm{P}_{t,1} \otimes \bm{P}_{t,2} \otimes \cdots \otimes \bm{P}_{t,n}$ is obtained by sample each $\bm{P}_{t,j}$ from the Pauli matrices $\{I_{2},X,Y,Z\}$ uniformly at random, and $z_t$ denotes the statistical noise arising from finite measurement repetitions. This implies that $\bA_t$ is chosen from the set of local Pauli measurements $\mathbb{W}$ uniformly at random. 

We utilize the standard squared loss $\hat{\ell}_t(\brho) = \frac{1}{4}\left( y_t - \Tr(\bm{A}_t \brho) \right)^2$ to quantify the quality of an estimation $\brho$ at each round $t$.  The \emph{key idea} is to update $\brho_t$ from $\brho_{t-1}$ using \emph{a single gradient descent step} to reduce $\hat{\ell}_t(\brho_t)$. Note that one can fully minimize $\hat{\ell}_t(\brho_t)$; while this will take additional iterations. To update $\brho_t$, we exploit the low-rank structure of the state $\brho$ and work with its \emph{parameterization} $\brho = \bU \bU^{\dag}$, where $\bU \in \mathbb{C}^{d \times r}$ denotes the parameter matrix of the rank-$r$ positive semidefinite matrix $\brho\in\bC^{d\times d}$. We then focus on the \emph{instantaneous loss of the parameterization}:
\begin{align}
    \ell_t(\bU) := \hat{\ell}_t(\brho)= \frac{1}{4} \left( y_t - \Tr\left(\bm{A}_t \bU \bU^{\dag}\right) \right)^2.
\end{align}
We update $\bU_t$ using the loss $\ell_t(\bU)$ and the previous estimate $\bU_{t-1}$ through the gradient descent update rule:
\begin{IEEEeqnarray}{rCl}\label{eq:OGD}
\bU_t & = & \bU_{t-1} - \eta \nabla_{\bU}\ell_t(\bU_{t-1}) \nonumber\\
&= & \bU_{t-1} - \eta\left[ \Tr(\bm{A}_t \bU_{t-1} \bU^{\dag}_{t-1}) - y_t \right] \bm{A}_t \bU_{t-1},\quad t = 1, \dots, T
\end{IEEEeqnarray}
where $\eta>0$ is the learning rate (or step size) and the prediction at round $t$ is $\brho_t=\bU_{t}\bU_{t}^{\dag}$.

\begin{algorithm}[h]
\caption{\textbf{S}tochastic \textbf{G}radient \textbf{D}escent (SGD) for Online QST}
\label{algo:SGD}
\begin{algorithmic}[1]
\STATE \textbf{Input:} $T$, learning rate $\eta$, measurements $A_{t}$ and outcomes $y_{t}$
\STATE Initialize $\bU_0$  
\FOR{$t = 1, \ldots, T$}
    \STATE Choose $\bA_t$ from the set of local Pauli measurements $\mathbb{W}$ uniformly at random, update
    \begin{align*}
    \bU_{t} = \bU_{t-1}- \eta\Big[ \Tr(\bm{A}_{t} \bU_{t-1} \bU^{\dag}_{t-1})-y_{t} \Big]\bm{A}_{t}\bU_{t-1}
    \end{align*}
\ENDFOR
\STATE  \textbf{Output:} $\brho_T=\bU_T\bU_T^{\dag}$.
\end{algorithmic}
\end{algorithm}
The proposed algorithm is formalized in Algorithm \ref{algo:SGD}. Note that within each iteration, computing the gradient descent step is much more efficient than the other non-convex offline algorithms' updates: $\nabla_{\bU}\ell_t(\bU_{t-1})$ can be computed in $\O(rd\log d)$ FLOPs, as $\bm{A}_{t}\bU_{t-1}$ can be computed by successively applying each $2\times 2$ factor $\bm{P}_{t,k}$ along the corresponding mode of the tensor reshaped from $\bU_{t-1}$. Our theoretical analysis, detailed in Section \ref{sec:theory} and Section  \ref{sec:theory p}, establishes that this online learning framework achieves local linear convergence in both expectation and high-probability regimes, conditioned on a relatively ``nice'' initialization of the unknown state. Specifically, $\bU_0$ is $\mathcal{O}(\sigmas_r)$-close to $\brhos$ with respect to the Frobenius norm, where $\sigmas_r$ is the smallest nonzero eigenvalue of $\brhos$. 

The convergence analysis of our online SGD algorithms follows from a similar analysis of the gradient descent method as follows.
\paragraph{Road map of the analysis} Recall that a differentiable function  $f:\mathbb{R}^n\to \mathbb{R}$ is said to be \emph{$2L$-smooth} if we have
\begin{IEEEeqnarray}{rCl}\label{ineq:lip}
    \lv f(\bm{x}+\bm{y})-f(\bm{x})-\l\nabla f(\bm{x}),\bm{y} \r \rv\le L\lV \bm{y}\rV_2^2,
\end{IEEEeqnarray}  
$\forall \bm{x}, \bm{y} \in \mathbb{R}^n$, for some constant $L>0$. The gradient descent method for minimizing an objective function $ f(\bm{x})$ takes the following iterative form:
\begin{align*}
    \bm{x}_{t} = \bm{x}_{t-1} - \eta \nabla f(\bm{x}_{t-1}),
\end{align*}
where $ \eta>0 $ is the learning rate. In particular,  for $f(\bm x)$ with \eqref{ineq:lip} holds, we have
\begin{align*}
  f(\bm{x}_{t})&\le f(\bm{x}_{t-1})-\eta \l \nabla f(\bm{x}_{t-1}),\nabla f(\bm{x}_{t-1})\r \cr
 &\quad + L\eta^2\lV \nabla f(\bm{x}_{t-1})\rV_2^2\cr
  &\le f(\bm{x}_{t-1}) - \eta c \| \nabla f(\bm{x}_{t-1}) \|_2^2.
\end{align*}
where $c:=1 - L \eta>0$ provided $\eta<\frac{1}{L}$. Thus, for any differentiable $f$ whose minimal value is $0$, if \eqref{ineq:lip} holds, then provided $\eta<\frac{1}{L}$, it suffices to show that $\| \nabla f(\bm{x}_{t-1}) \|_2^2 \geq \mu f(\bm{x}_{t-1})$, $t>0$,  for some constant $\mu>0$ to have the linear convergence 
\begin{equation*}
    f(\bm{x}_{t}) \leq (1 - \eta \mu c) f(\bm{x}_{t-1}),~t=1,2,\cdots.
\end{equation*}

\paragraph{Key ingredients for the convergence of online QST} We utilize the above framework to analyze our SGD algorithm for online low-rank QST. Consider the \emph{expected loss in the noiseless case} (i.e.~the measurement result is accurate):
\begin{align*}
f(\bU):&=4\mathbb{E}[\ell_t(\bU)]=\lV\bU\bU^{\dag}-\brhos \rV_{\fro}^2\cr
&=\mathrm{dist}^2(\bU\bU^{\dag},\brhos),
\end{align*} 
where the expectation is taken over all random choices of measurements. We first illustrate that, if $\brho=\bU\bU^{\dag}$ is in an $\O(\sigmas_r)$-neighborhood of $\brhos$, there exists numerical constant $L>0$, such that for all perturbations satisfying  $\lV\bV \rV_{\fro}\le \O(\sqrt{\sigmas_r})$, it holds that 
\begin{align*}
    \lv f(\bU+\bV)-f(\bU)-\Re\l\nabla f(\bU),\bV \r \rv\le L\lV \bV\rV_{\fro}^2.
\end{align*}  
The local $2L$-smoothness of $f(\bU)$, together with the update rule~\eqref{eq:OGD}, implies the following local upper bound on the expectation value of the error metric at round $t$:
{\small
\begin{align*}
  \E[f(\bU_{t})]&\le f(\bU_{t-1})-\eta \Re\big\langle \nabla f(\bU_{t-1}),\E[\nabla_{\bU}\ell_t(\bU_{t-1})]\big\rangle\cr
 &\quad + L\eta^2\E\left[\lV \nabla_{\bU}\ell_t(\bU_{t-1})\rV_{\fro}^2\right].
\end{align*}}
It suffices to establish an appropriate lower bound for $\l \nabla f(\bU_{t-1}),\E[\nabla_{\bU}\ell_t(\bU_{t-1})]\r$ (the regularity term) and an appropriate upper bound for $\E[\lV\nabla_{\bU}\ell_t(\bU_{t-1})\rV_{\fro}^2]$ (the smoothness term). In the noiseless case, we establish the following regularity and smoothness conditions:
  \begin{align*}
  \Re\l \nabla f(\bU_{t-1}),\E[\nabla_{\bU}\ell_t(\bU_{t-1})]\r &\ge \Omega\left(\frac{\sigmas_r}{d}\right)f(\bU_{t-1}), \cr \E[\lV\nabla_{\bU}\ell_t(\bU_{t-1})\rV_{\fro}^2] & \le  \O\left(\frac{r}{d}\right)f(\bU_{t-1}).
\end{align*}
Then, conditioning on the current iterate, we have
\begin{align*}
  \E[f(\bU_{t})]\le \left(1-\frac{\eta}{2\kappa d}\right)f(\bU_{t-1}),
\end{align*}
provided that the learning rate is sufficiently small ($\eta\le \O(\frac{1}{\kappa r})$). This translates to the following \emph{local contraction property}: If $\brho_t$ is in an $\O(\sigmas_r)$-neighborhood of $\brhos$, conditioning on the current iterate, we have
\begin{align*}
    \E[\lV\brho_{t}-\brhos \rV_{\fro}^2]\le \left(1-\frac{\eta }{2\kappa d}\right)\lV\brho_{t-1}-\brhos \rV_{\fro}^2,
\end{align*}
provided that $\eta \le \mathcal{O}\left(\frac{1}{\kappa r}\right)$ and the measurement outcome is noiseless.\footnote{The noisy case is detailed later in~\Cref{sec:theory}.} 

Based on the standard Azuma-Bernstein concentration inequality, we further obtain a probabilistic convergence guarantee demonstrating that: If $\brho_0$ is in an $\O(\sigmas_r)$-neighborhood of $\brhos$, then for the prediction $\brho_t$ at round $t\le T$ it holds (in the noiseless case)
\begin{align*}
    \left\|\brho_{t} - \brho_{\star}\right\|_{\mathrm{F}}^2 
    \leq 2\left(1 - \frac{\eta}{2\kappa d}\right)^t\left\|\brho_{0} - \brho_{\star}\right\|_{\mathrm{F}}^2,
\end{align*}
 with overwhelming probability, provided $\eta \leq \mathcal{O}\left(\frac{1}{\kappa r\log d}\right)$. 

\subsection{Online initialization via SGD}
From the above, we know that the proposed SGD algorithm achieves linear convergence, provided that the initial estimate $\brho_0$ lies within an $\O(\sigmas_r)$-neighborhood of the true state $\brhos$.  In this subsection, we present an online initialization method. For simplicity, we first consider the case that $\brhos$ is a pure state, i.e., $r=1$.  In this case, the density matrix can be decomposed as $\brhos=\bus\bus^{\dag}$, where $\bus\in \mathbb{C}^{d}$. Therefore, we aim to find a parameter vector that can reconstruct $\brhos$. Assuming that  $\lV\brhos\rV_{2} = 1$, one can compute the leading vector of $\brhos$ as the parameter vector, which is the solution to the following optimization problem:
\begin{equation*}
\max_{\bu} \bu^{\dag }(\brhos)\bu\quad\text{s.t.}\quad\lV\bu\rV_{2} = 1.
\end{equation*}
However, $\brhos$ is the unknown density matrix to recover.  Noticing that $\E[ d y_t \bm{A}_t]=\brhos$, the problem is equivalent to 
\begin{equation}\label{eq:PCA}
\min_{\bu} -\bu^{\dag }(\E[d y_t \bm{A}_t])\bu\quad\text{s.t.}\quad\lV\bu\rV_{2} = 1.
\end{equation}
A simple algorithm for solving \eqref{eq:PCA} is the projected gradient descent:
$$\bu_t=\mathcal{P}_{\mathcal{C}}\left(\bu_{t-1}+\eta_t \E[d y_t\bm{A}_t]\bu_{t-1}\right),$$
where $-\E[d y_t \bm{A}_t]\bu$ is the gradient of the objective function $-\bu^{\dag }(\E[d y_t \bm{A}_t])\bu$, and $\mathcal{P}_{\mathcal{C}}$  denotes projection onto the set $\mathcal{C}:=\{\bu\in\mathbb{C}^{d}:\lV\bu\rV_{2} = 1\}$.  Nevertheless, we also do not have access to exact $\E[d y_t \bm{A}_t]$, so we replace $\E[d y_t \bm{A}_t]$ by its streaming random samples $\{ d y_t \bm{A}_t\}_{t\ge 1}$, which gives the update
$$\bu_t=\mathcal{P}_{\mathcal{C}}\left(\bu_{t-1}+\eta_t d y_t\bm{A}_t\bu_{t-1}\right),\quad t=1,2,\cdots.$$
Since $\bm{A}_t$ is randomly sampled from $\mathbb{W}$, the algorithm is a (projected) SGD algorithm, as detailed in \Cref{algo:iniSGD}.
\begin{algorithm}[h]
\caption{Online Initialization via SGD}
\label{algo:iniSGD}
\begin{algorithmic}[1]
\STATE \textbf{Input:} $T_0$, learning rate $\eta_t$, measurements $\bA_{t}$ and outcomes $y_{t}$
\STATE Choose $\bu_0$ uniformly at random from the unit sphere  
\FOR{$t = 1, \ldots, T_0$}
    \STATE Choose $\bA_t$ from the set of local Pauli measurements $\mathbb{W}$ uniformly at random, update
    $$\tilde{\bu}_{t} = \bu_{t-1}- \eta_t d y_t\bm{A}_{t}\bu_{t-1},\quad \bu_t= \tilde{\bu}_{t}/\lVert \tilde{\bu}_{t}\rVert_{2}.$$
\ENDFOR
\STATE  \textbf{Output:} $\bu_{T_0}$ and $\brho_0=\bu_{T_0}\bu_{T_0}^{\dag}$.
\end{algorithmic}
\end{algorithm}
For the online initialization, we present \Cref{thm:onlineIni}, which guarantees that, starting from a random vector, \Cref{algo:iniSGD} returns a $\delta$-accurate initial estimate $\brho_0$ in at most $\O(\delta^{-2}d\log^2 d)$ iterations, with probability at least $\frac{3}{4}$—a probability that can be amplified to $1 - \frac{1}{d}$ as discussed later.
\begin{theorem}\label{thm:onlineIni}
Let $\delta\in (0,1]$ be any fixed constant. For  $\lV\brhos\rV_{2} = 1$ and $r=1$, letting $\eta_t =\frac{\log d}{40d\log^2 d+t}$, we have the output of \Cref{algo:iniSGD} satisfies
\begin{align}\label{eq:contra}
    \lVert\brho_0-\brhos \rVert_{\fro}\le \delta 
\end{align} 
with probability at least $\frac{3}{4}$, provided $T_0\ge C_0 \delta^{-2}d\log^2 d$ for some universal constant $C_0>0$.
\end{theorem}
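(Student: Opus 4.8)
The plan is to analyze Algorithm~\ref{algo:iniSGD} as a streaming / Oja-type principal component analysis (PCA) algorithm applied to the random matrices $d\,y_t\bm{A}_t$, whose expectation is $\brhos = \bus\bus^\dagger$. Since $\brhos$ is rank one with $\lV\brhos\rV_2 = 1$, its leading eigenvector is $\bus$ with eigenvalue $1$, and the eigengap equals $1$. First I would set up the potential function measuring alignment with the target: let $\xi_t := 1 - |\l\bu_t,\bus\r|^2 = \lV \bu_t\bu_t^\dagger - \brhos\rV_{\fro}^2 / 2$ (using $\lV\bu_t\rV_2 = 1$), so that controlling $\xi_{T_0}$ by $\delta^2/2$ gives exactly \eqref{eq:contra}. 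I would then expand one step of the (un-normalized) update $\tilde\bu_t = \bu_{t-1} + \eta_t\, d\,y_t\bm{A}_t\bu_{t-1}$, normalize, and track how the sine-squared distance $\xi_t$ evolves in expectation conditioned on the past: write $d\,y_t\bm{A}_t = \brhos + \bm{N}_t$ where $\bm{N}_t$ is a zero-mean noise matrix, and show that the signal part contracts $\xi$ by a factor $1 - 2\eta_t(1-\xi_{t-1})$ while the noise contributes an additive term of order $\eta_t^2\,\E\lV\bm{N}_t\bu_{t-1}\rV_2^2$. The crucial quantitative input is a variance bound $\E\lV d\,y_t\bm{A}_t\rV^2 = \O(d)$ (or more precisely $\E\lV d\,y_t\bm{A}_t\,\bu\rV_2^2 = \O(d)$ for any unit $\bu$), which follows because the local Pauli basis is a $1$-design/tight frame normalized so that $\E[d y_t\bm{A}_t] = \brhos$ and the operator norms are bounded; this is where the $d$ in the iteration count and the $\eta_t \sim \tfrac{\log d}{40 d\log^2 d + t}$ schedule come from.

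With the one-step recursion $\E[\xi_t \mid \mathcal{F}_{t-1}] \le (1 - 2\eta_t(1-\xi_{t-1}))\,\xi_{t-1} + C\,\eta_t^2 d$ in hand, the next step is the standard decreasing-step-size martingale argument for streaming PCA. The two-phase structure is: (i) a warm-up / escape phase showing that from a random initial $\bu_0$ on the sphere, with constant probability $1-\xi_0$ is bounded below by $\Omega(1/d)$ (a uniform random unit vector has $|\l\bu_0,\bus\r|^2 \gtrsim 1/d$ with probability $\ge$ some constant like $\tfrac{3}{4}$ after a union/anti-concentration bound), and then within $\O(d\log^2 d)$ steps $\xi_t$ drops below, say, $1/2$; and (ii) a convergence phase where, once $\xi_t \le 1/2$ so the contraction factor is at least $(1-\eta_t)$, unrolling the recursion with $\eta_t \asymp 1/(t + d\log^2 d)$ gives $\E[\xi_T] \lesssim d \sum_{s\le T}\eta_s^2 \prod_{u=s+1}^T(1-\eta_u) \lesssim d \cdot \eta_T \log(\cdot) \asymp \tfrac{d\log d}{T + d\log^2 d}$, so $\E[\xi_T] \le \delta^2$ once $T_0 \ge C_0\,\delta^{-2} d\log^2 d$. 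Finally, convert the expectation bound to the claimed probability-$\tfrac34$ statement by Markov's inequality (and absorb the constant-probability events from phase (i)).

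The main obstacle I expect is phase (i): the contraction factor $1 - 2\eta_t(1-\xi_{t-1})$ is essentially vacuous when $\xi_{t-1}$ is close to $1$, i.e.\ when the current iterate is nearly orthogonal to $\bus$, so one cannot simply unroll a geometric recursion from the start. Handling this requires either (a) showing the additive noise term itself has enough "push" to increase the correlation $|\l\bu_t,\bus\r|^2$ away from $0$ on average — i.e.\ a lower bound on $\E[|\l\tilde\bu_t,\bus\r|^2 \mid \mathcal{F}_{t-1}]$ that does not degenerate as $\l\bu_{t-1},\bus\r \to 0$, using the variance of $y_t\l\bm{A}_t\bu_{t-1},\bus\r$ in directions transverse to $\bu_{t-1}$ — or (b) the cleaner route of first showing the random start already satisfies $1 - \xi_0 \ge c/d$ with probability $\ge \tfrac34$ (anti-concentration of $|\l\bu_0,\bus\r|^2$ for Haar-random $\bu_0$), which then lets phase (ii)'s analysis run from $\xi_0 = 1 - c/d$ directly, at the cost of an extra $\log d$ in the iteration count that is already present in the $d\log^2 d$ target. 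A secondary technical point is the normalization step $\bu_t = \tilde\bu_t/\lV\tilde\bu_t\rV_2$: one must check that $\lV\tilde\bu_t\rV_2$ is bounded away from $0$ (so the recursion for $\xi_t$ is well-defined) and control the ratio $|\l\tilde\bu_t,\bus\r|^2/\lV\tilde\bu_t\rV_2^2$ without the denominator inflating the noise — this is routine given the step-size is $\O(1/d)$ so $\lV\tilde\bu_t\rV_2 = 1 + \O(\eta_t d) = \O(1)$ and $\ge 1 - \O(\eta_t\sqrt d)$ in expectation, but it needs to be done carefully to keep the $\eta_t^2 d$ noise scaling.
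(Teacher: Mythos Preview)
Your high-level approach is correct and matches the paper's: both recognize that Algorithm~\ref{algo:iniSGD} is exactly Oja's streaming PCA applied to the random matrices $d\,y_t\bm{A}_t$, and both rely on the two key inputs $\E[d\,y_t\bm{A}_t]=\brhos$ and a variance bound of order $d$. The difference is one of economy: the paper does not develop the two-phase martingale analysis from scratch but simply verifies the hypotheses of an existing black-box streaming-PCA theorem (specifically \cite[Theorem~3]{jain2016streaming}), namely the bounds $\lV d\,y_t\bm{A}_t-\brhos\rV_2\le d+1$ and $\lV\E[(d\,y_t\bm{A}_t-\brhos)(d\,y_t\bm{A}_t-\brhos)^\dagger]\rV_2\le (\varepsilon_0^2+1)d+1$, and then reads off \eqref{eq:uinierror} directly. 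Your proposal essentially re-derives that theorem; it would work, but all the delicate issues you flag (the escape phase when $\xi_{t-1}$ is near $1$, the normalization denominator, the anti-concentration of the random start) are already handled inside the cited result, so you can shortcut the whole argument by invoking it.
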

\begin{proof}
    The proof is deferred to \Cref{sec:proofofini}.
\end{proof}  
The success probability $\frac{3}{4}$ can be boosted to $1-\frac{1}{d}$  through $\mathcal{O}(\log d)$ independent executions of \Cref{algo:iniSGD}, followed by computation of the geometric median over the resultant estimators. This aggregation process maintains computational efficiency, as geometric medians admit linear-time computation \cite{cohen2016geometric}. The following corollary demonstrates this procedure.
\begin{corollary}\label{col:onlineIni}
For $r=1$ and $\eta_t =\frac{\log d}{40d\log^2 d+t}$. Let  $\{\brho_{T_0,j}\}_{j=1}^{J}$ be the outputs of running $J$ copies of \Cref{algo:iniSGD}, and $\brho_{T_0}$ be the geometric median of the $\{\brho_{T_0,j}\}_{j=1}^{J}$, i.e.,
$$ \brho_{T_0}\in \arg\min_{\brho\in\mathbb{C}^{d\times d}}\sum_{j=1}^{J}\lV \brho-\brho_{T_0,j}\rV_{\fro}.$$
Then, it holds
\begin{align*}
    \lVert\brho_0-\brhos \rVert_{\fro}\le \delta 
\end{align*}
with probability at least $1-\frac{1}{d}$, provided $J\ge 72\log d$ and $T_0\ge 16C_0 \delta^{-2}d\log^2 d$.
\end{corollary}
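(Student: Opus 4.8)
The plan is to boost the per-run success probability of \Cref{algo:iniSGD} from $\tfrac34$ to $1-\tfrac1d$ by a standard ``median-of-many-runs'' argument, using that $\bC^{d\times d}$ equipped with the Frobenius inner product is a Hilbert space and that the geometric median is robust to a strict minority of outlying points.

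First I would amplify the accuracy of a single run. Apply \Cref{thm:onlineIni} with target error $\delta/4$ in place of $\delta$: since $T_0 \ge 16 C_0 \delta^{-2} d\log^2 d = C_0 (\delta/4)^{-2} d\log^2 d$, each run $j$ of \Cref{algo:iniSGD} outputs $\brho_{T_0,j}$ with $\lV \brho_{T_0,j}-\brhos \rV_{\fro} \le \delta/4$ with probability at least $\tfrac34$; call such a run \emph{good}. Because the $J$ runs draw fresh, independent measurement streams, the indicator events ``run $j$ is good'' are independent Bernoulli with mean $\ge \tfrac34$. Letting $S$ be the number of good runs, $\E[S]\ge \tfrac34 J$, so Hoeffding's inequality gives $\P[S\le \tfrac23 J]\le \exp\!\big(-2(J/12)^2/J\big)=\exp(-J/72)$, which is at most $1/d$ once $J\ge 72\log d$. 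Hence with probability at least $1-\tfrac1d$, strictly more than $\tfrac23 J$ of the estimates $\brho_{T_0,j}$ lie in the Frobenius ball of radius $\delta/4$ centered at $\brhos$.

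Conditioning on that event, I would invoke the deterministic robustness property of the geometric median in a Hilbert space: if more than a $(1-\alpha)$-fraction of a finite point set lies within distance $\rho$ of a point $z$, with $\alpha<\tfrac12$, then the geometric median of the set lies within $C_\alpha\rho$ of $z$. For $\alpha=\tfrac13$ one gets $C_{1/3}<4$, hence $\lV\brho_{T_0}-\brhos\rV_{\fro}\le 4\cdot(\delta/4)=\delta$. This is the only step requiring a non-trivial fact, and it has a short self-contained proof if one prefers not to cite it: writing $\brho_{T_0}$ for the geometric median, optimality gives $\sum_j\lV\brho_{T_0}-\brho_{T_0,j}\rV_{\fro}\le \sum_j\lV\brhos-\brho_{T_0,j}\rV_{\fro}$; splitting both sums over the good index set $J$ and its complement, using $\lV\brho_{T_0,j}-\brhos\rV_{\fro}\le \delta/4$ on $J$ and the triangle inequality term-by-term off $J$, yields $(2|J|-J)\lV\brho_{T_0}-\brhos\rV_{\fro}\le 2|J|\cdot\tfrac{\delta}{4}$, i.e.\ $\lV\brho_{T_0}-\brhos\rV_{\fro}\le \tfrac{2|J|}{2|J|-J}\cdot\tfrac{\delta}{4}<\delta$ since $|J|>\tfrac23 J$. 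The linear-time computability of the geometric median needed for the efficiency remark is \cite{cohen2016geometric}.

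There is no real obstacle: Step~1 is a black-box application of \Cref{thm:onlineIni}, Step~2 is Hoeffding, and Step~3 is the classical geometric-median robustness lemma. The only thing demanding care is the constant bookkeeping — that the ``$\delta/4$ vs.\ $\delta$'', ``$\tfrac34$ vs.\ $\tfrac23$'' (so that $J\ge 72\log d$ suffices), and ``$C_{1/3}<4$'' budgets close simultaneously — which, as above, they do.
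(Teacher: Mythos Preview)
Your proposal is correct and follows essentially the same route as the paper: apply \Cref{thm:onlineIni} with accuracy $\delta/4$, use Hoeffding with $\varepsilon=\tfrac{1}{12}$ to conclude that more than $\tfrac{2}{3}J$ runs are good with probability at least $1-\tfrac{1}{d}$ once $J\ge 72\log d$, and finish via the robustness of the geometric median. The only difference is that the paper invokes \cite[Lemma~24]{cohen2016geometric} for the last step, whereas you supply the short self-contained triangle-inequality argument yielding the slightly sharper constant $\tfrac{2|G|}{2|G|-J}<4$; both close the $\delta/4\to\delta$ budget.
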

\begin{proof}
    The proof is deferred to \Cref{sec:proofofinihp}.
\end{proof}  

In fact, our analysis indicates that: for pure state tomography, $\O(\delta^{-2}d\log^3 d)$ iterations of SGD described in \Cref{algo:iniSGD} with a random initial guess is sufficient to output an $\delta$-close estimation of  $\brhos$ with probability at least $1-\frac{1}{d}$. Nevertheless, to output some sufficiently accurate estimation of $\brhos$, e.g., to output an $\varepsilon$-estimation with $\varepsilon\ll \O(1) $, the \Cref{algo:iniSGD} requires  $\O(\varepsilon^{-2}d\log^3 d )$ iterations, while the two-stage algorithm requires only $\O(d\log^3 d \log\frac{1}{\varepsilon})$ iterations. Therefore, we use~\Cref{algo:iniSGD} as an initialization algorithm.

For the case of the underlying density matrix $\brhos$ is of rank $r$, it can be decomposed as $\brhos = \bUs\bUs^{\dag}$, where $\bUs \in \mathbb{C}^{d\times r}$. We consider finding the parameter $\bUs$ to reconstruct $\brhos$. Though $\bUs$ is not unique, it is natural that we can find the top-$r$ leading eigenvectors and eigenvalues of $\brhos$ to formulate $\bUs$. If $\brhos$ has $r$ distinct eigenvalues which admit constant gaps, it is possible to compute the top-$r$ leading eigenvectors of $\brhos$ sequentially by repeating the above online initialization method $r$ times.
\textbf{}

\section{Preliminaries}\label{sec:notation}
Throughout the paper, we use regular lowercase letters for scalars (e.g., $d$), bold lowercase letters for vectors (e.g., $\bm{v}$), and bold capital letters for matrices (e.g., $\bm{A}$). The notation $[d]$ denotes the set $\{1, 2, \dots, d\}$ for any positive integer $d$. Given a vector $\bm{v}$, $\|\bm{v}\|_0$, $\|\bm{v}\|_2$, and $\|\bm{v}\|_\infty$ represent its $\ell_0$-, $\ell_2$-, and $\ell_\infty$-norms, respectively. For a matrix $\bm{M}$, $\sigma_i(\bm{M})$ denotes the $i$-th singular value. For simplicity, we denote $\sigmas_i$ as the $i$-th singular value of the unknown density matrix $\brhos$ and define $\kappa = \sigmas_1 / \sigmas_r$ as its condition number. $\Tr(\bm{M})$ denotes the trace of $\bm{M}$. We use $\lV\bm{M}\rV$ to represent the  spectral norm of $\bm{M}$ and $\lV\bm{M}\rV_F=\sqrt{\Tr(\bm{M}^\dagger\bm{M})}$ to represent the Frobenius norm of $\bm{M}$. $\lV\bm{M}\rV_{*}$ denotes the nuclear norm (or trace norm) of $\bm{M}$, given by $\lV\bm{M}\rV_{*}:=\Tr\left(\sqrt{\bm{M}^{\dag}\bm{M}}\right)$.

Consider a quantum system with $n$ qubits, where the associated density matrix $\brhos \in \mathbb{C}^{d\times d}$ is a positive semidefinite matrix of order $d = 2^n$ with trace unity. For the ease of presentation, we renormalize $\brhos$ such that $\lV\brhos\rV=1$, thus $\sigmas_r=\frac{1}{\kappa}$. Any such density matrix can be expressed as a linear combination of local Pauli observables $\mathbb{W}=\{\bA_i \mid i \in [d^2]\}$, where each Pauli observable is a tensor product of $n$ Pauli matrices $\bm{P}_1 \otimes \bm{P}_2 \otimes \cdots \otimes \bm{P}_n$, with each $\bm{P}_i$ selected from the set of $2 \times 2$ Pauli matrices $\{I_{2 \times 2}, X, Y, Z\}$. In fact, $\mathbb{W}$ form a complete orthogonal set in $\bC^{d\times d}$, satisfying the relation:
$
\langle\bW_j,\bW_i\rangle=\Tr(\bW_i^\dagger \bW_j) = d \cdot \delta_{i,j} 
$
for all $i,j\in[d^2]$, where $\langle\cdot,\cdot\rangle$ denotes the Frobenius inner-product on $\bC^{d\times d}$ and $\delta_{i,j}$ is the Kronecker delta. This orthogonality allows any matrix $\bm{X} \in \mathbb{C}^{d \times d}$ to be uniquely represented as a linear combination of the Pauli observables:
$
\bm{X} = \frac{1}{d} \sum_{i=1}^{d^2} \langle \bm{X}, \bW_i \rangle \bW_i.
$
For the underlying density matrix $\brhos$, this expansion implies that each Pauli operator $\bW_i$ is associated with the coefficient $\frac{1}{d} \Tr(\bW_i \brhos)$ in the Pauli basis representation.

\section{Non-convex Mini-batch SGD for Online QST}\label{sec:non_convex_mini_batch_sgd_for_online_qst}
In this section, we present the mini-batch SGD algorithm for online low-rank QST in more detail. Recall that online low-rank QST aims to reconstruct an unknown rank-$r$ quantum state \(\brhos \in \mathbb{C}^{d \times d}\) of an \(n\)-qubit system (\(d=2^n\)) from sequentially performed measurement statistics. More precisely, within each round, we receive \(B\) measurement outcomes of the form
\begin{align}\label{eq:minibatchmodel}
    y_{t,k} = \Tr(\bm{A}_{t,k} \brhos))+z_{t,k},~ k\in [B],
\end{align}
where each $\bm{A}_{t,k}$ is sampled uniformly at random from $\mathbb{W}$ and $z_{t,k}$ denotes the statistical noise incurred from the finite repetitions of the measurements. For a rank-\(r\) density matrix $\brho=\bU\bU^\dagger\in\bC^{d\times d}$, where $\bU\in\bC^{d\times r}$ is the parameter matrix,  the instantaneous loss $\ell_t(\bU)$ of $\bU$ and the squared loss $\hat{\ell}_t(\brho)$ of $\brho$ at the $t$-th round is defined as
 \begin{align}\label{eq:batchloss}
    &~\ell_t(\bU):=\frac{1}{4}\sum_{k=1}^B \left( y_{t,k} - \Tr(\bm{A}_{t,k} \bU \bU^{\dag}) \right)^2\cr
    =&~\frac{1}{4}\sum_{k=1}^B \left( y_{t,k} - \Tr(\bm{A}_{t,k} \brho) \right)^2
    =:\hat{\ell}_t(\brho).  
 \end{align}
We shall work with the instantaneous loss $\ell_t(\bU)$ and update the parameter matrix $\bU$. Similar to the case of $B=1$, to update $\bU_t$ at round $t$, we use the previous estimation $\bU_{t-1}$ to reduce the loss $\ell_t(\bU_t)$. We follow the gradient descent update:
 \begin{align}\label{eq:minibatchGD}
    \bU_{t} = \bU_{t-1}- \eta \nabla_{\bU}\ell_t(\bU_{t-1}),\quad t\in[T],
 \end{align}
where $\eta>0$ denotes the learning rate, which will be chosen explicitly from the analysis. By a direct computation, the gradient term $\nabla_{\bU}\ell_t(\bU_{t-1})$ at the $t$-th round is given by
\begin{align*}
  \nabla_{\bU}\ell_t(\bU_{t-1})
  =\sum_{k=1}^B\left[ \Tr(\bm{A}_{t,k} \bU_{t-1} \bU_{t-1}^{\dag})-y_{t,k} \right]\bm{A}_{t,k}\bU_{t-1}.
\end{align*}
The proposed algorithm is formalized in \Cref{algo:minibatchGD}. The rest of this paper will focus on the convergence analysis of the algorithm: Assuming the initial guess of the state is sufficiently close to the target unknown state~\footnote{The initial guess can be computed by \Cref{algo:iniSGD} or off-the-shelf spectral method. Spectral method yield an $\O(\sigmas_r)$-close approximation using $\O(\kappa^2r^2d\log^6 d)$ random Pauli measurements; see \cite[Lemma 4]{kyrillidis2018provable} or \cite[Lemma 2 in supplementary material]{hsu2024quantum} for details). More discussions on the initialization can be found in Appendix~\ref{appendix:numerical_simulation}, with numerical simulations.}, we shall provide bounds on the number of rounds $T$, the learning rate $\eta$ and the batch size $B$ such that the output state $\brho_T=\bU_T\bU_T^\dagger$ is $\epsilon$-close to the unknown state $\brhos$ with respect to the Frobenius norm. Note that the update can be computed in $\O(Brd\log d)$ FLOPs.
 \begin{algorithm}[htbp!]
\caption{Mini-batch SGD for Online QST}
\label{algo:minibatchGD}
\begin{algorithmic}[1]
\STATE \textbf{Input:} $T$, learning rate $\eta$, batch size $B$, Measurements $A_{t,k}$ and outcomes $y_{t,k}$ 
\STATE Initialize $\bU_0$  
\FOR{$t = 1, \ldots, T$}
    \STATE Let each $\bA_{t,k}$, $k \in [B]$, be chosen from the set of local Pauli measurements $\mathbb{W}$ uniformly at random, update: 
    \begin{align*}
      \bU_{t} =&~ \bU_{t-1} - \eta \sum_{k=1}^B\Big[ \Tr(\bm{A}_{t,k} \bU_{t-1} \bU_{t-1}^{\dag}) \cr
               &~ -y_{t,k} \Big]\bm{A}_{t,k}\bU_{t-1}
    \end{align*}
\ENDFOR
\STATE \textbf{Output:} $\brho_T = \bU_T \bU_T^{\dagger}$.
\end{algorithmic}
\end{algorithm}

\subsection{Expectation convergence}\label{sec:theory}
We first define the local contraction region, a crucial subset of the parameter space where the algorithm exhibits desirable convergence behavior. Formally, this region is defined as
\begin{align*}
    \mathcal{E}(\brho_\star,\delta ):=\left\{\bU: \lV \bU\bU^{\dag}-\brhos\rV_\fro\le \delta  \right\},
\end{align*} 
where $\delta\in [0, 1)$ is the diameter which will be chosen later. 
We shall consider the distance between the current estimate $\brho_t=\bU_t\bU_t^\dagger$ and the unknown quantum state $\brhos$. Define the learning error at $t$-th round of \Cref{algo:minibatchGD} as 
\begin{align}\label{def:error}
  e_t&=\mathrm{dist}(\brho_t,\brhos):=\lV\brho_t-\brhos \rV_{\fro} \cr
  &=\lV \bU_t\bU_t^{\dag}-\brhos\rV_\fro.
\end{align}

We first analyze the statistical error incurred by the finite repetitions of measurements. 
We demonstrate that the measurement error introduced by the shots in each round has a variance with zero mean and is well-bounded as long as the number of shots is sufficiently large.
\begin{lemma}\label{lemma:Eerror}
    At any round $t$, we have
     \begin{align}\label{eq:EztAt}
        \E[z_{t,k}\bA_{t,k}]=\bm{0},~\forall k\in [B] .
    \end{align}
    Moreover,  $\forall \varepsilon_0\in (0,1)$, provided $\ell \ge 112\varepsilon_0^{-2} d\log d$, it holds that
    \begin{align}\label{ineq:boundzt}
      \lv z_{t,k}\rv\le \frac{\varepsilon_0}{\sqrt{d}},~\forall k\in [B] 
    \end{align}
    with probability at least $1-2d^{-10}$. 
\end{lemma}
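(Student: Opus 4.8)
## Proof Proposal for Lemma~\ref{lemma:Eerror}

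The plan is to separate the two claims, which have quite different flavors. The first claim, $\E[z_{t,k}\bA_{t,k}] = \bm{0}$, is a statement about the measurement model rather than a concentration bound. I would begin by recalling how the statistical noise $z_{t,k}$ arises: for a fixed Pauli observable $\bA_{t,k}$, the empirical estimate $y_{t,k}$ of $\Tr(\bA_{t,k}\brhos)$ is formed by averaging $\ell$ i.i.d. $\{\pm 1\}$-valued outcomes of projective measurements, so that $\E[y_{t,k} \mid \bA_{t,k}] = \Tr(\bA_{t,k}\brhos)$ and hence $\E[z_{t,k} \mid \bA_{t,k}] = 0$. The key point is that this conditional expectation vanishes \emph{for every fixed} $\bA_{t,k}$, so by the tower property $\E[z_{t,k}\bA_{t,k}] = \E\big[\bA_{t,k}\,\E[z_{t,k}\mid \bA_{t,k}]\big] = \bm{0}$, regardless of the (uniform) law of $\bA_{t,k}$ over $\mathbb{W}$. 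This step is essentially bookkeeping once the noise model is written down carefully.

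For the second claim I would condition on the choice of $\bA_{t,k}$ and work with the scalar random variable $z_{t,k} = y_{t,k} - \Tr(\bA_{t,k}\brhos)$, which is a centered average of $\ell$ independent bounded random variables. Since each single-shot outcome lies in $[-1,1]$ (the eigenvalues of a Pauli observable are $\pm 1$), each summand has range at most $2$, so Hoeffding's inequality gives $\P\big(|z_{t,k}| \ge s \mid \bA_{t,k}\big) \le 2\exp(-\ell s^2 / 2)$. Setting the target deviation $s = \varepsilon_0 / \sqrt{d}$ yields a failure probability at most $2\exp(-\ell \varepsilon_0^2 / (2d))$. Plugging in the hypothesis $\ell \ge 112\,\varepsilon_0^{-2} d\log d$ makes this exponent at least $56\log d$, so the bound for a single $k$ is at most $2d^{-56} \le 2d^{-11}$; a union bound over the $B \le d$ indices $k \in [B]$ then gives the stated $1 - 2d^{-10}$ (the slack in the constant $112$ is generous enough to absorb the union bound and any variance-vs-range subtleties). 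Since the per-$\bA_{t,k}$ bound is uniform over the conditioning, it also holds unconditionally.

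The main obstacle — or rather the only place requiring care — is pinning down the precise noise model so that the Hoeffding application is legitimate: one must be explicit that $y_{t,k}$ is an empirical mean of $\ell$ independent shots, each bounded in $[-1,1]$, and that the randomness of the shots is independent of the randomness in selecting $\bA_{t,k}$ (and of the history up to round $t$). Once that is set up, both parts follow from standard tools; I do not anticipate any genuine difficulty, only the need to state the measurement-repetition model cleanly and to track constants through the union bound over $k \in [B]$.
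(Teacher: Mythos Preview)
Your proposal is correct and follows essentially the same approach as the paper: both make the measurement-shot model explicit, use the tower property (conditioning on $\bA_{t,k}$) to get $\E[z_{t,k}\mid\bA_{t,k}]=0$ for the first claim, and apply Hoeffding's inequality to the $\ell$-shot average followed by a union bound for the second. The only cosmetic difference is that the paper takes a union bound over both $k\in[B]$ and all $d^2$ Pauli observables (which is harmless but unnecessary, as you correctly observe that the conditional Hoeffding bound is uniform in the choice of $\bA_{t,k}$); the constant $112$ comfortably absorbs either version.
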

\begin{proof}
The proof is deferred to \Cref{sec:proofoferror}.
\end{proof}

Now we present the local contraction property in expectation. Specifically, for iterates within the defined contraction region, the algorithm achieves a linear rate of error reduction per iteration in expectation, provided the learning rate is sufficiently small.  Let the filtration $\mathcal{F}_t := \sigma\{\nabla_{\bU}\ell_1(\bU_0), \nabla_{\bU}\ell_2(\bU_1), \ldots, \nabla_{\bU}\ell_t(\bU_{t-1})\}$, where \(\sigma\{\cdot\}\) represents the sigma field.
\begin{theorem}\label{thm:Econtraction}
Assume $B\le d$, $\kappa\le \sqrt{dr}$. Under event \eqref{ineq:boundzt}, for $\bU_t\in\mathcal{E}(\brhos,\frac{\sigmas_r}{3})$, there exists numerical constant $c_1>0$ such that 
\begin{align*}
\E[e_{t+1}^2|\mathcal{F}_t]\le (1-\frac{\eta B}{2\kappa d})e_t^2 +\frac{\eta B \varepsilon_0^2}{8\kappa d}
\end{align*} 
provided $\eta \le \frac{c_1}{\kappa r}$ and $B\le 40 \kappa^2$. Moreover, for $B\ge 40 \kappa^2$, \eqref{eq:contra} also holds provided $\eta \le \frac{\kappa}{5B r}$.
\end{theorem}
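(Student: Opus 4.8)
The plan is to run the gradient-descent template of the road map on the mini-batch iterate. Write $\Delta_t:=\bU_t\bU_t^{\dag}-\brhos$, so $e_t^2=f(\bU_t)=\|\Delta_t\|_{\fro}^2$, and abbreviate $G_t:=\nabla_{\bU}\ell_{t+1}(\bU_t)=\sum_{k=1}^B g_{t,k}\,\bA_{t+1,k}\bU_t$ with $g_{t,k}:=\Tr(\bA_{t+1,k}\Delta_t)-z_{t+1,k}$. First I would apply the local $2L$-smoothness of $f$ (cf.\ the road map) along the step $\bU_{t+1}=\bU_t-\eta G_t$ to obtain
\[
  f(\bU_{t+1})\le f(\bU_t)-\eta\,\Re\langle\nabla f(\bU_t),G_t\rangle+L\eta^2\|G_t\|_{\fro}^2 ,
\]
which is valid as long as $\|\eta G_t\|_{\fro}$ stays below the radius $\rho$ of the region on which that estimate holds. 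Taking $\E[\,\cdot\,|\,\mathcal{F}_t]$ — using that $(\bA_{t+1,k},z_{t+1,k})_{k=1}^B$ is independent of $\mathcal{F}_t$ while $\bU_t$ is $\mathcal{F}_t$-measurable — reduces the whole claim to a lower bound on the regularity term $\Re\langle\nabla f(\bU_t),\E[G_t|\mathcal{F}_t]\rangle$ and an upper bound on the second-moment term $\E[\|G_t\|_{\fro}^2|\mathcal{F}_t]$.

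For both I would lean on the Pauli algebra. Orthogonality $\Tr(\bW_i^{\dag}\bW_j)=d\,\delta_{ij}$ gives $\E_{\bA\sim\mathbb{W}}[\Tr(\bA M)\bA]=\frac1d M$ for any $M$, and $\bA^2=\bI$ for every Pauli string; combined with $\E[z_{t+1,k}\bA_{t+1,k}]=\bm{0}$ (entrywise) from \Cref{lemma:Eerror}, this produces the \emph{noiseless identity} $\E[G_t|\mathcal{F}_t]=\frac Bd\Delta_t\bU_t$, a positive multiple of $\nabla f(\bU_t)$, so the regularity term equals a universal constant times $\frac Bd\|\Delta_t\bU_t\|_{\fro}^2$; the regularity/curvature bound $\|\Delta_t\bU_t\|_{\fro}^2\ge c\,\sigmas_r\,e_t^2$ valid on $\mathcal{E}(\brhos,\sigmas_r/3)$ (the $B=1$ ``$\Omega(\sigmas_r/d)f$'' estimate of the road map) then makes it $\ge\Omega(B\sigmas_r/d)\,e_t^2$. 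For the second moment I would expand $\|G_t\|_{\fro}^2=\sum_{k,k'}g_{t,k}g_{t,k'}\Tr(\bU_t^{\dag}\bA_{t+1,k'}\bA_{t+1,k}\bU_t)$ and split into the off-diagonal part $(k\ne k')$, which by independence of the fresh measurements equals $\frac{B(B-1)}{d^2}\|\Delta_t\bU_t\|_{\fro}^2=\|\E[G_t|\mathcal{F}_t]\|_{\fro}^2-\frac B{d^2}\|\Delta_t\bU_t\|_{\fro}^2$ — the harmless ``squared true gradient'' piece, which scales like $B^2$ — and the diagonal part $(k=k')$, which using $\bA^2=\bI$ equals $B\,\E[g_{t,1}^2]\,\|\bU_t\|_{\fro}^2\le\frac Bd\|\bU_t\|_{\fro}^2(e_t^2+\varepsilon_0^2)$, using $\E[\Tr(\bA\Delta_t)^2]=\frac1d e_t^2$, the vanishing of the cross term (again \Cref{lemma:Eerror}), and $|z_{t+1,k}|\le\varepsilon_0/\sqrt d$ under \eqref{ineq:boundzt} — this is the genuine variance, it scales like $B$, and it is the only place the shot noise survives. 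On $\mathcal{E}(\brhos,\sigmas_r/3)$ I would also record $\|\bU_t\|_{\fro}^2\le\Tr(\brhos)+\sqrt{2r}\,e_t=\O(r)$, $\|\Delta_t\bU_t\|_{\fro}^2\le\|\Delta_t\|_{\fro}^2\|\bU_t\|^2\le 2e_t^2$, and the crude deterministic bound $\|G_t\|_{\fro}\le B\max_k|g_{t,k}|\cdot\|\bU_t\|_{\fro}\le B(\sqrt{2r}\,e_t+\varepsilon_0/\sqrt d)\cdot\O(\sqrt r)$ (via $|\Tr(\bA M)|\le\|M\|_{*}\le\sqrt{2r}\|M\|_{\fro}$ and unitarity of $\bA$).

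Assembling these, the descent inequality becomes, under \eqref{ineq:boundzt},
\[
  \E[e_{t+1}^2|\mathcal{F}_t]\le\Big(1-\Theta(\eta B/(\kappa d))+\O(L\eta^2 Br/d)+\O(L\eta^2 B^2/d^2)\Big)e_t^2+\O(L\eta^2 Br/d)\,\varepsilon_0^2 .
\]
Choosing $\eta\le c_1/(\kappa r)$ with $c_1$ a small universal constant makes the $\O(L\eta^2 Br/d)$ coefficient at most half the $\Theta(\eta B/(\kappa d))$ gain — this is exactly where $\|\bU_t\|_{\fro}^2=\O(r)$ and $\sigmas_r=1/\kappa$ pin $\eta$ to the scale $1/(\kappa r)$ — and simultaneously pushes the noise coefficient below $\frac{\eta B}{8\kappa d}$, while the $\O(L\eta^2 B^2/d^2)$ term is subdominant once $\eta B\lesssim d$, which holds since $B\le d$. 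This yields precisely $\E[e_{t+1}^2|\mathcal{F}_t]\le(1-\frac{\eta B}{2\kappa d})e_t^2+\frac{\eta B\varepsilon_0^2}{8\kappa d}$. The last point is the validity of the smoothness step: with $\eta=c_1/(\kappa r)$ and $e_t\le\sigmas_r/3=1/(3\kappa)$ the deterministic bound above gives $\|\eta G_t\|_{\fro}=\O(c_1 B/\kappa^2)$, so $B\le 40\kappa^2$ is precisely the condition keeping the mini-batch step inside the region of radius $\rho=\Theta(1)$; when $B\ge 40\kappa^2$ one instead takes $\eta\le\kappa/(5Br)$, for which $\|\eta G_t\|_{\fro}=\O(1)$ again and, since for $B\gtrsim\kappa^2$ the variance and noise contributions are already dominated by the gain without any further smallness of $\eta$, the same contraction estimate follows with adjusted universal constants.

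The step I expect to be the main obstacle is juggling this two-sided constraint on $(\eta,B)$ together with the clean diagonal/off-diagonal split of the second moment. On the one hand $\eta$ must be small — of order $1/(\kappa r)$ — so that the $\Theta(B)$ variance term and the $\Theta(B^2)$ ``squared true gradient'' term are both absorbed by the $\Theta(B\sigmas_r/d)$ regularity gain, leaving the advertised contraction factor $1-\frac{\eta B}{2\kappa d}$ and noise floor $\frac{\eta B\varepsilon_0^2}{8\kappa d}$; on the other hand $\eta B$ must be small enough — because the deterministic norm of $G_t$ can be as large as $\sim B\cdot(\text{per-sample size})$, with no cancellation — that the mini-batch gradient step does not leave the neighborhood on which the local smoothness and the regularity bound were proved, which is exactly the dichotomy ($B\le 40\kappa^2$, $\eta=c_1/(\kappa r)$) versus ($B\ge 40\kappa^2$, $\eta=\kappa/(5Br)$). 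Lining up all universal constants so that both thresholds read $40\kappa^2$ and the contraction/noise constants come out $\tfrac12$ and $\tfrac18$ is the part demanding the most care, but it becomes routine once the noiseless expectation identity, the regularity lower bound, and the diagonal/off-diagonal decomposition of $\E[\|G_t\|_{\fro}^2|\mathcal{F}_t]$ are in hand.
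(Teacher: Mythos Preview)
Your proposal is correct and follows essentially the same route as the paper: local smoothness of $f(\bU)=\|\bU\bU^\dag-\brhos\|_{\fro}^2$ (the paper's \Cref{lemma:Lip}), the noiseless identity $\E[G_t\mid\mathcal{F}_t]=\tfrac{B}{d}\Delta_t\bU_t$ combined with the regularity lower bound $\|\Delta_t\bU_t\|_{\fro}^2\ge c\,\sigmas_r e_t^2$ (the paper's \Cref{lemma:boundEGDnorm}), the diagonal/off-diagonal split of $\E[\|G_t\|_{\fro}^2\mid\mathcal{F}_t]$ (the paper's \eqref{ineq:sumEgrad2}), and the deterministic bound $\|G_t\|_{\fro}\lesssim B r^{1/2}(\sqrt{r}\,e_t+\varepsilon_0/\sqrt d)$ (the paper's \eqref{ineq:boundgd}) to justify the $B\le 40\kappa^2$ versus $\eta\le\kappa/(5Br)$ dichotomy. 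The only cosmetic difference is that the paper makes the dependence of the Lipschitz constant on the step size explicit, writing $L_{C,\eta}=(4\kappa+6\delta+2C(\eta)\sqrt{\kappa+\delta}+C(\eta)^2)\sigmas_r$ and then showing $L_{C,\eta}\le 7$ once $\eta\le\kappa/(4Br)$, whereas you phrase this as ``$\|\eta G_t\|_{\fro}$ stays inside a region of radius $\rho=\Theta(1)$ on which $L=\O(1)$''; both amount to the same bookkeeping.
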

\begin{proof}
    The proof is deferred to \Cref{proofofEcontraction}
\end{proof}   
In the noiseless case, \Cref{thm:Econtraction} shows that the expected number of iterations is $T=\mathcal{O}(B^{-1}\kappa^2 r d \log \frac{1}{\epsilon})$ to achieve $\epsilon$-accuracy for small $B$, provided the iterates remain in the $\mathcal{O}(\sigma_r^*)$-neighborhood of $\brhos$. 
\begin{remark}\label{remark:propconvergence}
In \Cref{thm:Econtraction}, we demonstrated that the mini-batch version of SGD achieves an iteration complexity that is $B$ times faster than standard SGD, provided the batch size $B$ is not excessively large. Notably, our convergence analysis holds for all values of $B\le d$. Specifically, for $\bU_t\in\mathcal{E}(\brhos,\frac{\sigmas_r}{3})$, we always have
\begin{align}
    \E[e_{t+1}^2|\mathcal{F}_t]\le (1-\frac{\eta B}{2\kappa d})e_t^2 +\frac{\eta B \varepsilon_0^2}{8\kappa d}
\end{align} 
provided $\eta \le \O(\min\{\frac{\kappa}{B r},\frac{1}{\kappa r}\})$.
\end{remark}

\subsection{Probabilistic convergence}\label{sec:theory p}
We now convert the expectation convergence into a probabilistic convergence guarantee, showing that the algorithm achieves geometric convergence with high probability under practical conditions, including sufficient shots and appropriately chosen learning rates.
\begin{theorem}[Formal statement of \Cref{thm:informal}]\label{thm:propconvergence}
Assume  $\ell \ge 112\varepsilon_0^{-2}d\log d$, $\varepsilon_{0}\in (0,1)$, and $\kappa\le \sqrt{dr}$. There exist numerical constant $c_2>0$ satisfying: For $\bU_0\in\mathcal{E}(\brhos,\frac{\sigmas_r}{3})$ and all $t\in [T]$, it holds
\begin{align*}
  e_{t}^2\le 2\left(1-\frac{\eta B}{4\kappa d}\right)^t e_0^2+\left[1-\left(1-\frac{\eta B}{4d\kappa}\right)^{t}\right]\varepsilon_0^2  
\end{align*} 
with probability at least $1-\frac{3T}{d^{10}}$, provided $\eta \le \frac{c_2}{\kappa r\log d}$ and $B\le \min\{40 \kappa^{2/3},d\}$. 
\end{theorem}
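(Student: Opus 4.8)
\emph{Strategy.} The plan is to promote the one-step contraction-in-expectation of \Cref{thm:Econtraction} to a trajectory-wise bound by a martingale concentration argument of Azuma--Bernstein (Freedman) type, carried out inside an induction that keeps every iterate inside the contraction region. Throughout I would condition on the event of \Cref{lemma:Eerror} that $|z_{t,k}|\le\varepsilon_0/\sqrt d$ for all $t\le T,\ k\in[B]$; since $\ell\ge 112\varepsilon_0^{-2}d\log d$ this fails with probability at most $2Td^{-10}$, and on its complement \Cref{thm:Econtraction} applies at every round. For clarity I describe the noiseless skeleton ($\varepsilon_0=0$); the additive $[1-(\rho')^t]\varepsilon_0^2$ term is carried through the same machinery, the one new point being that in the floor regime $e_s^2\lesssim\varepsilon_0^2$ the first-order part of the per-step increment drops out, so the increment is there only $\O(\eta^2)$.

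\emph{Step 1: martingale decomposition.} Put $\rho:=1-\tfrac{\eta B}{2\kappa d}$, $\rho':=1-\tfrac{\eta B}{4\kappa d}$, $D_s:=e_{s+1}^2-\E[e_{s+1}^2\mid\mathcal{F}_s]$, $b_s:=2(\rho')^s e_0^2+(1-(\rho')^s)\varepsilon_0^2$, and let $\tau$ be the first round $s$ with $e_s^2>b_s$. Since $b_s\le 2e_0^2+\varepsilon_0^2$, on $\{\tau>s\}$ the iterate stays within the region on which the recursion of \Cref{thm:Econtraction} is valid (a cosmetic point absorbed by allowing a slightly larger region than $\mathcal{E}(\brhos,\tfrac{\sigmas_r}{3})$ in the recursion), so $\E[e_{s+1}^2\mid\mathcal{F}_s]\le\rho e_s^2+\tfrac{\eta B\varepsilon_0^2}{8\kappa d}$, hence $e_{s+1}^2\le\rho e_s^2+\tfrac{\eta B\varepsilon_0^2}{8\kappa d}+D_s$, and telescoping up to any round $t\le\tau$ gives
\[
  e_t^2\ \le\ \rho^{t}e_0^2+\tfrac{\varepsilon_0^2}{4}\bigl(1-\rho^{t}\bigr)+M_t,\qquad M_t:=\sum_{s=0}^{t-1}\rho^{\,t-1-s}\,D_s\,\mathbf{1}_{\{s<\tau\}}.
\]
An elementary comparison of the rates ($\rho<(\rho')^2$ always, and $(\rho')^3<\rho$ once $\tfrac{\eta B}{\kappa d}$ is small) shows $\rho^t e_0^2+\tfrac{\varepsilon_0^2}{4}(1-\rho^t)+\lambda_t\le 2(\rho')^t e_0^2+(1-(\rho')^t)\varepsilon_0^2$ for $\lambda_t:=(\rho')^t e_0^2+\tfrac18(1-(\rho')^t)\varepsilon_0^2$. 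Hence $\{\tau=t\le T\}$ forces $M_t>\lambda_t$, and it suffices to bound $\P(M_t>\lambda_t)$ for each $t\in[T]$.

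\emph{Step 2: increment/variance bounds and Bernstein.} Writing the SGD step as $\brho_{s+1}=\brho_s+\Delta_s$ with $\Delta_s$ linear-plus-quadratic in $\bm{G}_s:=\sum_k g_{s,k}\bA_{s+1,k}\bU_s$, $g_{s,k}:=\Tr(\bA_{s+1,k}(\brho_s-\brhos))-z_{s+1,k}$, and expanding $e_{s+1}^2=e_s^2+2\Re\Tr[(\brho_s-\brhos)\Delta_s]+\|\Delta_s\|_\fro^2$, I would show on $\{s<\tau\}$ that the martingale increment and its conditional variance obey $|D_s|\le C_s=\O(\eta B r\,e_s^2)+(\varepsilon_0\text{-terms})$ and $\E[D_s^2\mid\mathcal{F}_s]\le v_s=\O(\tfrac{\eta^2 B r}{d}\,e_s^4)+(\varepsilon_0\text{-terms})$, using $|g_{s,k}|\le\sqrt{2r}\,e_s+\varepsilon_0/\sqrt d$, $\|\bA\|=1$, $\|\bU_s\|=\O(1)$, $\mathrm{rank}(\brho_s-\brhos)\le 2r$, and — for the crucial $1/d$ in $v_s$ — the Pauli identities $\E_{\bA}[\Tr(\bA M)^2]=\tfrac1d\|M\|_\fro^2$, $\E_{\bA}[\Tr(\bA M)^4]\le\tfrac1d\|M\|_*^2\|M\|_\fro^2$ and the independence of the $B$ samples within a round. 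On $\{t\le\tau\}$, $e_s^2\le b_s$, so with $\rho^{t-1-s}\le 1$ and $\sum_{s<t}\rho^{2(t-1-s)}(\rho')^{2s}=\O(\tfrac{\kappa d}{\eta B}(\rho')^{2t})$ one gets $\max_{s<t}\rho^{t-1-s}C_s=\O(\eta B r\,(\rho')^t e_0^2)+\cdots$ and $\sum_{s<t}\rho^{2(t-1-s)}v_s=\O(\eta\kappa r\,(\rho')^{2t}e_0^4)+\cdots$; Freedman's inequality at deviation level $\lambda_t$ then yields $\P(M_t>\lambda_t)\le\exp(-\Omega(\tfrac{1}{\eta r(\kappa+B)}))\le d^{-10}$, the last step using $\eta\le\tfrac{c_2}{\kappa r\log d}$ with $c_2$ small and $B=\O(\kappa)$. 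Adding the $2d^{-10}$ of \Cref{lemma:Eerror}, $\P(\tau=t)\le 3d^{-10}$, so $\P(\tau\le T)\le 3Td^{-10}$, which is the theorem.

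\emph{Main obstacle.} The crux is Step 2. Unlike the offline RIP-based schemes, a single Pauli measurement obeys no restricted-isometry estimate, so $g_{s,k}$ may be as large as $\sqrt{2r}\,e_s$ and the \emph{worst-case} one-step change of $e^2$ is $\O(\eta B r\,e^2)$, far exceeding the one-step contraction gain $\O(\tfrac{\eta B}{\kappa d}e^2)$. Three features must be combined quantitatively to close the argument: (i) the fluctuation is a genuine martingale increment whose conditional variance picks up an extra $1/d$ from the Pauli second-moment identity; (ii) it is budgeted against the \emph{cumulative} slack $(\rho')^t e_0^2$, which is why the \emph{un}reweighted telescoping with weights $\rho^{t-1-s}\le 1$ is essential — a reweighting by $(\rho')^{-s}$ would inflate the variation sum and introduce a spurious $\log\tfrac1\epsilon$ into the step-size condition; and (iii) $\eta$ is taken a $\log d$ factor below the threshold of \Cref{thm:Econtraction} while $B$ is capped at $\O(\kappa^{2/3})$, precisely so that the higher-order-in-$\eta$ contributions to $C_s$ and $v_s$ remain subdominant and the Bernstein exponent reaches $10\log d$. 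Propagating the noise terms $z_{s,k}$ through $D_s$ so that the floor regime is handled by the same inequality, and pinning down these constants and the $\kappa^{2/3}$ ceiling, is where the real work lies.
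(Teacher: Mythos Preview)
Your proposal is correct and follows essentially the same route as the paper: both localize to the good event (you via a stopping time $\tau$, the paper via the indicator $1_{\mathfrak{E}_{t-1}}$ baked into the supermartingale $F_t=(1-\tfrac{\eta B}{2\kappa d})^{-t}\max\{e_t^2\cdot 1_{\mathfrak{E}_{t-1}}-\tfrac{\varepsilon_0^2}{4},0\}$), bound the per-step increment and conditional variance using the Pauli second- and fourth-moment identities exactly as you outline, and then apply Azuma--Bernstein/Freedman with $\eta\le c_2/(\kappa r\log d)$ to push the tail below $d^{-10}$ before a union bound over $t\le T$. Your weights $\rho^{t-1-s}$ are precisely what the paper's reweighting $F_t=\rho^{-t}(\cdots)$ produces after multiplying back by $\rho^{t}$ at the end, so the two telescopings coincide; the paper's variance summation $\sum_\tau(\rho/\rho')^{2(t-\tau)}\lesssim\kappa d/(\eta B)$ is your $\sum_s\rho^{2(t-1-s)}(\rho')^{2s}=\O(\tfrac{\kappa d}{\eta B}(\rho')^{2t})$.
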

\begin{proof}
The proof is deferred to \Cref{subsec:proofofprobconvergence}
\end{proof}

This leads to the iteration complexity estimates for achieving a target accuracy $\epsilon$. 
\begin{corollary}\label{col:esterror}
For $e_0 \le \frac{1}{3}\sigmas_r$, we have    
\begin{align}
    e_T^2 \le \O\left(\frac{\epsilon}{9}(\sigmas_r)^2+ \varepsilon_0^2\right)
\end{align}
with probability at least $1-\frac{3T}{d^{10}}$, provided 
$$T\ge \frac{1}{B}\kappa^2 rd\log d\log\frac{1}{\epsilon}$$
and $\eta = \frac{c_2}{\kappa r\log d}$, $B\le \min\{40 \kappa^{2/3},d\}$, $\ell \ge 112\varepsilon_0^{-2}d\log d$.
\end{corollary}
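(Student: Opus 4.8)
\textbf{Proof proposal for \Cref{col:esterror}.}

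The plan is to read off the corollary directly from the probabilistic contraction bound in \Cref{thm:propconvergence}. Under the stated hypotheses — $\eta=\frac{c_2}{\kappa r\log d}$, $B\le\min\{40\kappa^{2/3},d\}$, $\varepsilon_0\in(0,1)$, $\ell\ge 112\varepsilon_0^{-2}d\log d$, $\kappa\le\sqrt{dr}$, and the initialization $e_0\le\tfrac13\sigmas_r$ (which is exactly $\bU_0\in\mathcal{E}(\brhos,\tfrac{\sigmas_r}{3})$, since $e_0=\lV\bU_0\bU_0^{\dag}-\brhos\rV_\fro$) — \Cref{thm:propconvergence} applies, so with probability at least $1-\frac{3T}{d^{10}}$ we have, in particular at $t=T$,
\begin{align*}
e_T^2\le 2\Bigl(1-\tfrac{\eta B}{4\kappa d}\Bigr)^T e_0^2+\Bigl[1-\Bigl(1-\tfrac{\eta B}{4\kappa d}\Bigr)^{T}\Bigr]\varepsilon_0^2 .
\end{align*}
It remains only to post-process the right-hand side.

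For the noise term, the bracket $1-(1-\tfrac{\eta B}{4\kappa d})^{T}$ lies in $[0,1)$ because $\tfrac{\eta B}{4\kappa d}\in(0,1)$, so that term is at most $\varepsilon_0^2$ unconditionally. For the optimization term I would use $e_0^2\le\tfrac19(\sigmas_r)^2$ together with the elementary inequality $1-x\le e^{-x}$, which gives $(1-\tfrac{\eta B}{4\kappa d})^T\le\exp(-\tfrac{\eta B T}{4\kappa d})$. Substituting $\eta=\frac{c_2}{\kappa r\log d}$ yields $\tfrac{\eta B T}{4\kappa d}=\tfrac{c_2 B T}{4\kappa^2 rd\log d}$, so taking $T\ge\frac{1}{B}\kappa^2 rd\log d\log\frac1\epsilon$ — with the implicit constant chosen $\ge 4/c_2$ and $\log(2/\epsilon)$ absorbed into $\log\frac1\epsilon$ up to a constant — forces this factor below $\tfrac{\epsilon}{2}$, hence $2(1-\tfrac{\eta B}{4\kappa d})^T e_0^2\le\tfrac{\epsilon}{9}(\sigmas_r)^2$. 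Summing the two contributions gives $e_T^2\le\tfrac{\epsilon}{9}(\sigmas_r)^2+\varepsilon_0^2=\O\!\bigl(\tfrac{\epsilon}{9}(\sigmas_r)^2+\varepsilon_0^2\bigr)$, and the success probability $1-\frac{3T}{d^{10}}$ is inherited verbatim from \Cref{thm:propconvergence}.

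There is essentially no obstacle beyond bookkeeping: the corollary is a one-line consequence of \Cref{thm:propconvergence} once $T$ is chosen to drive the geometric factor below $\epsilon/2$. The only mild care needed is (i) noting that $e_0\le\tfrac13\sigmas_r$ certifies $\bU_0\in\mathcal{E}(\brhos,\tfrac{\sigmas_r}{3})$ so that \Cref{thm:propconvergence} is applicable, and (ii) tracking how the absolute constant $c_2$ and the $\log(2/\epsilon)$ versus $\log(1/\epsilon)$ discrepancy are swallowed into the $\O(\cdot)$ in the iteration count and in the final error bound. The genuinely substantive work — the per-round expected contraction of \Cref{thm:Econtraction} and its Azuma--Bernstein-type martingale concentration in \Cref{thm:propconvergence} — has already been carried out.
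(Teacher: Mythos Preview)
Your proposal is correct and matches what the paper intends: the corollary is stated without a separate proof in the paper, precisely because it is the immediate specialization of \Cref{thm:propconvergence} at $t=T$ together with the elementary bound $(1-x)^T\le e^{-xT}$ and $e_0^2\le\tfrac19(\sigmas_r)^2$. Your handling of the constants ($4/c_2$, $\log(2/\epsilon)$ vs.\ $\log(1/\epsilon)$) via the $\O(\cdot)$ is exactly the intended bookkeeping.
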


For the spectral initialization in~\cite[Lemma 4]{kyrillidis2018provable} or \cite[Lemma 2 in supplementary material]{hsu2024quantum} guarantees that $e_0 \le \frac{1}{3}\sigmas_r$. Thus, it implies  $ \O\left(\max\{\kappa^2 rd\log d\log\frac{1}{\epsilon},\kappa^2r^2d\log^6 d\}\right)$ Pauli measurements is required to achieve an $\epsilon$-approximation of the unknown state $\brhos$, provided that $\varepsilon_0$ sufficiently small. 

\section{Numerical simulation and discussion on the initialization.} \label{appendix:numerical_simulation}
We conduct numerical experiments to support the convergence performance of SGD for $7$-qubit system with different batch sizes $B$. In all the experiments, the $d\times d$ density matrix $\brhos$ is a randomly generated rank-$1$ positive semidefinite matrix. The initial guesses are all randomly generated according to $0.01 \times \text{randn}(d,r)$. The learning rate is $\eta=\frac{1}{4\kappa r}$ for $B\le 40$ and $\eta=\frac{50}{4B}$ for $B\ge 40$, in accordance with our theoretical guidelines.

In~\Cref{fig:SGD_B}, we observe that the convergence process is two-stage: Starting with a small random initial guess, the algorithm exhibits geometric convergence after several iterations.  For the first stage, we consider that a sufficiently close initial state is obtained. 

\begin{figure*}[htbp!]
    \centering
    \includegraphics[trim={0cm 0cm 1cm 0.2cm},clip, width = 0.32\linewidth]{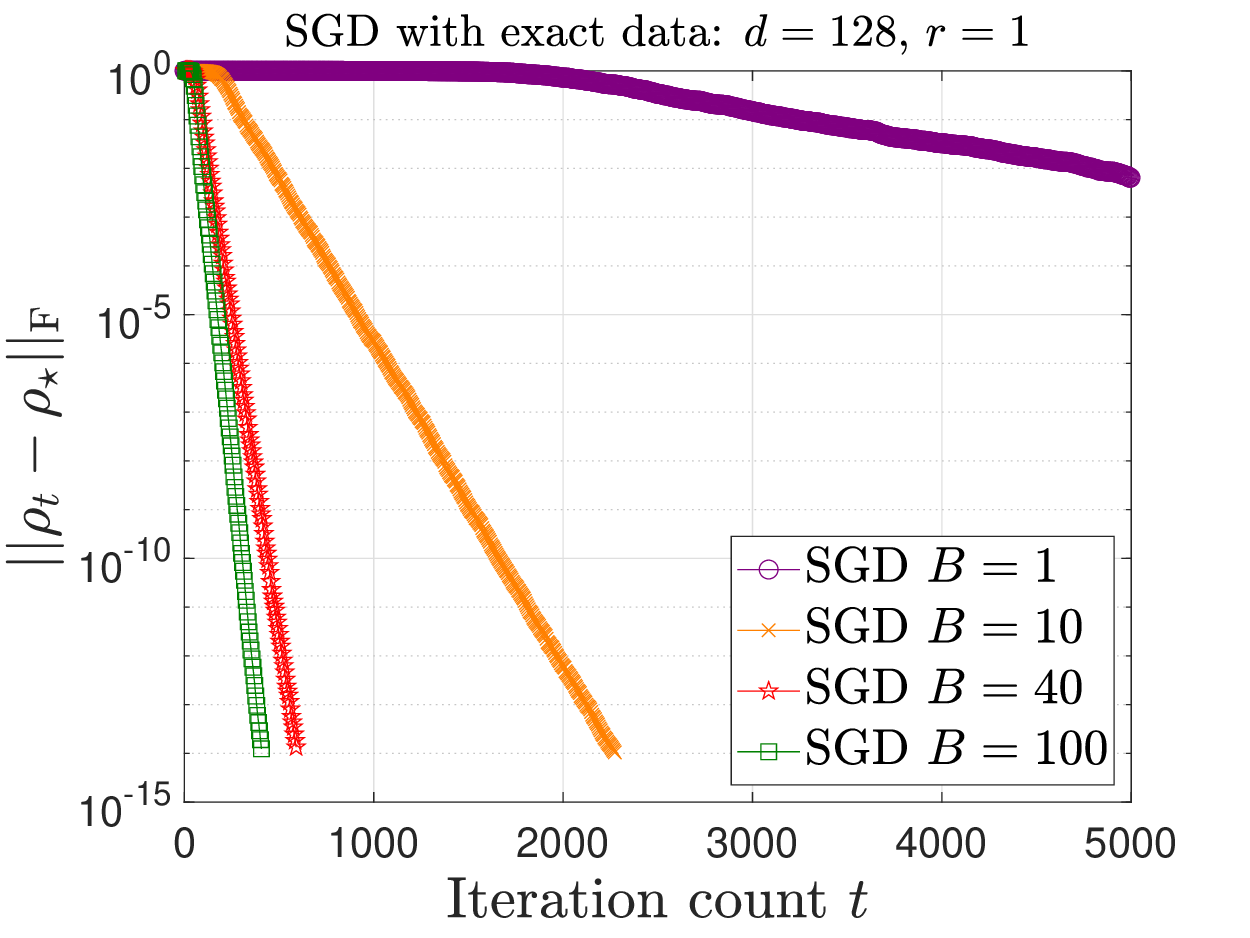}
    \includegraphics[trim={0cm 0cm 1cm 0.2cm},clip,width = 0.32\linewidth]{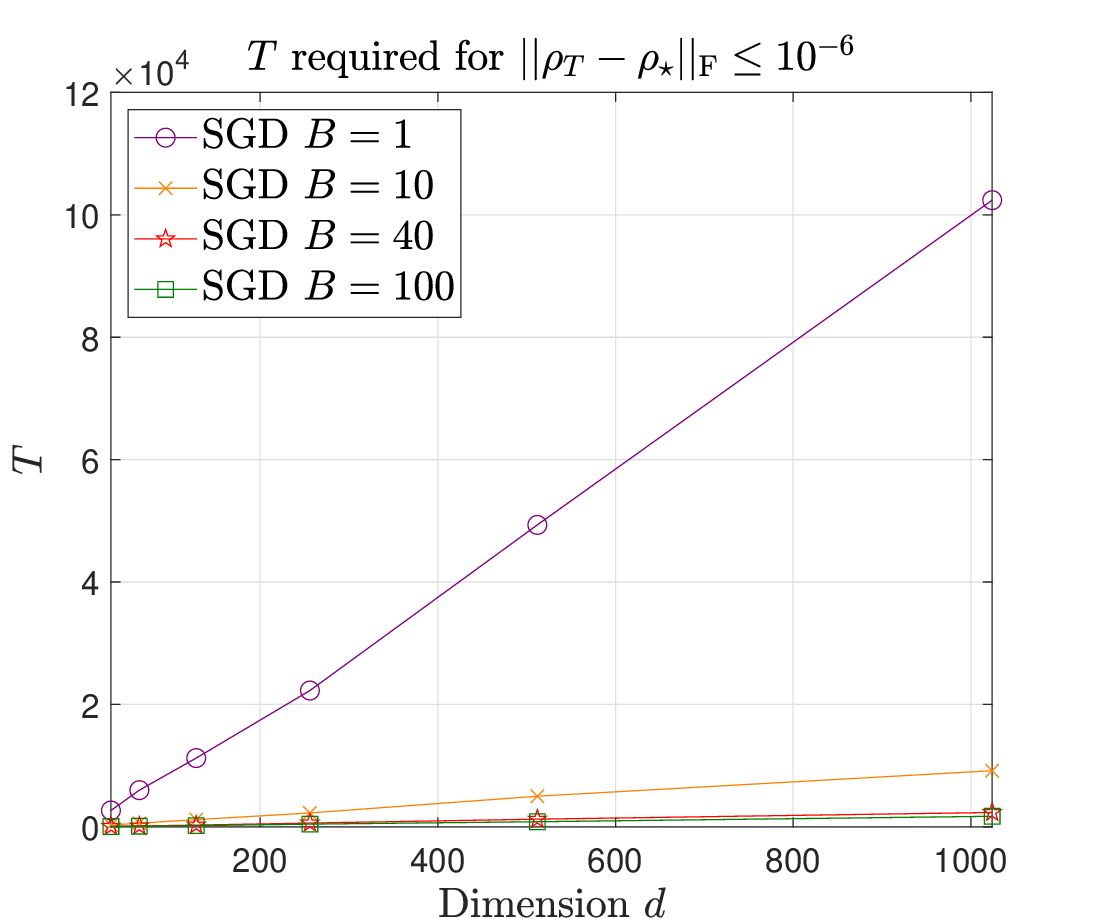}
    \includegraphics[trim={0cm 0cm 1cm 0.2cm},clip,width = 0.32\linewidth]{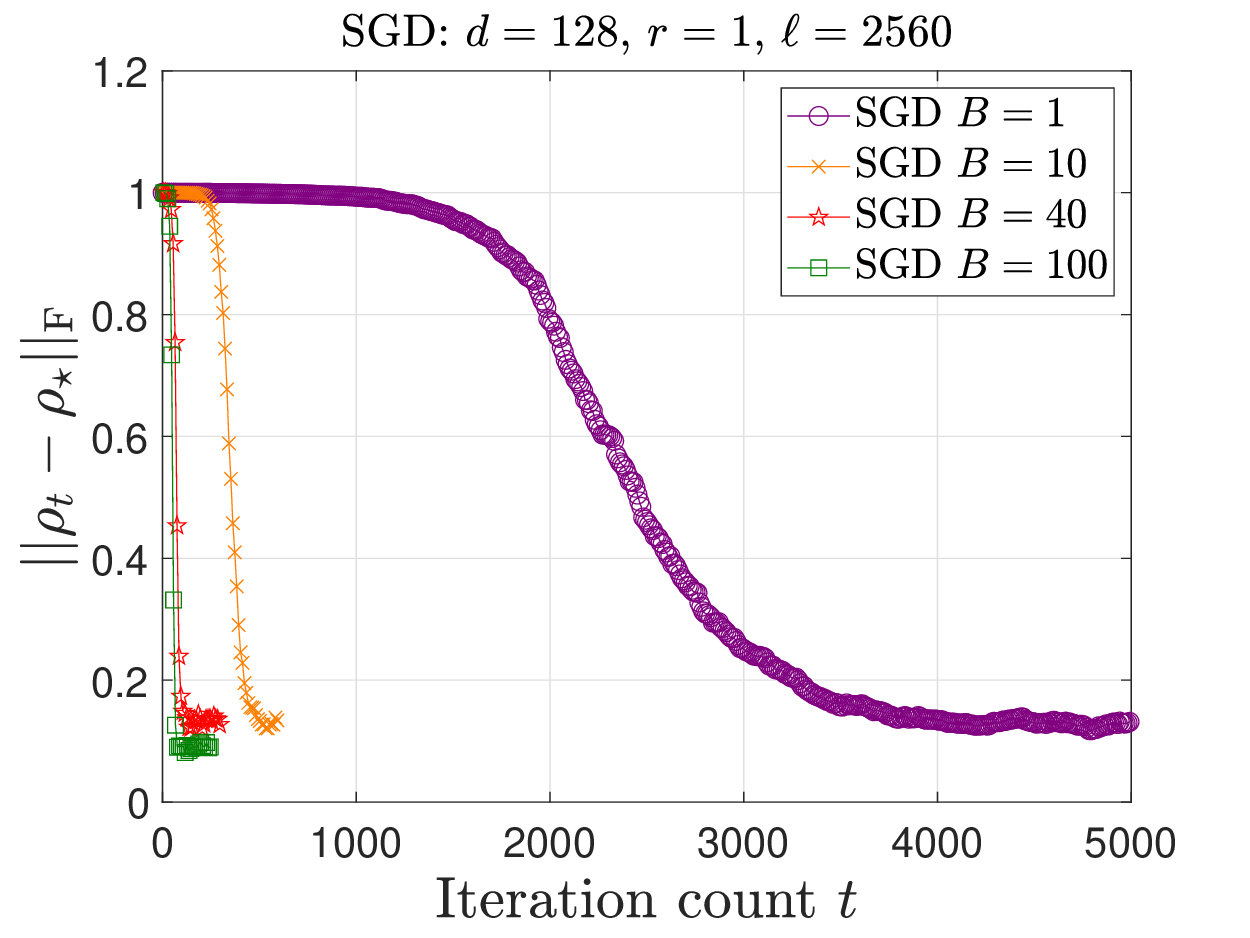}
    \caption{Left: Exact Pauli measurement data, where $z_{t,k}=0$ for all $t$ and $k$. Middle: The number of iterations (rounds) $T$ needed to achieve $\|\brho_T-\brhos\|_{\fro}\leq 10^{-6}$ with exact Pauli measurement data.  Right: Noisy case that utilizes Pauli measurements with $\ell=20d$ shots for each $\bA_t$. }
    \label{fig:SGD_B}
\end{figure*}

\section{Proofs}
In this section, we provide comprehensive proofs of the main theoretical results. We begin with the proof of \Cref{lemma:Eerror} in \Cref{sec:proofoferror}, followed by the proof of \Cref{thm:onlineIni} in \Cref{sec:proofofini}. To establish \Cref{thm:Econtraction}, we first develop several key lemmas that establish the necessary local regularity and smoothness properties; these are presented in \Cref{subsec:keylemmas}. With these foundational results in place, we proceed to the full proof of \Cref{thm:Econtraction} in \Cref{proofofEcontraction}. Finally, \Cref{thm:propconvergence} is proven in \Cref{subsec:proofofprobconvergence} by applying the Azuma–Bernstein inequality.

\subsection{Proof of Lemma~\ref{lemma:Eerror}}\label{sec:proofoferror}
\begin{proof}
At round $t$, once each $\bA_{t,k} \in \{\bW_1, \bW_2, \dots, \bW_{d^2}\}$ is sampled to be $\bW_i$, the approximated coefficient $\Tr(\bW_i \brhos)$ is obtained from a 2-outcome measurement $\left\{\frac{I + \bW_i}{2}, \frac{I - \bW_i}{2}\right\}$ as in \eqref{eq:minibatchmodel}, with error $z_{t,k} = \hat{z}_{t,k,i}$. The outcome is a random variable $G_{t,k,i}$, where the subscript $i$ corresponds to the Pauli matrix $\bW_i$. Each instance of the random variable $G_{t,k,i}$ is denoted by $G_{t,k,i}^j$, with $j \in [\ell]$ referring to the $j$-th instance, and we perform $\ell$ measurements for each $\bW_i$. The instance $G_{t,k,i}^j = 1$ occurs with probability $\Tr\left(\frac{I + \bW_i}{2} \brhos\right)$, while $G_{t,k,i}^j = -1$ occurs with probability $\Tr\left(\frac{I - \bW_i}{2} \brhos\right)$. Thus, we obtain the measurement outcomes as

  $$\hat{y}_{t,k,i}=\frac{1}{\ell}\sum_{j=1}^{\ell}G_{t,k,i}^{j},~i\in [d^2].$$
 Therefore, 
  \begin{align}\label{eq:EzA}
\E[z_{t,k}\bA_{t,k}]&=\frac{1}{d^2}\sum_{i=1}^{d^2}\E[\hat{z}_{t,k,i}|\bW_i] \bW_i\cr
&=\frac{1}{d^2}\sum_{i=1}^{d^2}\E\left[\hat{y}_{t,k,i}-\Tr(\bW_i\brhos) |\bW_i\right] \bW_i\cr
&=\frac{1}{d^2}\sum_{i=1}^{d^2}\sum_{j=1}^{\ell}\frac{1}{\ell}\E\left[G_{t,k,i}^{j}-\Tr(\bW_i\brhos)|\bW_i\right] \bW_i\cr
&= \frac{1}{d^2}\sum_{i=1}^{d^2}\sum_{j=1}^{\ell}\frac{1}{\ell}\left[\Tr\left(\frac{I+\bW_i}{2}\brhos\right)-\Tr\left(\frac{I-\bW_i}{2}\brhos\right)-\Tr(\bW_i\brhos)\right]\bW_i\cr
&= \bm{0}.
  \end{align}
Now we prove the second statement. By definition we have
  $$\hat{z}_{t,k,i}=\sum_{j=1}^{\ell}\frac{1}{\ell}\left(G_{t,k,i}^{j}-\Tr(\bW_i\brhos)\right) :=\sum_{i=1}^{\ell}\frac{1}{\ell} \mathcal{Z}_{t,k,i}^{j}$$
where $\mathcal{Z}_{t,k,i}^{j}:=G_{t,k,i}^{j}-\Tr(\bW_i\brhos)$. Noticing $\E\left[\mathcal{Z}_{t,i}^{j}\right]=0$ and $\lv\mathcal{Z}_{t,i}^{j}\rv\le 2$, then by the Hoeffding's bound we have
\begin{align*}
    \P\left(\lv\sum_{j=1}^{\ell}\frac{1}{\ell} \mathcal{Z}_{t,k,i}^{j}\rv\ge \frac{\varepsilon_0}{\sqrt{d}} \right)\le 2e^{-\ell \varepsilon_0^2/(8d)}.
\end{align*}
By a union bound and the fact that $B\le d^2$, it then implies 
$$\P\left(\lv\sum_{j=1}^{\ell}\frac{1}{\ell} \mathcal{Z}_{t,k,i}^{j}\rv< \frac{\varepsilon_0}{\sqrt{d}},~\forall i\in [d^2], \forall k\in [B] \right)\ge 1-  2B d^{-12}\ge 1-2d^{-10}$$ 
provided $\ell \ge 112\varepsilon_0^{-2}d\log d$.  
\end{proof}

\subsection{Proof of Theorem~\ref{thm:onlineIni}}\label{sec:proofofini}
\begin{proof}
 The proof is based on \cite[Theorem 3]{jain2016streaming}.   Noticing that $\{d y_t\bA_t\}_{t=1}^{T_0}$ is a sequence of matrices sampled independently from a distribution that satisfies 
    \begin{align}
        \E[d y_t\bA_t] = \frac{1}{d^2}\sum_{i=1}^{d^2}d \E[y_i|\bW_i]\bW_i=\frac{1}{d}\sum_{i=1}^{d^2}\Tr(\bW_i \brhos)\bW_i+\frac{1}{d}\sum_{i=1}^{d^2}\E\left[z_{i}|\bW_i\right]\bW_i=\brhos,
    \end{align}
    where the last equation follows from~\eqref{eq:EzA} and the fact that $\frac{1}{d}\sum_{k=1}^{d^2}\Tr(\bW_i \brhos)\bW_i=\brhos$. Moreover, as $0\le y_t\le 1$ we have
     \begin{align}
     \lV d y_t\bA_t-\brhos \rV_2\le d\lV\bA_t\rV_2+\lV\brhos\rV_2 \le d+1, 
    \end{align}
    and 
    \begin{align}
        &~\lV\E[(d y_t\bA_t-\brhos)(d y_t\bA_t-\brhos)^{\dag}]\rV_2 
        = \lV\E[d^2y_t^2\bA_t\bA_t^{\dag}]-\E[dy_t\bA_t\brhos^{\dag}+d\brhos y_t\bA_t^{\dag}]+\brhos\brhos^{\dag}\rV_2 \cr 
        =&~\lV\sum_{i=1}^{d^2}\E\left[y_i^2|\bW_i\right]\bW_i\bW_i^{\dag}-\brhos\brhos^{\dag}\rV_2 \cr
    =&~\lV\sum_{i=1}^{d^2}\left(\Tr(\bW_i \brhos)^2+2\Tr(\bW_i \brhos)\E\left[z_i|\bW_i\right]+\E\left[z_i^2|\bW_i\right]\right)\bW_i\bW_i^{\dag}-\brhos\brhos^{\dag}\rV_2 \cr
         =&~\lV\sum_{i=1}^{d^2}\Tr(\bW_i \brhos)^2\bW_i\bW_i^{\dag}
         +\sum_{i=1}^{d^2}\E\left[z_i^2|\bW_i\right]\bW_i\bW_i^{\dag}-\brhos\brhos^{\dag}\rV_2 \cr
         \le &~\lV\sum_{i=1}^{d^2}\Tr(\bW_i \brhos)^2\bW_i\bW_i^{\dag}\rV_2
+\lV\sum_{i=1}^{d^2}\E\left[z_i^2|\bW_i\right]\bW_i\bW_i^{\dag}\rV_2+\lV\brhos\brhos^{\dag}\rV_2 \cr
\le &~ d+\frac{\varepsilon_0^2}{d}d^2+1=(\varepsilon_0^2+1)d+1,
    \end{align}
where the inequality follows from \Cref{lemma:Eerror}, $\lV \brhos\rV_{2}=1$ and 
    \begin{align*}
        \lV\sum_{i=1}^{d^2}\Tr(\bW_i \brhos)^2\bW_i\bW_i^{\dag}\rV_2
        \le \sum_{i=1}^{d^2}\Tr(\bW_i \brhos)^2\lV\bW_i\rV_2\lV\bW_i^{\dag}\rV_2\le d\sum_{i=1}^{d^2}\frac{1}{d}\Tr(\bW_i \brhos)^2\le d\lV \brhos\rV_{\fro}^2\le d.
        % &=\sup_{\lV \bm{a}\rV_2=1,\lV \bm{b}\rV_2=1} \sum_{i=1}^{d^2}\Tr(\bW_i \brhos)^2\bm{a}^{\dag}\bW_i\bW_i^{\dag}\bm{b}\cr
        % &\le \sup_{\lV \bm{a}\rV_2=1,\lV \bm{b}\rV_2=1} \sum_{i=1}^{d^2}\lV\bm{a}^{\dag}\bW_i\rV_{2}\lV\bW_i^{\dag}\bm{b}\rV_{2}
    \end{align*}
Therefore, letting the step size $\eta_t=\frac{\log d}{80d\log^2 d+t}$, by \cite[Theorem 3]{jain2016streaming} we have   
\begin{equation}\label{eq:uinierror}
    1-\lv \bu_{T_0}^{\dag}\bu_{\star} \rv^2\le C\left(\frac{2d\log d}{T_0}+\left(\frac{160d\log^2 d}{T_0}\right)^{2\log d}\right)\le \frac{\delta^2}{2}
\end{equation}
with probability at least $\frac{3}{4}$, provided $T_0\ge \max\{8C\delta^{-2} d\log d,160 C\delta^{-\frac{1}{\log d}} d\log^2 d\}$ for some universal constant $C>0$. Then, we have
\begin{align*}
\lVert\brho_0-\brhos \rVert_{\fro}^2 
&=\lVert\brho_0 \rVert_{\fro}^2+\lVert\brhos \rVert_{\fro}^2-2\Re\l\bu_{\star}\bu_{\star}^{\dag},\bu_{T_0}\bu_{T_0}^{\dag}\r \cr
&=2\left(1-\l\bu_{T_0}^{\dag}\bu_{\star},\bu_{T_0}^{\dag}\bu_{\star}\r\right)
=2\left(1-\lv \bu_{T_0}^{\dag}\bu_{\star} \rv^2\right) \le \delta^2,
\end{align*}
which completes the proof.
\end{proof}

\subsection{Proof of Corollary~\ref{col:onlineIni}}\label{sec:proofofinihp}
\begin{proof}
Denote $$\mathcal{S}:=\{j:  \lVert\brho_{T_0,j} -\brhos \rVert_{\fro}> \frac{\delta}{4}\}.$$ 
We first show that w.h.p. we have $|\mathcal{S}|\le \frac{J}{3}$. \par
Letting $T_0\ge C_0 \left(\frac{\delta}{4}\right)^{-2}d\log^2 d$. Then, for any fixed $j\in [J]$, by \Cref{thm:onlineIni}, it holds 
$$\lVert\brho_{T_0,j}-\brhos \rVert_{\fro}\le \frac{\delta}{4}$$
with probability at least $\frac{3}{4}$. We define the independent random variables
\begin{align*}
    \xi_{j}:=\mathbb{I}_{\{\lVert\brho_{T_0,j}-\brhos \rVert_{\fro}\le \frac{\delta}{4}\}},~j\in [J],
\end{align*}
where $\mathbb{I}_{\{\cdot\}}$ is the indicator function and $\mathbb{I}_{\{\mathbb{A}\}}=1$ if $\mathbb{A}$ is true, $\mathbb{I}_{\{\mathbb{A}\}}=0$ otherwise. Noticing that 
$\mathbb{E}[\xi_{j}]\ge\frac{3}{4}$, $j\in [J]$. By Hoeffding's inequality, we have
\begin{align*}
    \mathbb{P}\left(\frac{1}{J}\sum_{i=1}^{J}\xi_{j}-\E\left[\frac{1}{J}\sum_{i=1}^{J}\xi_{j}\right] \le -\varepsilon\right)\le e^{-2J\varepsilon^2},
\end{align*}
which implies
\begin{align*}
&~\mathbb{P}\left(\frac{1}{J}\sum_{i=1}^{J}\xi_{j}\le -\varepsilon+\frac{3}{4}\right) \cr
\le&~\mathbb{P}\left(\frac{1}{J}\sum_{i=1}^{J}\xi_{j}\le -\varepsilon+\E\left[\frac{1}{J}\sum_{i=1}^{J}\xi_{j}\right] \right)\cr
\le&~ e^{-2J\varepsilon^2}.
\end{align*}
By letting $\varepsilon=\frac{1}{12}$ and $J\ge 72\log d$, we have
\begin{align*}
    \mathbb{P}\left(\sum_{i=1}^{J}\xi_{j}> \frac{2J}{3}\right)>1- \frac{1}{d}.
\end{align*}
Therefore,
 \begin{align*}
    \mathbb{P}\left( |\mathcal{S}|\le \frac{J}{3}\right)
   & =\mathbb{P}\left(\sum_{j=1}^J \mathbb{I}_{\{\lVert\brho_{T_0,j}-\brhos \rVert_{\fro}\le \frac{\delta}{4}\}}\ge \frac{2J}{3}\right)\cr
    &=  \mathbb{P}\left(\sum_{i=1}^{J}\xi_{j}\ge \frac{2J}{3}\right)\ge 1-\frac{1}{d}.
\end{align*}
Now we prove the statement by the properties of the geometric median. Let $\mathrm{vec}(\cdot)$ be the vectorization operator. Noticing that 
$$\lV \brho-\brho_{T_0,j}\rV_{\fro}=\lV \mathrm{vec}(\brho)-\mathrm{vec}(\brho_{T_0,j})\rV_{2}$$
for all $j$, we know $\mathrm{vec}(\brho_{T_0})$ is the geometric median of $\{\mathrm{vec}(\brho_{T_0,j})\}_{j=1}^{J}$.
According to \cite[Lemma 24]{cohen2016geometric}, for $|\mathcal{S}|\le \frac{J}{3}$ we have
\begin{align*}
&~\lV\mathrm{vec}(\brho_{T_0})-\mathrm{vec}(\brhos)\rV_{2}\cr
\le&~ \frac{2J- |\mathcal{S}|}{J- 2|\mathcal{S}|}\max_{j\notin J}\lV\mathrm{vec}(\brho_{T_0,j})-\mathrm{vec}(\brhos)\rV_{2} \cr
\le&~ 4\max_{j\notin \mathcal{S}}\lV\brho_{T_0,j}-\brhos\rV_{\fro}\le \delta.
\end{align*}
Thus $\lV\brho_{T_0}-\brhos\rV_{\fro}\le\delta$. This completes the proof.
\end{proof}

\subsection{Key lemmas}\label{subsec:keylemmas}
Before we present the proof of \Cref{thm:Econtraction}, we first provide some technical lemmas.   Following the analysis illustrated in \Cref{sec:overview}, we examine the local smoothness of the squared distance function $\mathrm{dist}^2(\brho,\brhos)=\lV \bU\bU^{\dag}-\brhos\rV_\fro^2$, where $\brho=\bU\bU^{\dag}$.  
\begin{lemma}\label{lemma:Lip}
 The function $f(\bU)=\lV \bU\bU^{\dag}-\brhos\rV_\fro^2$ satisfies 
\begin{align*}
    \lv f(\bU+\bV)-f(\bU)-\Re\l\nabla f(\bU),\bV \r \rv\le L_C\lV \bV\rV_{\fro}^2
\end{align*}  
for all $\bU\in \mathcal{E}(\brho_\star,\delta\sigmas_r)$ and all $\lV\bV \rV_{\fro}\le C\sqrt{\sigmas_r}$, $C>0$. Here, $ L_C= (4\kappa+6\delta +2C\sqrt{\kappa+\delta}+C^2)\sigmas_r$.
\end{lemma}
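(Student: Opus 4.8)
The statement is a local quadratic-remainder (descent-lemma-type) bound for $f(\bU)=\lV \bU\bU^{\dag}-\brhos\rV_\fro^2$, so the natural first step is to expand $f(\bU+\bV)$ explicitly in terms of $\bV$. Writing $\bU\bU^{\dag}-\brhos = \bm{\Delta}$ and expanding $(\bU+\bV)(\bU+\bV)^{\dag}-\brhos = \bm{\Delta} + (\bU\bV^{\dag}+\bV\bU^{\dag}) + \bV\bV^{\dag}$, I would substitute into the Frobenius norm squared and collect by degree in $\bV$. The linear-in-$\bV$ term reproduces $\Re\l\nabla f(\bU),\bV\r$ exactly (this is just the definition of the gradient), so the remainder $f(\bU+\bV)-f(\bU)-\Re\l\nabla f(\bU),\bV\r$ is a sum of terms that are quadratic, cubic, and quartic in $\bV$. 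The cubic and quartic pieces are $2\Re\l \bU\bV^{\dag}+\bV\bU^{\dag}, \bV\bV^{\dag}\r$ and $\lV\bV\bV^{\dag}\rV_\fro^2$; the quadratic pieces are $\lV\bU\bV^{\dag}+\bV\bU^{\dag}\rV_\fro^2$ and $2\Re\l\bm{\Delta},\bV\bV^{\dag}\r$.

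**Bounding each term.** The goal is to dominate all four remainder terms by $L_C\lV\bV\rV_\fro^2$, so I would bound each using $\lV\bU\rV$, $\lV\bm{\Delta}\rV$ (or $\lV\bm{\Delta}\rV_\fro$), and $\lV\bV\rV_\fro\le C\sqrt{\sigmas_r}$. The quartic term is crudely $\lV\bV\rV_\fro^4 \le C^2\sigmas_r\lV\bV\rV_\fro^2$, giving the $C^2$ contribution. For the cubic term, $|2\Re\l\bU\bV^{\dag}+\bV\bU^{\dag},\bV\bV^{\dag}\r|\le 4\lV\bU\rV\lV\bV\rV_\fro^2\cdot\lV\bV\rV_\fro \le 4C\lV\bU\rV\sqrt{\sigmas_r}\lV\bV\rV_\fro^2$; here I need $\lV\bU\rV\le\sqrt{\sigmas_1+\delta\sigmas_r}=\sqrt{\sigmas_r}\sqrt{\kappa+\delta}$, which follows from $\bU\bU^{\dag}\in\mathcal{E}(\brhos,\delta\sigmas_r)$ via $\lV\bU\bU^{\dag}\rV\le\lV\brhos\rV+\delta\sigmas_r = \sigmas_1+\delta\sigmas_r$ (using $\lV\brhos\rV=\sigmas_1$, the normalization) — this accounts for the $2C\sqrt{\kappa+\delta}$ term. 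The first quadratic term $\lV\bU\bV^{\dag}+\bV\bU^{\dag}\rV_\fro^2 \le 4\lV\bU\rV^2\lV\bV\rV_\fro^2 \le 4(\kappa+\delta)\sigmas_r\lV\bV\rV_\fro^2$ — but the stated constant only has $4\kappa$, so I would instead split this more carefully, perhaps bounding $\lV\bU\bV^{\dag}+\bV\bU^{\dag}\rV_\fro^2\le 4\sigmas_1\lV\bV\rV_\fro^2 = 4\kappa\sigmas_r\lV\bV\rV_\fro^2$ by using $\lV\bU\rV^2 = \lV\bU\bU^{\dag}\rV$ and a tighter inequality (e.g.\ $\lV\bU\bV^{\dag}\rV_\fro^2 = \Tr(\bV\bU^{\dag}\bU\bV^{\dag})\le \lV\bU^{\dag}\bU\rV\lV\bV\rV_\fro^2$), then folding any slack into the $\delta$ term. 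The second quadratic term $|2\Re\l\bm{\Delta},\bV\bV^{\dag}\r|\le 2\lV\bm{\Delta}\rV\lV\bV\rV_\fro^2$; since $\lV\bm{\Delta}\rV\le\lV\bm{\Delta}\rV_\fro\le\delta\sigmas_r$ this contributes $2\delta\sigmas_r\lV\bV\rV_\fro^2$. Summing the $\delta$-contributions from this term and the slack in the first quadratic term should give the $6\delta$ coefficient.

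**The main obstacle.** The routine part is the expansion and the triangle/Cauchy–Schwarz bounds; the delicate part is getting exactly the stated constant $L_C=(4\kappa+6\delta+2C\sqrt{\kappa+\delta}+C^2)\sigmas_r$ rather than a looser one — in particular, separating the $\lV\bU\bV^{\dag}+\bV\bU^{\dag}\rV_\fro^2$ term into a clean $4\kappa\sigmas_r$ piece plus $O(\delta)\sigmas_r$ slack, and making sure the $\delta$-type terms from $\l\bm{\Delta},\bV\bV^{\dag}\r$ and from $\lV\bU\rV^2-\sigmas_1$ sum to exactly $6\delta$. I would handle this by writing $\lV\bU\rV^2\le\sigmas_1+\delta\sigmas_r$ and expanding $4\lV\bU\rV^2 = 4\sigmas_1 + 4\delta\sigmas_r\cdot(\text{something}\le 1)$, then tracking constants; I expect the $6\delta = 4\delta + 2\delta$ split to drop out, with the $4\delta$ from the quadratic-form term and $2\delta$ from the $\l\bm{\Delta},\bV\bV^{\dag}\r$ term. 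A secondary subtlety is the choice between $\lV\bm{\Delta}\rV$ and $\lV\bm{\Delta}\rV_\fro$ — using the spectral norm where possible keeps constants small, but the membership in $\mathcal{E}$ only directly gives the Frobenius bound, so I would use $\lV\bm{\Delta}\rV\le\lV\bm{\Delta}\rV_\fro\le\delta\sigmas_r$ throughout.
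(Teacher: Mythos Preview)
Your approach is essentially identical to the paper's: expand $(\bU+\bV)(\bU+\bV)^\dag - \brhos$, identify the remainder as the four terms $2\Re\l\bm{\Delta},\bV\bV^\dag\r$, $2\Re\l\bU\bV^\dag+\bV\bU^\dag,\bV\bV^\dag\r$, $\lV\bU\bV^\dag+\bV\bU^\dag\rV_\fro^2$, $\lV\bV\bV^\dag\rV_\fro^2$, and bound each via Cauchy--Schwarz together with $\lV\bU\rV^2\le(\kappa+\delta)\sigmas_r$ (Weyl) and $\lV\bV\rV_\fro\le C\sqrt{\sigmas_r}$. Your worry about extracting ``exactly $4\kappa$'' is unnecessary---the paper simply uses $4\lV\bU\rV^2\le 4(\kappa+\delta)\sigmas_r=(4\kappa+4\delta)\sigmas_r$ with no sharper estimate, and the remaining $2\delta$ comes from $|2\Re\l\bm{\Delta},\bV\bV^\dag\r|\le 2\lV\bm{\Delta}\rV_\fro\lV\bV\rV_\fro^2\le 2\delta\sigmas_r\lV\bV\rV_\fro^2$, exactly the $6\delta=4\delta+2\delta$ split you conjecture in your last paragraph.
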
 
\begin{proof}
Through direct expansion, we have
 \begin{align*}
    &~\lv f(\bU+\bV)-f(\bU)-\Re\l\nabla f(\bU),\bV \r\rv\cr
    =&~\lv 2\Re\l \bU\bU^{\dag}-\brhos,\bV\bV^{\dag} \r+2\Re\l \bU\bV^{\dag}+\bV\bU^{\dag},\bV\bV^{\dag} \r+\lV \bU\bV^{\dag}+\bV\bU^{\dag} \rV_{\fro}^2+ \lV\bV\bV^{\dag} \rV_{\fro}^2\rv\cr
    \le &~2\lV\bU\bU^{\dag}-\brhos\rV_{\fro}\lV\bV\bV^{\dag} \rV_{\fro}
    +2\lV \bU\bV^{\dag}+\bV\bU^{\dag}\rV_{\fro}\lV \bV\bV^{\dag}\rV_{\fro}
    +\lV \bU\bV^{\dag}+\bV\bU^{\dag} \rV_{\fro}^2+ \lV\bV\bV^{\dag} \rV_{\fro}^2\cr
    \le &~ 2\lV\bU\bU^{\dag}-\brhos\rV_{\fro}\lV\bV \rV_{\fro}^2
    +2\lV \bU\rV\lV \bV\rV_{\fro}^3+4\lV \bU\rV^2\lV \bV\rV_{\fro}^2+\lV \bV\rV_{\fro}^4\cr
    =&~\left(2\lV\bU\bU^{\dag}-\brhos\rV_{\fro}
    +2\lV \bU\rV\lV \bV\rV_{\fro}+4\lV \bU\rV^2 +\lV \bV\rV_{\fro}^2\right)\lV \bV\rV_{\fro}^2,
 \end{align*}
where the first inequality follows from the Cauchy-Schwarz inequality, the second inequality follows from $\lV\bV\bV^{\dag} \rV_{\fro}\le \lV \bV\rV_{\fro}^2$. By Weyl's inequality, for all $\bU\in \mathcal{E}(\brho_\star,\delta)$  we have
\begin{align}\label{ineq:Uopup}
\lV \bU\rV^2=  \lV \bU\bU^{\dag}\rV \le \sigmas_1+\lV \bU\bU^{\dag}-\brhos\rV\le \sigmas_1+\lV \bU\bU^{\dag}-\brhos\rV_{\fro}\le \sigmas_1+ \delta\sigmas_r=(\kappa+\delta)\sigmas_r.
\end{align}
Therefore, 
\begin{align*}
   \lv f(\bU+\bV)-f(\bU)-\Re\l\nabla f(\bU),\bV \r\rv\le \left(2\delta\sigmas_r
    +2C\sqrt{\kappa+\delta}\sigmas_r+4(\kappa+\delta)\sigmas_r +C^2\sigmas_r\right)\lV \bV\rV_{\fro}^2,
 %L_C \lV \bV\rV_{\fro}^2,
\end{align*}
which completes the proof.
\end{proof}

\begin{lemma}\label{lemma:boundEGDnorm}
    For all $\bU\in \mathcal{E}(\brho_\star,\delta\sigmas_r)$, we have
    \begin{equation}
        \Re\l \left( \bU \bU^{\dag}- \brhos\right)\bU\bU^{\dag},\left( \bU \bU^{\dag}- \brhos\right)\r \ge \frac{1}{2}(1-\delta)^2\sigmas_r\lV\bU \bU^{\dag}- \brhos \rV_{\fro}^2.
    \end{equation}
\end{lemma}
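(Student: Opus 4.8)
The plan is to lower-bound the real inner product $\Re\langle (\bU\bU^\dag-\brhos)\bU\bU^\dag,\,(\bU\bU^\dag-\brhos)\rangle$ by exploiting the positive semidefiniteness of $\bU\bU^\dag$ together with a spectral lower bound on $\bU\bU^\dag$ valid in the contraction region. Write $\bm{\Delta} := \bU\bU^\dag-\brhos$, which is Hermitian, and observe that the quantity of interest equals $\Re\,\Tr\big(\bm{\Delta}\,(\bU\bU^\dag)\,\bm{\Delta}\big) = \Tr\big(\bm{\Delta}\,(\bU\bU^\dag)\,\bm{\Delta}\big)$, since $\bm{\Delta}$ and $\bU\bU^\dag$ are Hermitian and the trace of a product of three Hermitian matrices in this cyclic arrangement is real (more directly, $\bm{\Delta}(\bU\bU^\dag)\bm{\Delta}$ is itself Hermitian and PSD). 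Hence the target is $\Tr(\bm{\Delta}(\bU\bU^\dag)\bm{\Delta}) \ge \lambda_{\min}(\bU\bU^\dag|_{\text{relevant subspace}})\cdot \|\bm{\Delta}\|_\fro^2$, and the whole lemma reduces to producing the constant $\tfrac12(1-\delta)^2\sigmas_r$ as such a lower bound.

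First I would record that $\bU\bU^\dag$ is PSD, so $\bm{\Delta}(\bU\bU^\dag)\bm{\Delta}\succeq 0$ and $\Tr(\bm{\Delta}(\bU\bU^\dag)\bm{\Delta}) = \|(\bU\bU^\dag)^{1/2}\bm{\Delta}\|_\fro^2 \ge 0$; this is the crude bound we must sharpen. The honest difficulty is that $\bU\bU^\dag$ has rank at most $r$ and is therefore singular on $\bC^d$, so $\lambda_{\min}(\bU\bU^\dag)=0$ and a naive bound gives nothing. The fix is to decompose $\bm{\Delta}$ according to the column space of $\bU$ and its orthogonal complement. Let $\bm{P}$ be the orthogonal projector onto $\mathrm{range}(\bU)$ (equivalently onto $\mathrm{range}(\bU\bU^\dag)$). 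Since $\brhos$ is itself rank $r$ with $\sigmas_r = \tfrac1\kappa$, and $\bU\bU^\dag$ is $\delta\sigmas_r$-close to $\brhos$ in Frobenius norm (hence in spectral norm), Weyl's inequality and a standard Davis--Kahan / subspace-perturbation argument give $\sigma_r(\bU\bU^\dag) \ge \sigmas_r - \delta\sigmas_r = (1-\delta)\sigmas_r$, i.e. on $\mathrm{range}(\bU)$ the operator $\bU\bU^\dag$ is bounded below by $(1-\delta)\sigmas_r$. Therefore $\Tr(\bm{\Delta}(\bU\bU^\dag)\bm{\Delta}) = \Tr\big((\bm{P}\bm{\Delta})(\bU\bU^\dag)(\bm{\Delta}\bm{P})\big) \ge (1-\delta)\sigmas_r\,\|\bm{\Delta}\bm{P}\|_\fro^2$, using that $\bU\bU^\dag = \bm{P}(\bU\bU^\dag)\bm{P}$.

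It then remains to show $\|\bm{\Delta}\bm{P}\|_\fro^2 \ge \tfrac12(1-\delta)\|\bm{\Delta}\|_\fro^2$, so that the product of the two factors yields $\tfrac12(1-\delta)^2\sigmas_r\|\bm{\Delta}\|_\fro^2$ as claimed. For this I would split $\|\bm{\Delta}\|_\fro^2 = \|\bm{\Delta}\bm{P}\|_\fro^2 + \|\bm{\Delta}(\bI-\bm{P})\|_\fro^2$ and bound the second term: on $\mathrm{range}(\bI-\bm{P})$ we have $\bU\bU^\dag(\bI-\bm{P}) = 0$, so $\bm{\Delta}(\bI-\bm{P}) = -\brhos(\bI-\bm{P})$, whence $\|\bm{\Delta}(\bI-\bm{P})\|_\fro = \|\brhos(\bI-\bm{P})\|_\fro = \|(\bm{P}-\bI)\brhos\|_\fro$, which measures how far $\mathrm{range}(\bU)$ is from containing $\mathrm{range}(\brhos)$; this is again controlled by the subspace perturbation $\|\bU\bU^\dag-\brhos\|_\fro \le \delta\sigmas_r$ relative to the gap $\sigmas_r$, giving $\|\brhos(\bI-\bm{P})\|_\fro \le \tfrac{\delta}{\sqrt{1-\delta^2}}\,\text{(something)}\cdot\|\bm{\Delta}\|$ type bounds — and for $\delta$ small enough (as in the regime $\delta\le\tfrac13$ used in \Cref{thm:Econtraction}) this shows the off-range part is a small fraction of $\|\bm{\Delta}\|_\fro$, leaving $\|\bm{\Delta}\bm{P}\|_\fro^2 \ge \tfrac12(1-\delta)\|\bm{\Delta}\|_\fro^2$ after absorbing constants. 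I expect the bookkeeping of these subspace-perturbation constants — getting the clean factor $\tfrac12(1-\delta)^2$ rather than some messier function of $\delta$ — to be the main technical obstacle; everything else is the Cauchy--Schwarz / trace-cyclicity manipulations above. An alternative, possibly cleaner route avoids projectors entirely: use $\Tr(\bm{\Delta}(\bU\bU^\dag)\bm{\Delta}) = \Tr(\bm{\Delta}(\brhos+\bm{\Delta})\bm{\Delta}) = \Tr(\bm{\Delta}\brhos\bm{\Delta}) + \Tr(\bm{\Delta}^3)$ and bound $\Tr(\bm{\Delta}^3) \ge -\|\bm{\Delta}\|\,\|\bm{\Delta}\|_\fro^2 \ge -\delta\sigmas_r\|\bm{\Delta}\|_\fro^2$ together with $\Tr(\bm{\Delta}\brhos\bm{\Delta}) \ge 0$ — but this gives only $(-\delta)$ rather than the positive constant we want, so one still needs the refinement that $\Tr(\bm{\Delta}\brhos\bm{\Delta})$ plus the PSD structure recovers a net positive multiple of $\sigmas_r$; I would present whichever of these two the author's own proof most closely follows.
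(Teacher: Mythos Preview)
Your overall architecture matches the paper's: write the quantity as $\Tr(\bm{\Delta}\,\bU\bU^{\dag}\,\bm{\Delta})$ with $\bm{\Delta}=\bU\bU^{\dag}-\brhos$, invoke Weyl to get $\sigma_r(\bU\bU^{\dag})\ge(1-\delta)\sigmas_r$, and work with the projector $\bm{P}$ onto $\mathrm{range}(\bU)$. But the argument you give for the second step, $\lV\bm{\Delta}\bm{P}\rV_\fro^2 \ge \tfrac12(1-\delta)\lV\bm{\Delta}\rV_\fro^2$, has a genuine gap. You assert that $\lV\bm{\Delta}(\bI-\bm{P})\rV_\fro = \lV\brhos(\bI-\bm{P})\rV_\fro$ is ``a small fraction'' of $\lV\bm{\Delta}\rV_\fro$ via subspace perturbation. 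This is false: take $r=1$, $\brhos=\bm{w}\bm{w}^\dag$, $\bU\bU^\dag=\bm{v}\bm{v}^\dag$ with angle $\theta$ between unit vectors $\bm{v},\bm{w}$; then $\lV\bm{\Delta}(\bI-\bm{P})\rV_\fro^2 = \sin^2\theta$ while $\lV\bm{\Delta}\rV_\fro^2 = 2\sin^2\theta$, so the off-range part carries \emph{exactly half} the mass regardless of how small $\theta$ (hence $\delta$) is. Your target inequality is in fact true, but not because the off-range part is small.

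The paper avoids this by not collapsing $\bU\bU^{\dag}$ to $(1-\delta)\sigmas_r\bm{P}$ prematurely. It splits $\Tr(\bm{\Delta}\,\bU\bU^{\dag}\,\bm{\Delta})$ keeping the factor $\bV$ (the eigenbasis of $\bU\bU^{\dag}$) in place, obtaining a lower bound of the form $\sigma_r(\bU\bU^{\dag})\big(\lV\bm{P}\bm{\Delta}\bm{P}\rV_\fro^2 + \lV(\bI-\bm{P})\brhos\bV\rV_\fro^2\big)$, and then applies the principal-angle bound $\sigma_r^2(\bV_\star^\dag\bV)\ge 1-\delta$ to convert $\lV(\bI-\bm{P})\brhos\bV\rV_\fro^2 \ge (1-\delta)\lV(\bI-\bm{P})\brhos\rV_\fro^2$. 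The key observation is that $\lV(\bI-\bm{P})\brhos\rV_\fro^2$, when expanded in $\bm{P}/(\bI-\bm{P})$ blocks, contains the corner block $\lV(\bI-\bm{P})\bm{\Delta}(\bI-\bm{P})\rV_\fro^2$ \emph{twice} (once from each side, by Hermiticity); after dividing by $2$ one recovers all four blocks of $\lV\bm{\Delta}\rV_\fro^2$ plus a nonnegative remainder, yielding the clean $\tfrac12(1-\delta)^2\sigmas_r$ factor. What is genuinely small here is not $\lV(\bI-\bm{P})\bm{\Delta}\rV_\fro$ but only the corner block $\lV(\bI-\bm{P})\brhos(\bI-\bm{P})\rV_\fro$, which the angle bound controls by the off-diagonal block; if you want to rescue your route through $\lV\bm{\Delta}\bm{P}\rV_\fro^2$, that is the inequality you actually need.
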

\begin{proof}
    Let $\brhos$ and $\bU\bU^{\dag}$ has the compact SVD $\brhos=\bV_{\star}\bSigma_{\star}\bV_{\star}^{\dag}$, $\bSigma_{\star}\in \mathbb{C}^{r\times r}$, $\bV_{\star}\in \mathbb{C}^{d\times r}$, and $\bU\bU^{\dag}=\bV\bSigma\bV^{\dag}$, $\bSigma\in \mathbb{C}^{r\times r}$, $\bV\in \mathbb{C}^{d\times r}$.  We first notice that  
\begin{align*}
 \lV\bV_{\star,\perp}^{\dag}\bV \rV^2
 &\le \frac{1}{\sigma_r(\bSigma)}\lV\bV_{\star,\perp}^{\dag}\bV\bSigma^{1/2} \rV^2
 =\frac{1}{\sigma_r(\bSigma)}\lV\bV_{\star,\perp}^{\dag}\bV\bSigma\bV^{\dag}\bV_{\star,\perp} \rV\cr
 &\le\frac{1}{\sigma_r(\bSigma)}\lV\bV_{\star,\perp}^{\dag}(\bU\bU^{\dag}-\brhos)\bV_{\star,\perp} \rV_{\fro}\cr
 &\le \frac{1}{\sigma_r(\bSigma)}\lV\bU\bU^{\dag}-\brhos\rV_{\fro}\le \delta,
\end{align*}
where the second line follows from $\bV_{\star,\perp}^{\dag}\brhos=\bV_{\star,\perp}^{\dag}\bV_{\star}\bSigma_{\star}\bV_{\star}^{\dag}=\bm{0} $.Let $\theta$ be the  principal angle between $\bV_{\star}$ and $\bV$. Then $\sin^2\theta=\lV\bV_{\star,\perp}^{\dag}\bV \rV^2\le \delta$. Therefore, by \cite[Theorem 2.1]{zhu2013angles}, we have
\begin{equation}\label{ineq:boundVsVsig}
  \sigma_r^2(\bV_{\star}^{\dag}\bV)=\cos^2\theta=1-\sin^2\theta\ge 1-\delta.
\end{equation} 
Then, we have
\begin{align*}
%\lV(\bU \bU^{\dag}- \brhos)\bU\rV_{\fro}^2
%&=\lV\bV\bV^{\dag}(\bU \bU^{\dag}- \brhos)\bU\rV_{\fro}^2+\lV(\bI-\bV\bV^{\dag})(\bU \bU^{\dag}-\brhos)\bU\rV_{\fro}^2\cr
&~\Re\l \left( \bU \bU^{\dag}- \brhos\right)\bU\bU^{\dag}, \bU \bU^{\dag}- \brhos\r\cr
=&~ \Re\l \bV\bV^{\dag}\left( \bU \bU^{\dag}- \brhos\right)\bV\bSigma\bV^{\dag}, \bU \bU^{\dag}- \brhos\r\cr
&~+\Re\l (\bI-\bV\bV^{\dag})\left( \bU \bU^{\dag}- \brhos\right)\bV\bSigma, (\bI-\bV\bV^{\dag})(\bU \bU^{\dag}- \brhos)\bV\r\cr
=&~ \Re\l \bV\bV^{\dag}\left( \bU \bU^{\dag}- \brhos\right)\bV\bSigma, \bV\bV^{\dag}(\bU \bU^{\dag}- \brhos)\bV\r+\Re\l (\bI-\bV\bV^{\dag}) \brhos\bV\bSigma, (\bI-\bV\bV^{\dag}) \brhos\bV\r\cr
\ge &~ \sigma_r(\bSigma)\lV\bV\bV^{\dag}\left( \bU \bU^{\dag}- \brhos\right)\bV\bV^{\dag}\rV_{\fro}^2+\sigma_r(\bSigma)\lV(\bI-\bV\bV^{\dag}) \bV_{\star}\bSigma_{\star}\bV_{\star}^{\dag}\bV \rV_{\fro}^2\cr
\ge&~ \sigma_r(\bSigma)\lV\bV\bV^{\dag}\left( \bU \bU^{\dag}- \brhos\right)\bV\bV^{\dag}\rV_{\fro}^2+\sigma_r(\bSigma)\sigma_r^2(\bV_{\star}^{\dag}\bV)\lV(\bI-\bV\bV^{\dag}) \bV_{\star}\bSigma_{\star} \rV_{\fro}^2\cr
\ge&~ \sigma_r(\bSigma)\lV\bV\bV^{\dag}\left( \bU \bU^{\dag}- \brhos\right)\bV\bV^{\dag}\rV_{\fro}^2+(1-\delta)\sigma_r(\bSigma)\lV(\bI-\bV\bV^{\dag}) \bV_{\star}\bSigma_{\star} \bV_{\star}^{\dag}\rV_{\fro}^2
\end{align*} 
where the first inequality follows from the fact that: for positive semidefinite matrices $\bm{C}$, $\bm{D}$, we have  $\mathrm{tr}(\bm{C}\bm{D})\ge \sigma_{\min}(\bm{C})\mathrm{tr}(\bm{D})$. The last inequality follows from \eqref{ineq:boundVsVsig}. By noticing that 
\begin{align*}
    &~\lV(\bI-\bV\bV^{\dag}) \bV_{\star}\bSigma_{\star} \bV_{\star}^{\dag}\rV_{\fro}^2
    =\frac{1}{2}\left(\lV(\bI-\bV\bV^{\dag}) \brhos\rV_{\fro}^2+\lV\brhos(\bI-\bV\bV^{\dag}) \rV_{\fro}^2\right)\cr
    =&~\frac{1}{2}\left(\lV(\bI-\bV\bV^{\dag}) \brhos \bV\bV^{\dag}\rV_{\fro}^2 +\lV \bV\bV^{\dag}\brhos(\bI-\bV\bV^{\dag}) \rV_{\fro}^2+2\lV (\bI-\bV\bV^{\dag})\brhos(\bI-\bV\bV^{\dag}) \rV_{\fro}^2 \right),
\end{align*}
we have
\begin{align*}
&~\Re\l \left( \bU \bU^{\dag}- \brhos\right)\bU\bU^{\dag}, \bU \bU^{\dag}- \brhos\r\cr
\ge &~ \frac{(1-\delta)\sigma_r(\bSigma)}{2}\Big(\lV\bV\bV^{\dag}\left( \bU \bU^{\dag}- \brhos\right)\bV\bV^{\dag}\rV_{\fro}^2+ \lV(\bI-\bV\bV^{\dag}) \left( \bU \bU^{\dag}- \brhos\right) \bV\bV^{\dag}\rV_{\fro}^2 \cr
 &~+\lV \bV\bV^{\dag}\left( \bU \bU^{\dag}- \brhos\right)(\bI-\bV\bV^{\dag}) \rV_{\fro}^2+\lV (\bI-\bV\bV^{\dag})\left( \bU \bU^{\dag}- \brhos\right)(\bI-\bV\bV^{\dag}) \rV_{\fro}^2\Big)\cr
 \ge &~ \frac{(1-\delta)^2\sigmas_r}{2}\lV\bU \bU^{\dag}- \brhos \rV_{\fro}^2
\end{align*} 
where the second inequality follows from $(\bI-\bV\bV^{\dag})\bU \bU^{\dag}=0$, the last inequality follows from the Weyl's inequality, which gives for all $\bU\in \mathcal{E}(\brho_\star,\delta\sigmas_r)$, $\delta \in (0,1)$, we have
\begin{align}\label{ineq:Uopdw}
\sigma_r(\bSigma)=\sigma_r(\bU\bU^{\dag})\ge \sigmas_r-\lV \bU\bU^{\dag}-\brhos\rV\ge \sigmas_r-\lV \bU\bU^{\dag}-\brhos\rV_{\fro}\ge (1-\delta)\sigmas_r.
\end{align}
This completes the proof.
% For $\sigma_r(\bV_{\star}^{\dag}\bV)$, we consider the angle of its orthogonal complement 
% \begin{align}
%  \lV\bV_{\star,\perp}^{\dag}\bV \rV^2
%  &\le \frac{1}{\sigma_r(\bSigma)}\lV\bV_{\star,\perp}^{\dag}\bV\bSigma^{1/2} \rV^2
%  =\frac{1}{\sigma_r(\bSigma)}\lV\bV_{\star,\perp}^{\dag}\bV\bSigma\bV^{\dag}\bV_{\star,\perp} \rV\cr
%  &\le\frac{1}{\sigma_r(\bSigma)}\lV\bV_{\star,\perp}^{\dag}(\bU\bU^{\dag}-\brhos)\bV_{\star,\perp} \rV_{\fro}\cr
%  &\le \frac{1}{\sigma_r(\bSigma)}\lV\bU\bU^{\dag}-\brhos\rV_{\fro}\le \delta,
% \end{align}
% where the second line follows from $\bV_{\star,\perp}^{\dag}\brhos=\bV_{\star,\perp}^{\dag}\bV_{\star}\bSigma_{\star}\bV_{\star}^{\dag}=0 $. Let $\theta$ be the  principal angle between $\bV_{\star}$ and $\bV$. Then $\sin^2\theta=\lV\bV_{\star,\perp}^{\dag}\bV \rV^2\le \delta$. Therefore, by \cite[Theorem 2.1]{zhu2013angles}, we have
% \begin{equation*}
%   \sigma_r^2(\bV_{\star}^{\dag}\bV)=\cos^2\theta=1-\sin^2\theta\ge 1-\delta.
% \end{equation*} 
\end{proof}
We then obtain the following estimates, which include a perturbation bound as well as the regularity and smoothness conditions for the loss $\ell_t(\bU)$:
\begin{lemma}\label{lemma:boundSG}
For all $\bU\in \mathcal{E}(\brhos,\delta\sigmas_r)$,  under event \eqref{ineq:boundzt} it holds for all $t\ge 0$ that
    \begin{align}
\lV\nabla_{\bU}\ell_t(\bU)\rV_{\fro}^2&\le 2 B^2\left(2r\lV\bU\bU^{\dag}-\brhos\rV_{\fro}^2+ \frac{\varepsilon_0^2}{d}\right)r(\kappa+\delta)\sigmas_r, \label{ineq:boundgd} \\
   \Re\l \nabla  f(\bU),\E[\nabla_{\bU}\ell_t(\bU)]\r&\ge\frac{2B}{d}  (1-\delta)^2\sigmas_r\lV\bU \bU^{\dag}- \brhos\rV_{\fro}^2,\label{ineq:boundEgd}\\
   \E[\lV\nabla_{\bU}\ell_t(\bU)\rV_{\fro}^2] 
     &\le  \left[\frac{3B\max\{d,B\}}{d^2}\lV  \bU \bU^{\dag}- \brhos\rV_{\fro}^2+\frac{2B\varepsilon_0^2}{d}\right] r(\kappa+\delta)\sigmas_r, \label{ineq:Egrad2}\\
\E[\lV\nabla_{\bU}\ell_t(\bU)\rV_{\fro}^4] 
    &\le 8 B^4\left(\frac{2re_t^4}{d}+\frac{\varepsilon_0^4}{d^2 }\right)r^2(\kappa+\delta)^2(\sigmas_r)^2.  \label{ineq:Egrad4}
\end{align}
\end{lemma}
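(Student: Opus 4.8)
\textbf{Proof plan for Lemma~\ref{lemma:boundSG}.}

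The plan is to prove the four bounds in order, reusing the structural identities from the preliminary lemmas. Throughout, write $\bm{R}:=\bU\bU^\dag-\brhos$ for the residual, so $e_t=\lV\bm{R}\rV_\fro$, and recall the explicit form of the stochastic gradient
\begin{align*}
  \nabla_{\bU}\ell_t(\bU)=\sum_{k=1}^B\big[\Tr(\bm{A}_{t,k}\bU\bU^\dag)-y_{t,k}\big]\bm{A}_{t,k}\bU
  =\sum_{k=1}^B\big[\Tr(\bm{A}_{t,k}\bm{R})-z_{t,k}\big]\bm{A}_{t,k}\bU,
\end{align*}
using $y_{t,k}=\Tr(\bm{A}_{t,k}\brhos)+z_{t,k}$. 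The two facts I will lean on repeatedly are: (i) $\lV\bm{A}_{t,k}\bU\rV_\fro\le\lV\bm{A}_{t,k}\rV\,\lV\bU\rV_\fro=\lV\bU\rV_\fro$ since Pauli operators are unitary, together with $\lV\bU\rV_\fro^2=\Tr(\bU\bU^\dag)=\Tr(\brhos+\bm{R})\le r+\sqrt{r}\,e_t$, and (ii) the orthogonality relation $\frac1d\sum_{i=1}^{d^2}\langle\bm{X},\bW_i\rangle\bW_i=\bm{X}$ (equivalently, for $\bm{A}_t$ uniform on $\mathbb{W}$, $\mathbb E[\,\Tr(\bm{A}_t\bm{X})\bm{A}_t\,]=\frac1d\bm{X}$ and $\mathbb E[\Tr(\bm{A}_t\bm{X})^2]=\frac1d\lV\bm{X}\rV_\fro^2$), from which expectations of the quadratic-in-$\bm A_t$ terms collapse to clean Frobenius norms.

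\textbf{Deterministic bound \eqref{ineq:boundgd}.} Apply the triangle inequality and Cauchy--Schwarz in the sum over $k$: $\lV\nabla_{\bU}\ell_t(\bU)\rV_\fro^2\le B\sum_{k=1}^B\big(\Tr(\bm{A}_{t,k}\bm R)-z_{t,k}\big)^2\lV\bm{A}_{t,k}\bU\rV_\fro^2$. Bound $\lV\bm A_{t,k}\bU\rV_\fro^2\le\lV\bU\rV^2\cdot r\le r(\kappa+\delta)\sigmas_r$ by \eqref{ineq:Uopup}, use $|\Tr(\bm{A}_{t,k}\bm R)|\le\sqrt{d}\lVert \bm R\rVert_{\fro}/\sqrt d$... more carefully, $|\Tr(\bm A_{t,k}\bm R)|\le \lV\bm A_{t,k}\rV_\fro\lV\bm R\rV_\fro=\sqrt d\,e_t$ is too lossy; instead note we only need the crude bound $\big(\Tr(\bm A_{t,k}\bm R)-z_{t,k}\big)^2\le 2\Tr(\bm A_{t,k}\bm R)^2+2z_{t,k}^2$ and then I suspect the intended route keeps $\Tr(\bm A_{t,k}\bm R)^2\le \lV \bm R\rV_\fro^2$ is false in general — rather one uses $|\Tr(\bm A_{t,k}\bm R)|\le \lV\bm A_{t,k}\rV\,\lV\bm R\rV_*\le\sqrt{2r}\,e_t$ (since $\bm R$ has rank $\le 2r$), giving $\Tr(\bm A_{t,k}\bm R)^2\le 2r\,e_t^2$. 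Combined with $z_{t,k}^2\le\varepsilon_0^2/d$ from \eqref{ineq:boundzt}, this yields $\big(\Tr(\bm A_{t,k}\bm R)-z_{t,k}\big)^2\le 4r e_t^2+2\varepsilon_0^2/d$, and summing over $k$ and absorbing constants gives \eqref{ineq:boundgd}. \textbf{Expected gradient inner product \eqref{ineq:boundEgd}.} Here the noise term drops out: $\mathbb E[z_{t,k}\bm A_{t,k}]=\bm0$ by \Cref{lemma:Eerror}, and $z_{t,k}$ is independent of $\bm A_{t,k}$'s contribution after conditioning appropriately, so $\mathbb E[\nabla_{\bU}\ell_t(\bU)]=B\,\mathbb E[\Tr(\bm A_t\bm R)\bm A_t]\bU=\tfrac{B}{d}\bm R\bU$. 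Then $\Re\langle\nabla f(\bU),\mathbb E[\nabla_{\bU}\ell_t(\bU)]\rangle=\Re\langle 2\bm R\bU,\tfrac{B}{d}\bm R\bU\rangle=\tfrac{2B}{d}\lV\bm R\bU\rV_\fro^2=\tfrac{2B}{d}\Re\langle\bm R\bU\bU^\dag,\bm R\rangle$, and \Cref{lemma:boundEGDnorm} lower-bounds this by $\tfrac{2B}{d}\cdot\tfrac12(1-\delta)^2\sigmas_r\,e_t^2$, which is exactly \eqref{ineq:boundEgd}.

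\textbf{Second and fourth moments \eqref{ineq:Egrad2}, \eqref{ineq:Egrad4}.} For the second moment, expand $\lV\nabla_{\bU}\ell_t(\bU)\rV_\fro^2=\sum_{k,k'}\langle\cdots\rangle$ and split into the $B$ diagonal terms and $B(B-1)$ off-diagonal terms; the off-diagonal terms, by independence across $k$ and $\mathbb E[z_{t,k}\bm A_{t,k}]=\bm0$, contribute $B(B-1)\lV\tfrac1d\bm R\bU\rV_\fro^2\le\tfrac{B^2}{d^2}r(\kappa+\delta)\sigmas_r e_t^2$ (using $\lV\bm R\bU\rV_\fro^2\le\lV\bU\rV^2 e_t^2$), while each diagonal term $\mathbb E[(\Tr(\bm A_t\bm R)-z_{t,k})^2\lV\bm A_t\bU\rV_\fro^2]\le \lV\bU\rV^2\,\mathbb E[2\Tr(\bm A_t\bm R)^2+2z_{t,k}^2]=\lV\bU\rV^2(\tfrac2d e_t^2+\tfrac{2\varepsilon_0^2}{d})$ using the variance identity $\mathbb E[\Tr(\bm A_t\bm R)^2]=\tfrac1d e_t^2$. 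Summing, $\mathbb E[\lV\nabla_{\bU}\ell_t\rV_\fro^2]\le r(\kappa+\delta)\sigmas_r\big(\tfrac{2B}{d}e_t^2+\tfrac{2B\varepsilon_0^2}{d}+\tfrac{B^2}{d^2}e_t^2\big)$, and bounding $\tfrac{2B}{d}+\tfrac{B^2}{d^2}\le\tfrac{3B\max\{d,B\}}{d^2}$ gives \eqref{ineq:Egrad2}. The fourth moment is analogous but messier: write $\lV\nabla_{\bU}\ell_t\rV_\fro^4\le B^3\sum_{k=1}^B(\Tr(\bm A_{t,k}\bm R)-z_{t,k})^4\lV\bU\rV^4$ by power-mean, then $(\Tr(\bm A_{t,k}\bm R)-z_{t,k})^4\le 8\Tr(\bm A_{t,k}\bm R)^4+8z_{t,k}^4$, take expectations using $\mathbb E[\Tr(\bm A_t\bm R)^4]\le\tfrac{(2r)^2}{d}e_t^4$ (combining the exact identity for $\mathbb E[\Tr(\bm A_t\bm R)^2]$ with the pointwise bound $|\Tr(\bm A_t\bm R)|\le\sqrt{2r}e_t$) and $z_{t,k}^4\le\varepsilon_0^4/d^2$, and collect constants to match \eqref{ineq:Egrad4}. \textbf{Main obstacle.} The delicate point throughout is handling $\Tr(\bm A_t\bm R)$: the naive Cauchy--Schwarz bound $|\Tr(\bm A_t\bm R)|\le\sqrt d\,e_t$ is far too weak and would ruin the dimension dependence. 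The fix — using that $\bm R$ has rank at most $2r$ so $\lV\bm R\rV_*\le\sqrt{2r}\,\lV\bm R\rV_\fro$ and hence $|\Tr(\bm A_t\bm R)|\le\lV\bm A_t\rV\lV\bm R\rV_*\le\sqrt{2r}\,e_t$ — must be applied consistently, and the interplay between this pointwise bound and the exact second-moment identity $\mathbb E[\Tr(\bm A_t\bm R)^2]=e_t^2/d$ is what produces the right $r/d$-type factors in \eqref{ineq:Egrad2}--\eqref{ineq:Egrad4}. Keeping the noise contributions cleanly separated (mean-zero and independent across the batch) so that they neither corrupt the regularity bound \eqref{ineq:boundEgd} nor blow up the moment bounds is the other bookkeeping hurdle.
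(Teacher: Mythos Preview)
Your plan is essentially identical to the paper's proof: Cauchy--Schwarz plus the rank bound $|\Tr(\bm A_{t,k}\bm R)|\le\lV\bm A_{t,k}\rV\lV\bm R\rV_*\le\sqrt{2r}\,e_t$ for \eqref{ineq:boundgd}; the identity $\mathbb E[\nabla_{\bU}\ell_t(\bU)]=\tfrac{B}{d}\bm R\bU$ combined with \Cref{lemma:boundEGDnorm} for \eqref{ineq:boundEgd}; the diagonal/off-diagonal split using $\mathbb E[\Tr(\bm A_t\bm R)^2]=e_t^2/d$ for \eqref{ineq:Egrad2}; and the hybrid pointwise/expectation bound on the fourth power for \eqref{ineq:Egrad4}.

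Two small slips to fix. First, in the paper's convention $\nabla f(\bU)=4\bm R\bU$, not $2\bm R\bU$ (expand $f(\bU+\bV)-f(\bU)=4\Re\Tr(\bm R\bU\bV^\dag)+O(\lV\bV\rV^2)$, or see the linearization used in \Cref{lemma:Lip}); with your factor the inner product in \eqref{ineq:boundEgd} comes out to only $\tfrac{B}{d}(1-\delta)^2\sigmas_r e_t^2$, half of what is claimed. Second, combining $|\Tr(\bm A_t\bm R)|\le\sqrt{2r}\,e_t$ with $\mathbb E[\Tr(\bm A_t\bm R)^2]=e_t^2/d$ gives $\mathbb E[\Tr(\bm A_t\bm R)^4]\le\tfrac{2r}{d}e_t^4$, not $\tfrac{(2r)^2}{d}e_t^4$; the sharper bound is exactly what is needed to hit the constant in \eqref{ineq:Egrad4}.
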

\begin{proof}
We first prove \eqref{ineq:boundgd}. For all $\bU\in \mathcal{E}(\brho_\star,\delta\sigmas_r)$, we have
\begin{align}\label{ineq:proveboundgd}
   \lV\nabla_{\bU}\ell_t(\bU)\rV_{\fro}^2
   &=\lV\sum_{k=1}^B \left[ \Tr(\bm{A}_{t,k} \bU \bU^{\dag})-y_{t,k} \right]\bm{A}_{t,k}\bU\rV_{\fro}^2\cr
   &\le \left(\sum_{k=1}^B \lv \Tr(\bm{A}_{t,k} \bU \bU^{\dag})-y_{t,k} \rv\lV\bm{A}_{t,k}\bU\rV_{\fro}\right)^2\cr
   &\le \left(\sum_{k=1}^B \lv \Tr(\bm{A}_{t,k}) \bU \bU^{\dag})-y_{t,k} \rv^2\right)\left(\sum_{k=1}^B\lV\bm{A}_{t,k}\bU\rV_{\fro}^2\right)\cr
   &\le 2\left( \sum_{k=1}^B \lv\l\bm{A}_{t,k}, \bU \bU^{\dag}-\brhos\r\rv^2+ \sum_{k=1}^B z_{t,k}^2 \right)\left(\sum_{k=1}^B\lV \bA_{t,k}\rV^2\lV\bU\bU^{\dag}\rV_{*}\right)\cr
   &\le 2\left(\sum_{k=1}^B\lV \bA_{t,k}\rV^2\lV\bU\bU^{\dag}-\brhos\rV_{*}^2+ \sum_{k=1}^B \frac{\varepsilon_0^2}{d}\right)B\lV\bU\bU^{\dag}\rV_{*}  \cr
   &\le 2 B^2\left(2r\lV\bU\bU^{\dag}-\brhos\rV_{\fro}^2+ \frac{\varepsilon_0^2}{d}\right)r(\kappa+\delta)\sigmas_r, 
\end{align}
where the second inequality follows from the Cauchy-Schwarz inequality, the third inequality follows from the inequality $\lV \bA_t\bU\rV_{\fro}^2=\l \bA_t\bA_t^{\dag},\bU\bU^{\dag} \r\le \lV\bA_t\rV^2\lV\bU\bU^{\dag}\rV_{*}$, the fourth inequality follows from the fact that $\bA_t$ is drawn from the set of standard Pauli matrices and thus $\lV\bA_t\rV\le 1$ (see \cite{liu2011universal}), the last inequality follows from~\eqref{ineq:Uopup} and the fact that 
\begin{align}
    \lV\bU\bU^{\dag}-\brhos\rV_{*}\le \sqrt{2r}\lV\bU\bU^{\dag}-\brhos\rV_{\fro},~\lV\bU\bU^{\dag}\rV_{*}\le r\lV\bU\rV^2.
\end{align}
Next, we prove \eqref{ineq:boundEgd}. For each $\bA_{t,k}$, $k\in [B]$ we have 
$$\E\left[\left( \Tr(\bm{A}_{t,k} \bU \bU^{\dag})-\Tr(\bW_i \brhos) \right)\bm{A}_{t,k}\bU\right]=\frac{1}{d^2} \sum_{i=1}^{d^2} \left[ \Tr(\bW_i \bU \bU^{\dag})-\Tr(\bW_i \brhos)\right]\bW_i\bU.$$
Thus,  by Lemma~\ref{lemma:Eerror} we have
  \begin{align}\label{eq:Egrad}
      \E[\nabla_{\bU}\ell_t(\bU)] 
      &=\frac{B}{d^2} \sum_{i=1}^{d^2} \left[ \Tr(\bW_i \bU \bU^{\dag})-\Tr(\bW_i \brhos)\right]\bW_i\bU - \sum_{k=1}^{B} \E[ z_{t,k} \bm{A}_{t,k}\bU]\cr
      &=\frac{B}{d}\left(\frac{1}{d}\sum_{i=1}^{d^2}\l \bU \bU^{\dag}- \brhos,\bW_i\r \bW_i \right)\bU-\sum_{k=1}^{B} \E[ z_{t,k} \bm{A}_{t,k}]\bU\cr
      &=\frac{B}{d}\left( \bU \bU^{\dag}- \brhos\right)\bU,
  \end{align}
and thus by \Cref{lemma:boundEGDnorm} we have
\begin{align*}
  \Re \l \nabla f(\bU),\E[\nabla_{\bU}\ell_t(\bU)]\r
   =&~ \Re\l 4\left( \bU \bU^{\dag}- \brhos\right)\bU,\frac{B}{d}\left( \bU \bU^{\dag}- \brhos\right)\bU\r \cr
   =&~\frac{4B}{2d}\Re\l \left( \bU \bU^{\dag}- \brhos\right)\bU\bU^{\dag},\bU \bU^{\dag}- \brhos \r\cr
    %\ge&~ \frac{B}{d}  \sigma_r(\bU\bU^{\dag})\lV\bU \bU^{\dag}- \brhos\rV_{\fro}^2\cr
   \ge&~ \frac{2B}{d}  (1-\delta)^2\sigmas_r\lV\bU \bU^{\dag}- \brhos\rV_{\fro}^2.
 \end{align*}
Now we prove \eqref{ineq:Egrad2}. Similar to \eqref{ineq:proveboundgd}, noticing that
for $\bU\in \mathcal{E}(\brho_\star,\delta)$, we have
\begin{align} \label{ineq:boundEgd2}
     &~\E\lV  \left[ \Tr(\bA_{t,k} \bU \bU^{\dag})-\Tr(\bA_{t,k} \brhos) -z_{t,k}\right]\bA_{t,k}\bU\rV_{\fro}^2\cr
      \le&~ \E \left[ \left(2\l\bA_{t,k}, \bU \bU^{\dag}- \brhos\r^2 +2 z_{t,k}^2\right)\lV\bA_{t,k}\bU\rV_{\fro}^2\right]\cr
      \le &~\left[\frac{2}{d^2}\sum_{i=1}^{d^2} \lv\l\bW_i, \bU \bU^{\dag}-\brhos\r \rv^2+\frac{2\varepsilon_0^2}{d}\right]\lV\bU\bU^{\dag}\rV_{*}\cr
      \le &~\left[\frac{2}{d}\sum_{i=1}^{d^2} \frac{1}{d}\lv\l\bW_i, \bU \bU^{\dag}-\brhos\r \rv^2+\frac{2\varepsilon_0^2}{d}\right]r\lV\bU\rV^2\cr
      \le&~ \left[\frac{2}{d}\lV  \bU \bU^{\dag}- \brhos\rV_{\fro}^2+\frac{2\varepsilon_0^2}{d}\right]r(\kappa+\delta)\sigmas_r,~ k\in [B] 
  \end{align}
where the last inequality follows from~\eqref{ineq:Uopup}. Thus, we have
 \begin{align} \label{ineq:sumEgrad2}
      &~\E[\lV\nabla_{\bU}\ell_t(\bU)\rV_{\fro}^2] \cr
      = &~\E\lV  \sum_{k=1}^B \left[ \Tr(\bm{A}_{t,k} \bU \bU^{\dag})-y_{t,k} \right]\bm{A}_{t,k}\bU\rV_{\fro}^2\cr
      = &~ \sum_{k=1}^B\E\lV  \left[ \Tr(\bm{A}_{t,k} \bU \bU^{\dag})-y_{t,k} \right]\bm{A}_{t,k}\bU\rV_{\fro}^2\cr
      &~+\sum_{k_1\neq k_2}\l \E\left[ (\Tr(\bm{A}_{t,k_1} \bU \bU^{\dag})-y_{t,k_1}) \bm{A}_{t,k_1}\bU\right],\E\left[ (\Tr(\bm{A}_{t,k_2} \bU \bU^{\dag})-y_{t,k_2}) \bm{A}_{t,k_2}\bU\right] \r\cr
      =&~  \sum_{k=1}^B\E\lV  \left[ \Tr(\bm{A}_{t,k} \bU \bU^{\dag})-\Tr(\bA_{t,k} \brhos) -z_{t,k} \right]\bm{A}_{t,k}\bU\rV_{\fro}^2\cr
      &~+\sum_{k_1\neq k_2}\frac{1}{d^2}\l \left( \bU \bU^{\dag}- \brhos\right)\bU,\left( \bU \bU^{\dag}- \brhos\right)\bU\r\cr
       \le&~ B\left[\frac{2}{d}\lV  \bU \bU^{\dag}- \brhos\rV_{\fro}^2+\frac{2\varepsilon_0^2}{d}\right] r(\kappa+\delta)\sigmas_r+\frac{B^2-B}{d^2}\lV  \bU \bU^{\dag}- \brhos\rV_{\fro}^2(\kappa+\delta)\sigmas_r\cr
    = &~ \left[\frac{2dB+B^2-B}{d^2}\lV  \bU \bU^{\dag}- \brhos\rV_{\fro}^2+\frac{2B\varepsilon_0^2}{d}\right] r(\kappa+\delta)\sigmas_r\cr
      \le&~  \left[\frac{3B\max\{d,B\}}{d^2}\lV  \bU \bU^{\dag}- \brhos\rV_{\fro}^2+\frac{2B\varepsilon_0^2}{d}\right] r(\kappa+\delta)\sigmas_r,
  \end{align}
    where the first inequality follows from \eqref{ineq:Uopup} and~\eqref{ineq:boundEgd2}. 
    
    Finally, we prove \eqref{ineq:Egrad4}. By \eqref{ineq:Egrad2} and Cauchy-Schwarz inequality we have
\begin{align*}
\E[\lV\nabla_{\bU}\ell_t(\bU)\rV_{\fro}^4] 
      &= \E\lV \sum_{k=1}^B \left[ \Tr(\bA_{t,k} \bU \bU^{\dag})-\Tr(\bA_{t,k} \brhos) -z_{t,k}\right]\bA_{t,k}\bU\rV_{\fro}^4\cr
      &\le\E \left[ \left(\sum_{k=1}^B\left( \Tr(\bA_{t,k} \bU \bU^{\dag})-\Tr(\bA_{t,k} \brhos) -z_{t,k}\right)^2\right)\left(\sum_{k=1}^B\lV\bA_{t,k}\bU\rV_{\fro}^2\right)\right]^2\cr
      &\le \E \left[ \sum_{k=1}^B\left(2\l\bA_{t,k}, \bU \bU^{\dag}- \brhos\r^2 +\frac{2\varepsilon_0^2}{d}\right) B\lV\bU\bU^{\dag}\rV_{*}\right]^2\cr
      &\le 8B \E \left[ \sum_{k=1}^B\left(\l\bA_{t,k}, \bU \bU^{\dag}- \brhos\r^4 +\frac{\varepsilon_0^4}{d^2}\right) \right] B^2\lV\bU\bU^{\dag}\rV_{*}^2\cr
      &\le 8B^3 \E \left[ \sum_{k=1}^B\left(2r e_t^2\l\bA_{t,k}, \bU \bU^{\dag}- \brhos\r^2 +\frac{\varepsilon_0^4}{d^2}\right) \right] \lV\bU\bU^{\dag}\rV_{*}^2\cr
      &\le 8B^4\left[\frac{2re_t^2}{d^2}\sum_{i=1}^{d^2} \lv\l\bW_i, \bU \bU^{\dag}-\brhos\r \rv^2+\frac{\varepsilon_0^4}{d^2}\right]r^2\lV\bU_t\rV^4\cr
      &\le 8 B^4\left(\frac{2re_t^4}{d}+\frac{\varepsilon_0^4}{d^2 }\right)r^2(\kappa+\delta)^2(\sigmas_r)^2,
  \end{align*}
where the fourth inequality follows from the fact that
$$\l\bA_{t,k}, \bU \bU^{\dag}- \brhos\r^4 \le \lV\bA_{t,k} \rV^2\lV \bU \bU^{\dag}- \brhos\rV_{*}^2\l\bA_{t,k}, \bU \bU^{\dag}- \brhos\r^2.$$
This completes the proof.
\end{proof}
\subsection{Proof of \Cref{thm:Econtraction}}\label{proofofEcontraction}
The estimations in \Cref{subsec:keylemmas} allow us to prove the local contraction property in expectation, and we present the proof of \Cref{thm:Econtraction} in the following.
\begin{proof}
We mainly use the local smoothness of $f(\bU)=\lV \bU\bU^{\dag}-\brhos\rV_\fro^2$ to derive the local contraction property. 
Let $C(\eta):=\eta B\sqrt{2\left(2r\delta^2(\sigmas_r)^2+ \frac{\varepsilon_0^2}{d}\right)(\kappa+\delta)r}$ and
\begin{align}\label{def:CLC}
 L_{C,\eta}:= (4\kappa+6\delta +2C(\eta)\sqrt{\kappa+\delta}+C^2(\eta))\sigmas_r.    
\end{align}
 For $\bU_t\in\mathcal{E}(\brhos,\delta\sigmas_r)$, by \eqref{ineq:boundgd} we know
$$\lV\eta\nabla_{\bU}\ell_{t+1}(\bU_t)\rV_{\fro}\le  C(\eta)\sqrt{\sigmas_r}.$$
 Then, by the local smoothness of $f(\bU)$ (cf.~Lemma~\ref{lemma:Lip}) and the update rule~\eqref{eq:minibatchGD}, we have
  \begin{align*}
      \E[e_{t+1}^2|\mathcal{F}_t]&=\E\left[\lV \bU_{t+1}\bU_{t+1}^{\dag}-\brhos\rV_\fro^2\right]\cr
      &=\E\left[\lV (\bU_{t}- \eta \nabla_{\bU}\ell_{t+1}(\bU_t))(\bU_{t}- \eta \nabla_{\bU}\ell_{t+1}(\bU_t))^{\dag}-\brhos\rV_\fro^2\right]\cr
      &\le \lV \bU_t\bU_t^{\dag}-\brhos\rV_\fro^2-\eta \Re\l \nabla f(\bU_t),\E[\nabla_{\bU}\ell_{t+1}(\bU_t)]\r+ L_{C,\eta}\eta^2\E\left[\lV \nabla_{\bU}\ell_{t+1}(\bU_t)\rV_{\fro}^2\right]
  \end{align*}
Let $\delta=\frac{1}{3}$ and $B\le d$. Recall that $\lV\brhos\rV=1$,  $\sigmas_r=\frac{1}{\kappa}$. Then, provided $\eta \le \frac{1}{24\kappa r L_{C,\eta} }$,  by Lemma~\ref{lemma:boundSG} we have 
\begin{align*}
      \E[e_{t+1}^2|\mathcal{F}_t]
      \le&~ \lV \bU_t\bU_t^{\dag}-\brhos\rV_\fro^2- \frac{2\eta B}{d}  (1-\delta)^2\sigmas_r\lV\bU_t \bU_t^{\dag}- \brhos\rV_{\fro}^2 +\frac{3\eta^2 L_{C,\eta}B r(\kappa+\delta)\sigmas_r}{d}\lV  \bU_t \bU_t^{\dag}- \brhos\rV_{\fro}^2\cr
&~+\frac{2\eta^2B r L_{C,\eta}\varepsilon_0^2(\kappa+\delta)\sigmas_r}{d}  \le~(1-\frac{\eta B}{2\kappa d})e_t^2+\frac{\eta B\varepsilon_0^2}{8\kappa d}.
  \end{align*}
By the definition of $L_{C,\eta}$ in \eqref{def:CLC}, we have $L_{C,\eta}\le 7$ provided $\eta \le \frac{\kappa}{4B r}$. Therefore, for $B\le 40\kappa^2$, we let $\eta \le \frac{1}{168\kappa r}$, then $\eta \le \frac{\kappa}{4B r}$ and $\eta\le \frac{1}{24\kappa r L_{C,\eta} }$ hold simultaneously. For $B\ge 40\kappa^2$, we let $\eta \le \frac{\kappa}{5B r}$,  then $\eta\le \frac{1}{200\kappa r}\le \frac{1}{24\kappa r L_{C,\eta} }$ also holds. This completes the proof.
\end{proof}   
\medskip
% Acknowledgments---Will not appear in anonymized version
\subsection{Proof of \Cref{thm:propconvergence}}\label{subsec:proofofprobconvergence}
The proof of \Cref{thm:propconvergence} follows from the standard Azuma-Bernstein inequality. Define the event
\begin{align}
    \mathfrak{E}_t := \left\{ e_{\tau}^2 \le \left(1 - \frac{\eta B}{4d\kappa}\right)^{\tau} 2e_0^2 + \left[ 1 - \left(1 - \frac{\eta B}{4d\kappa}\right)^{\tau} \right] \frac{\varepsilon_0^2 }{2}, \quad \forall \tau \le t \right\}.
\end{align}
First, we have the following results, Lemma~\ref{lemma:supermart} and Lemma~\ref{lemma:boundvar}, for supermartingale, which are crucial for the proof of \Cref{thm:propconvergence} due to the Azuma-Bernstein inequality.

\begin{lemma}\label{lemma:supermart}
Let  $\eta \le \frac{c_1}{\kappa r}$ and $B\le 10\kappa^2$. Define
\begin{align}
    F_t:= \left(1-\frac{\eta B}{2d\kappa}\right)^{-t}\max\left\{e_t^2 \cdot 1_{\mathfrak{E}_{t-1}}-\frac{\varepsilon_0^2}{4},0\right\},~t=0,1,2,\cdots.
\end{align}
Then, $F_t$ is a supermartingale, i.e., 
\begin{align*}
    \E\left[F_{t+1}|\mathcal{F}_t\right]&\le F_t, ~t=0,1,2,\cdots.
\end{align*} 
\end{lemma}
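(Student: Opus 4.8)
The plan is to verify the one-step inequality $\E[F_{t+1}\mid\mathcal{F}_t]\le F_t$ pointwise, splitting on whether the good event $\mathfrak{E}_t$ occurs and invoking the one-step contraction of \Cref{thm:Econtraction} on the event where it does.

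First I would record the structural facts behind the definition. Since $\bU_\tau=\bU_{\tau-1}-\eta\nabla_{\bU}\ell_\tau(\bU_{\tau-1})$, induction shows that $\bU_\tau$ — hence $e_\tau^2=\|\bU_\tau\bU_\tau^{\dagger}-\brhos\|_{\fro}^2$ — is $\mathcal{F}_\tau$-measurable, so $\mathfrak{E}_t\in\mathcal{F}_t$, $\mathfrak{E}_{t-1}\in\mathcal{F}_{t-1}\subseteq\mathcal{F}_t$, and since $\mathfrak{E}_t$ constrains $e_\tau$ for all $\tau\le t$ one has the nesting $\mathfrak{E}_t\subseteq\mathfrak{E}_{t-1}$. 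As $\max\{x-c,0\}=0$ for $x\le 0$, the indicator factors through the truncation, giving
\begin{align*}
  \E[F_{t+1}\mid\mathcal{F}_t]=\Big(1-\frac{\eta B}{2d\kappa}\Big)^{-(t+1)}1_{\mathfrak{E}_t}\;\E\!\left[\Big(e_{t+1}^2-\frac{\varepsilon_0^2}{4}\Big)^{\!+}\,\Big|\,\mathcal{F}_t\right].
\end{align*}
On $\mathfrak{E}_t^c$ the right-hand side is $0\le F_t$, which also covers the case that $\mathfrak{E}_{t-1}$ fails, since then $F_t=0$ and $1_{\mathfrak{E}_t}=0$ by nesting.

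On $\mathfrak{E}_t$, the bound defining $\mathfrak{E}_t$ controls $e_t^2$, and together with $e_0\le\tfrac13\sigmas_r$ and $\varepsilon_0$ sufficiently small (relative to $\sigmas_r$) it keeps $\bU_t\in\mathcal{E}(\brhos,\tfrac{\sigmas_r}{3})$, so \Cref{thm:Econtraction} gives $\E[e_{t+1}^2\mid\mathcal{F}_t]\le(1-\tfrac{\eta B}{2\kappa d})e_t^2+\tfrac{\eta B\varepsilon_0^2}{8\kappa d}$. Setting $a:=\tfrac{\eta B}{2\kappa d}$ and $c:=\tfrac{\varepsilon_0^2}{4}$, the stationary noise term is precisely $ac$, so this reads as the shifted contraction $\E[e_{t+1}^2-c\mid\mathcal{F}_t]\le(1-a)(e_t^2-c)$; it then remains to transfer it to the truncated quantity, i.e. to prove
\begin{align*}
  \E\!\left[(e_{t+1}^2-c)^{+}\mid\mathcal{F}_t\right]\le(1-a)(e_t^2-c)^{+},
\end{align*}
after which multiplying by $(1-a)^{-(t+1)}1_{\mathfrak{E}_t}$ and using $1_{\mathfrak{E}_t}\le 1_{\mathfrak{E}_{t-1}}$ and $(e_t^2-c)^{+}\ge 0$ yields $\E[F_{t+1}\mid\mathcal{F}_t]\le F_t$. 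This transfer is where I expect the difficulty to sit: for $e_t^2>c$ the shifted contraction already suffices, but for $e_t^2\le c$ (where the target is $0$) one must preclude an upward excursion of $e_{t+1}^2$ past the truncation level, and here the deterministic one-step bound \eqref{ineq:boundgd} of \Cref{lemma:boundSG} together with the restrictions $\eta\le\tfrac{c_1}{\kappa r}$ and $B\le 10\kappa^2$ enter — the smaller batch bound (against $40\kappa^2$ in \Cref{thm:Econtraction}) furnishing the needed slack. Assembling the three cases proves the supermartingale property, $F_t$ being nonnegative and integrable (bounded on $\mathfrak{E}_{t-1}$, zero off it).
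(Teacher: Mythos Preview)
Your plan mirrors the paper's short argument: invoke the one-step contraction of \Cref{thm:Econtraction} on $\mathfrak{E}_t$, rewrite it as the shifted inequality $\E[e_{t+1}^2-c\mid\mathcal{F}_t]\le(1-a)(e_t^2-c)$ with $a=\tfrac{\eta B}{2\kappa d}$ and $c=\tfrac{\varepsilon_0^2}{4}$ (so that the noise term in \Cref{thm:Econtraction} is exactly $ac$), absorb the factor $(1-a)$ into the growing prefactor, and then use $1_{\mathfrak{E}_t}\le 1_{\mathfrak{E}_{t-1}}$. The only delicate step is the one you isolate: pushing the conditional expectation through the positive part.

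That step, however, does not go through as you describe. Your claim that ``for $e_t^2>c$ the shifted contraction already suffices'' has the inequality the wrong way: since $(x)^+\ge x$ pointwise, one has $\E[(e_{t+1}^2-c)^+\mid\mathcal{F}_t]\ge\E[e_{t+1}^2-c\mid\mathcal{F}_t]$, so the shifted contraction bounds only the \emph{smaller} quantity. Concretely, even when $e_t^2>c$ the random step can land $e_{t+1}^2<c$ with positive probability, and $\E[(c-e_{t+1}^2)^+]$ on that event is exactly the excess of the truncated over the untruncated expectation. Your deterministic containment for $e_t^2\le c$ also fails under the stated hypotheses: from the Lipschitz estimate and the pointwise bounds that feed \Cref{lemma:boundvar}, the one-step change $|e_{t+1}^2-e_t^2|$ is of order $\eta\kappa r\,(e_t^2+\varepsilon_0^2/d)$, which for $\eta=\Theta((\kappa r)^{-1})$ and $e_t^2$ near $c$ is $\Theta(c)$, so $e_{t+1}^2$ can overshoot $c$. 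The paper's own proof glosses over the same point---its first inequality effectively asserts $\E[\max\{X-c,0\}\mid\mathcal{F}_t]\le\max\{\E[X\mid\mathcal{F}_t]-c,0\}$, the reverse of Jensen's inequality for the convex map $x\mapsto(x-c)^+$---so you are not missing an ingredient the paper supplies. A clean repair is to drop the positive part from the definition of $F_t$: with $F_t=(1-a)^{-t}\bigl(e_t^2\,1_{\mathfrak{E}_{t-1}}-c\bigr)$ the shifted contraction gives the supermartingale property directly in each of the three cases $\mathfrak{E}_t$, $\mathfrak{E}_{t-1}\setminus\mathfrak{E}_t$, $\mathfrak{E}_{t-1}^c$, and neither the increment/variance bounds of \Cref{lemma:boundvar} (whose proof already reduces to the untruncated difference via $|\max\{x,0\}-\max\{y,0\}|\le|x-y|$) nor the one-sided Azuma--Bernstein application in \Cref{lemma:propconverge} require $F_t\ge 0$.
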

\begin{proof}
By the definition of $F_{t}$ we have
\begin{align}
    \E\left[F_{t+1}|\mathcal{F}_t\right]&= \left(1-\frac{\eta B}{2d\kappa}\right)^{-t-1}\E\left[\max\left\{e_{t+1}^2 \cdot 1_{\mathfrak{E}_t}-\frac{\varepsilon_0^2}{4},0\right\}\Bigg|\mathcal{F}_t\right]\cr
    &\le \left(1-\frac{\eta B}{2d\kappa}\right)^{-t-1}\max\left\{ \left(1-\frac{\eta B}{2d\kappa}\right)e_{t}^2\cdot 1_{\mathfrak{E}_t}-\left(\frac{\varepsilon_0^2}{4}-\frac{\eta B\varepsilon_0^2}{8\kappa d}\right),0\right\}\cr
    &\le \left(1-\frac{\eta B}{2d\kappa}\right)^{-t}\max\left\{  e_{t}^2\cdot 1_{\mathfrak{E}_{t-1}} -\frac{\varepsilon_0^2}{4},0\right\}=F_{t},
\end{align}
where the first inequality follows from the \Cref{thm:Econtraction}, the second inequality from $1_{\mathfrak{E}_t}\le 1_{\mathfrak{E}_{t-1}}$.
\end{proof}

\begin{lemma}\label{lemma:boundvar}
 Let  $\eta \le \frac{c_1}{\kappa r}$ and $B\le \min\{40\kappa^{2/3},d\}$, we have
\begin{align*}
    \lv\E\left[F_{t}|\mathcal{F}_{t-1}\right]- F_t\rv &\le c_3\eta \kappa r\left(1-\frac{\eta B}{2d\kappa}\right)^{-t}\left( \left(1- \frac{\eta B}{4d\kappa}\right)^{t}2e_0^2+\left[1-\left(1-\frac{\eta B}{4d\kappa}\right)^{t}\right]\varepsilon_0^2\right),\cr
    \mathrm{Var}\left[F_{t}|\mathcal{F}_{t-1}\right]&\le \frac{c_4\eta^2B r}{d}\left(1-\frac{\eta B}{2d\kappa}\right)^{-2t}\left( \left(1-\frac{\eta B}{4d\kappa}\right)^{2t}2e_0^4+\left[1-\left(1-\frac{\eta B}{4d\kappa}\right)^{t}\right]^2 \varepsilon_0^4 \right),
\end{align*} 
for $t=0,1,2,\cdots$, where $c_3$, $c_4$ are positive numerical constants.
\end{lemma}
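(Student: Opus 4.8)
The plan is to transfer both bounds to the conditional first-absolute-moment and conditional variance of the raw error $e_t^2$ about its conditional mean, and then to exploit the one-step expansion of $e_t^2$ provided by \Cref{lemma:Lip} together with the moment estimates of \Cref{lemma:boundSG}. Write $\beta:=1-\frac{\eta B}{2d\kappa}$ and $\gamma:=1-\frac{\eta B}{4d\kappa}$, so that $\beta,\gamma\in(\tfrac12,1)$, $\gamma^{-1}\le 2$, and $1-\gamma^{t-1}\le 1-\gamma^{t}$. Since $\mathfrak{E}_{t-1}$ is $\mathcal{F}_{t-1}$-measurable, on its complement $F_t=\E[F_t|\mathcal{F}_{t-1}]=0$, so we may argue on $\mathfrak{E}_{t-1}$; there $x\mapsto\max\{x-\varepsilon_0^2/4,0\}$ is $1$-Lipschitz and nondecreasing, which gives $|F_t-\E[F_t|\mathcal{F}_{t-1}]|\le \beta^{-t}\big(|e_t^2-\E[e_t^2|\mathcal{F}_{t-1}]|+\E[\,|e_t^2-\E[e_t^2|\mathcal{F}_{t-1}]|\mid\mathcal{F}_{t-1}]\big)$ and $\mathrm{Var}[F_t|\mathcal{F}_{t-1}]\le\beta^{-2t}\,\mathrm{Var}[e_t^2|\mathcal{F}_{t-1}]$. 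So it remains to control the conditional deviation and variance of $e_t^2$.

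For the one-step expansion, set $W_t:=\nabla_{\bU}\ell_t(\bU_{t-1})-\E[\nabla_{\bU}\ell_t(\bU_{t-1})|\mathcal{F}_{t-1}]$. Because $\bU_{t-1}\in\mathcal{E}(\brhos,\tfrac{\sigmas_r}{3})$ and $\|\eta\nabla_{\bU}\ell_t(\bU_{t-1})\|_\fro\le C(\eta)\sqrt{\sigmas_r}$ by \eqref{ineq:boundgd}, \Cref{lemma:Lip} gives $e_t^2-\E[e_t^2|\mathcal{F}_{t-1}]=-\eta\,\Re\langle\nabla f(\bU_{t-1}),W_t\rangle+(R_t-\E[R_t|\mathcal{F}_{t-1}])$ with $|R_t|\le L_{C,\eta}\eta^2\|\nabla_{\bU}\ell_t(\bU_{t-1})\|_\fro^2$, and as in the proof of \Cref{thm:Econtraction} one has $L_{C,\eta}\le 7$ once $\eta\le\frac{\kappa}{4Br}$, which holds since $\eta\le\frac{c_1}{\kappa r}$ and $B\le 40\kappa^{2/3}$. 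Now insert the moment bounds of \Cref{lemma:boundSG}: for the linear term, $\|\nabla f(\bU_{t-1})\|_\fro=4\|(\bU_{t-1}\bU_{t-1}^{\dagger}-\brhos)\bU_{t-1}\|_\fro\le 4e_{t-1}\sqrt{(\kappa+\tfrac13)\sigmas_r}$ combined with the almost-sure bound \eqref{ineq:boundgd} on $\|W_t\|_\fro$ handles the deviation, while the second-moment bound \eqref{ineq:Egrad2} (valid for $B\le d$) handles the variance; for the remainder, \eqref{ineq:boundgd} again bounds $|R_t|$ almost surely and \eqref{ineq:Egrad4} bounds its variance. Using $\sigmas_r=1/\kappa$, $\delta=\tfrac13$ (so $(\kappa+\delta)\sigmas_r=O(1)$), $\eta\le c_1/(\kappa r)$, Cauchy--Schwarz and AM--GM, this collapses to $|e_t^2-\E[e_t^2|\mathcal{F}_{t-1}]|\lesssim(\eta\kappa r+\eta^2B^2r^2)(e_{t-1}^2+\varepsilon_0^2)$ and $\mathrm{Var}[e_t^2|\mathcal{F}_{t-1}]\lesssim\frac{\eta^2Br}{d}(1+\eta^2B^3r^2)(e_{t-1}^4+\varepsilon_0^4)$. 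The restriction $B\le\min\{40\kappa^{2/3},d\}$ enters exactly here: with $\eta\le c_1/(\kappa r)$ it makes $\eta^2B^2r^2\le c_1B^2r/\kappa\lesssim\kappa^{1/3}r\le\kappa r$ and $\eta^2B^3r^2\le c_1^2B^3/\kappa^2\lesssim 1$, so the remainder-induced terms are subdominant.

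Finally, on $\mathfrak{E}_{t-1}$ we have $e_{t-1}^2\le 2\gamma^{t-1}e_0^2+[1-\gamma^{t-1}]\tfrac{\varepsilon_0^2}{2}\le 4\gamma^{t}e_0^2+[1-\gamma^{t}]\tfrac{\varepsilon_0^2}{2}$, whence $e_{t-1}^2+\varepsilon_0^2\lesssim\gamma^{t}2e_0^2+[1-\gamma^{t}]\varepsilon_0^2$ and $e_{t-1}^4+\varepsilon_0^4\lesssim\gamma^{2t}2e_0^4+[1-\gamma^{t}]^2\varepsilon_0^4$. Substituting these into the estimates of the previous paragraph, using $\beta^{-(t-1)}\le\beta^{-t}$, and feeding the result back into the two displays of the first paragraph produces the claimed bounds on $|\E[F_t|\mathcal{F}_{t-1}]-F_t|$ and $\mathrm{Var}[F_t|\mathcal{F}_{t-1}]$, with suitable numerical constants $c_3,c_4$.

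I expect the main obstacle to be the control of the non-quadratic remainder $R_t$: one must show that both $|R_t|$ and its conditional variance are dominated by the contributions of the quadratic part of $f$, and this is precisely what forces the strengthened batch-size restriction $B\le 40\kappa^{2/3}$ rather than the weaker $B\le 40\kappa^2$ of \Cref{thm:Econtraction}. A secondary difficulty is the careful bookkeeping of the statistical-noise ($\varepsilon_0$) contributions through the truncation $\max\{\,\cdot-\varepsilon_0^2/4,0\}$ and the indicator $1_{\mathfrak{E}_{t-1}}$, so that they recombine into the stated $[1-\gamma^{t}]\varepsilon_0^2$ and $[1-\gamma^{t}]^2\varepsilon_0^4$ factors rather than leaving behind a $t$-independent residual.
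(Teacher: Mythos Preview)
Your proposal follows essentially the same route as the paper: a one-step expansion of $e_t^2$ via \Cref{lemma:Lip}, splitting into the linear term $-\eta\,\Re\langle\nabla f,W_t\rangle$ and the quadratic remainder, then invoking the almost-sure, second-, and fourth-moment bounds of \Cref{lemma:boundSG}, with the constraint $B\le 40\kappa^{2/3}$ entering exactly as you anticipate to force $\eta^2 B^3 r^2=O(1)$ so that the remainder variance is subdominant. The only differences are cosmetic---the paper bounds $|\langle\nabla f,\nabla_{\bU}\ell_t\rangle|$ directly via trace-norm duality rather than Cauchy--Schwarz, and it works with $e_{t+1}^2\cdot 1_{\mathfrak{E}_t}$ directly instead of your cleaner $1$-Lipschitz reduction; on the $\varepsilon_0$ bookkeeping you flag as a secondary difficulty, the paper's intermediate estimates carry $\varepsilon_0^2/d$ (which is what \Cref{lemma:boundSG} actually yields) rather than your $\varepsilon_0^2$, and it then absorbs this into the $[1-\gamma^t]\varepsilon_0^2$ bracket by the same kind of step.
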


\begin{proof}
By~\eqref{ineq:boundgd} and~Lemma~\ref{lemma:Lip} we have
\begin{align*}
      \lv \E[e_{t+1}^2|\mathcal{F}_t]-e_t^2+\eta\Re \l \nabla f(\bU_t),\E[\nabla_{\bU}\ell_{t+1}(\bU_t)]\r \rv
    \le \eta^2 L_{C,\eta}\E\left[\lV \nabla_{\bU}\ell_{t+1}(\bU_t)\rV_{\fro}^2\right]
  \end{align*}
and
\begin{align*}
      \lv e_{t+1}^2-e_t^2+\eta \Re\l \nabla f(\bU_t),\nabla_{\bU}\ell_{t+1}(\bU_t)\r \rv
    \le \eta^2 L_{C,\eta}\lV \nabla_{\bU}\ell_{t+1}(\bU_t)\rV_{\fro}^2,
  \end{align*}
thus,
\begin{align}\label{ineq:EF-F}
     \lv \E\left[F_{t+1}|\mathcal{F}_{t}\right]- F_{t+1}\rv
    & \le \left(1-\frac{\eta B}{2d\kappa}\right)^{-t-1}\lv\E\left[e_{t+1}^2 \cdot 1_{\mathfrak{E}_t}\big|\mathcal{F}_t\right] -e_{t+1}^2 \cdot 1_{\mathfrak{E}_t} \rv\cr
    &\le\left(1-\frac{\eta B}{2d\kappa}\right)^{-t-1}\Big( \lv\eta \Re\l \nabla f(\bU_t),-\E[\nabla_{\bU}\ell_{t+1}(\bU_t)]+\nabla_{\bU}\ell_{t+1}(\bU_t)\r\rv\cr
    &\quad + \eta^2L_{C,\eta}\E\left[\lV \nabla_{\bU}\ell_{t+1}(\bU_t)\rV_{\fro}^2\right]+\eta^2L_{C,\eta}\lV \nabla_{\bU}\ell_{t+1}(\bU_t)\rV_{\fro}^2\Big)\cdot 1_{\mathfrak{E}_t}.
 \end{align}  
For the linear term, by \eqref{ineq:boundEgd} we have
\begin{align*}
  -\Re \l \nabla f(\bU_t),\E[\nabla_{\bU}\ell_{t+1}(\bU_t)]\r\le - \frac{B}{2d}  (1-\delta)^2\sigma_r e_t^2,
\end{align*}
and
\begin{align*}
   &~\lv\Re\l \nabla f(\bU_t),\nabla_{\bU}\ell_{t+1}(\bU_t)\r \rv\cr
   =&~\lv\Re \l \left( \bU_{t} \bU_{t}^{\dag}- \brhos\right)\bU_{t}, \sum_{k=1}^B\left[ \Tr(\bm{A}_{t+1,k} \bU_{t} \bU_{t}^{\dag})-y_{t+1,k} \right]\bm{A}_{t+1,k}\bU_{t}\r\rv \cr
   \le&~\sum_{k=1}^B\lv\left[ \Tr(\bm{A}_{t+1,k} \bU_{t} \bU_{t}^{\dag})-\Tr(\bm{A}_{t+1,k} \brhos) -z_{t+1,k}\right]\rv\cdot\lv \l ( \bU_{t} \bU_{t}^{\dag}- \brhos)\bU_{t} \bU_{t}^{\dag},\bm{A}_{t+1,k}\r\rv\cr
   \le&~ \sum_{k=1}^B\left(\lV\bm{A}_{t+1,k}\rV\lV\bU_{t} \bU_{t}^{\dag}- \brhos\rV_{*}+\frac{\varepsilon_0}{\sqrt{d }}\right)\lV\bm{A}_{t+1,k}\rV\lV\left( \bU_{t} \bU_{t}^{\dag}- \brhos\right)\bU_{t} \bU_{t}^{\dag}\rV_{*}\cr
   \le&~ B\left(\sqrt{2r}\lV\bU_{t} \bU_{t}^{\dag}- \brhos\rV_{\fro}+\frac{\varepsilon_0}{\sqrt{d }}\right)\sqrt{r}\lV\bU_{t} \bU_{t}^{\dag}\rV\lV\bU_{t} \bU_{t}^{\dag}- \brhos\rV_{\fro}\cr
   \le&~ \sqrt{2}B(\kappa+\delta)\sigmas_r r e_t^2+\frac{\varepsilon_0 B(\kappa+\delta)\sigmas_r\sqrt{r}}{\sqrt{d}}e_t,
\end{align*}
where the third inequality follows from the fact that $\lV\left( \bU_{t} \bU_{t}^{\dag}- \brhos\right)\bU_{t} \bU_{t}^{\dag}\rV_{*}\le \sqrt{r}\lV\left( \bU_{t} \bU_{t}^{\dag}- \brhos\right)\bU_{t} \bU_{t}^{\dag}\rV_{\fro}$. For the quadratic term, by \eqref{ineq:boundgd} we have
  \begin{align*}
      \lV\nabla_{\bU}\ell_t(\bU)\rV_{\fro}^2
      \le  B^2\left(4r e_t^2+ \frac{2\varepsilon_0^2}{d}\right)r(\kappa+\delta)\sigmas_r.
  \end{align*}
Therefore, together with \eqref{ineq:Egrad2} and \eqref{ineq:EF-F} we have
\begin{align*}%\label{ineq:EF-Fbound}
     \lv\E\left[F_{t+1}|\mathcal{F}_{t}\right]- F_{t+1}\rv 
   % \le&~\left(1-\frac{\eta B}{2d\kappa}\right)^{-t-1}\Bigg( \Big|- \frac{\eta B}{d}  (1-\delta)\sigma_r e_t^2
   % +\sqrt{2}\eta B(\kappa+\delta)\sigmas_r r e_t^2+\frac{\eta B\varepsilon_0(\kappa+\delta)\delta(\sigmas_r)^2\sqrt{r}}{\sqrt{d}}\Big|\cr
    %&~ +  \frac{3\eta^2 B L_{C,\eta} r(\kappa+\delta)\sigmas_r e_t^2}{d} +\frac{2 \eta^2 B L_{C,\eta}\varepsilon_0^2(\kappa+\delta)\sigmas_r r}{d} 
    %+\eta^2 B^2 L_{C,\eta}\left(4r e_t^2 + \frac{2\varepsilon_0^2}{rd}\right) r(\kappa+\delta)\sigmas_r\Bigg)\cdot 1_{\mathfrak{E}_t}\cr
    \le&~ \tilde{c}_3\left(1-\frac{\eta B}{2d\kappa}\right)^{-t-1}(\eta^2 \kappa^2r^2e_t^2+\eta\kappa re_t^2+\frac{\eta^2\varepsilon_0^2\kappa^2 r^2}{d}+\frac{\eta\varepsilon_0\kappa \sqrt{r}}{\sqrt{d}}e_t)\cdot 1_{\mathfrak{E}_t} \cr
\le&~ \frac{c_3\eta\kappa r}{4} \left(1-\frac{\eta B}{2d\kappa}\right)^{-t-1}\left(e_t+\frac{\varepsilon_0}{\sqrt{d}}\right)^2\cdot 1_{\mathfrak{E}_t}\cr
\le&~ \frac{c_3\eta\kappa r}{2}\left(1-\frac{\eta B}{2d\kappa}\right)^{-t-1}\left( e_t^2+\frac{\varepsilon_0^2}{d}\right)\cdot 1_{\mathfrak{E}_t}\cr
\le&~ c_3\eta\kappa r\left(1-\frac{\eta B}{2d\kappa}\right)^{-t-1}\left( \left(1- \frac{\eta B}{4d\kappa}\right)^{t+1}2e_0^2+\left[1-\left(1-\frac{\eta B}{4d\kappa}\right)^{t+1}\right]\varepsilon_0^2\right),
 \end{align*}   
for some universal constants $\tilde{c}_3, c_3>0$, provided $\eta\le\frac{c_1}{\kappa r}$, $B\le \min\{40\kappa^2, d\}$, and $\varepsilon_0\in (0,1)$.   In the last inequality, we have used the fact that $\left(1-\frac{\eta B}{4d\kappa}\right)^{-1}<\sqrt{2}$ as $c_1$ is sufficiently small, and $\frac{\varepsilon_0^2}{d}\le \frac{\varepsilon_0^2}{2}$.
By \eqref{ineq:Egrad2} and \eqref{eq:Egrad}, we have
\begin{align}\label{ineq:EgradEG}
   &~\E\left[\eta \Re\l \nabla f(\bU_t),\E[\nabla_{\bU}\ell_{t+1}(\bU_t)]-\nabla_{\bU}\ell_{t+1}(\bU_t)\r\cdot 1_{\mathfrak{E}_t}\right]^2\cr
   \le &~\eta^2\lV \nabla f(\bU_t)\rV_{\fro}^2 \E\lV \E[\nabla_{\bU}\ell_{t+1}(\bU_t)]-\nabla_{\bU}\ell_{t+1}(\bU_t)\rV_{\fro}^2\cdot 1_{\mathfrak{E}_t}\cr
   = &~\eta^2\lV \nabla f(\bU_t)\rV_{\fro}^2\left( \E\lV \nabla_{\bU}\ell_{t+1}(\bU_t)\rV_{\fro}^2-\lV\E[\nabla_{\bU}\ell_{t+1}(\bU_t)]\rV_{\fro}^2\right)\cdot 1_{\mathfrak{E}_t}\cr
   \le&~  4\eta^2  \lV\left( \bU_{t} \bU_{t}^{\dag}- \brhos\right)\bU_{t}\rV_{\fro}^2    \Bigg(\frac{3Br(\kappa+\delta)\sigmas_r}{d}\lV  \bU_{t} \bU_{t}^{\dag}- \brhos\rV_{\fro}^2+\frac{2Br\varepsilon_0^2(\kappa+\delta)\sigmas_r}{d} \cr
   &~-\frac{B^2}{d^2}\lV\left( \bU_{t} \bU_{t}^{\dag}- \brhos\right)\bU_{t}\rV_{\fro}^2\Bigg)\cdot 1_{\mathfrak{E}_t}\cr
   \le&~4\eta^2 (\kappa+\delta)\sigmas_r e_t^2\left( \frac{3Br(\kappa+\delta)\sigmas_r}{d} e_t^2+\frac{2Br\varepsilon_0^2 (\kappa+\delta)\sigmas_r}{d} \right)\cdot 1_{\mathfrak{E}_t}.
\end{align}

By \eqref{ineq:Egrad2}, we have 
\begin{align}\label{ineq:Egrad22}
\left(\E[\lV\nabla_{\bU}\ell_t(\bU)\rV_{\fro}^2]\right)^2 
      \le  \left[\frac{6B^2}{d^2}e_t^4+\frac{4B^2\varepsilon_0^4}{d^2}\right] r^2(\kappa+\delta)^2(\sigmas_r)^2. 
\end{align}
Thus, by \eqref{ineq:EF-F} and Cauchy-Schwarz inequality we have
  \begin{align*}
\mathrm{Var}\left[F_{t+1}|\mathcal{F}_{t}\right] 
=&~\E\left[\left(F_{t+1}-\E\left[F_{t+1}|\mathcal{F}_{t}\right]\right)^2\big|\mathcal{F}_{t}\right]\cr
\le&~  2\left(1-\frac{\eta B}{2d\kappa}\right)^{-2t-2}\E\left[\eta \Re\l \nabla f(\bU_t),\E[\nabla_{\bU}\ell_{t+1}(\bU_t)]-\nabla_{\bU}\ell_{t+1}(\bU_t)\r\cdot 1_{\mathfrak{E}_t}\right]^2\cr
&~+2\left(1-\frac{\eta B}{2d\kappa}\right)^{-2t-2}\eta^4\E\left[L_{C,\eta}\E\lV \nabla_{\bU}\ell_{t+1}(\bU_t)\rV_{\fro}^2+L_{C,\eta}\lV \nabla_{\bU}\ell_{t+1}(\bU_t)\rV_{\fro}^2
   \right]^2\cdot 1_{\mathfrak{E}_t}\cr
%\le&~\tilde{c}_4\left(1-\frac{\eta B}{2d\kappa}\right)^{-2t-2}\left(\frac{\eta^2Br e_t^4}{d}+\frac{\eta^4 B^4r^3e_t^4}{d}+\frac{\eta^4\varepsilon_0^4B^4r^2}{d^2} +\frac{\eta^2\varepsilon_0^2Bre_t^2}{d}\right)\cr
\le&~\tilde{c}_4\left(1-\frac{\eta B}{2d\kappa}\right)^{-2t-2}\frac{\eta^2 B r}{d}\left(e_t^4+\eta^2 B^3r^2e_t^4+\frac{\eta^2\varepsilon_0^4 B^3 r}{d}+\varepsilon_0^2e_t^2\right)\cdot 1_{\mathfrak{E}_t}\cr
\le&~\frac{c_4\eta^2Br}{8d}\left(1-\frac{\eta B}{2d\kappa}\right)^{-2t-2}\left( e_t^2+\frac{\varepsilon_0^2}{2}\right)^2\cdot 1_{\mathfrak{E}_t}\cr
\le &~\frac{c_4\eta^2Br}{4d}\left(1-\frac{\eta B}{2d\kappa}\right)^{-2t-2}\left( e_t^4+\frac{\varepsilon_0^4}{4}\right)\cdot 1_{\mathfrak{E}_t}\cr
\le &~\frac{c_4\eta^2Br}{d}\left(1-\frac{\eta B}{2d\kappa}\right)^{-2t-2}\left( \left(1-\frac{\eta B}{4d\kappa}\right)^{2t+2}2e_0^4+\left[1-\left(1-\frac{\eta B}{4d\kappa}\right)^{t+1}\right]^2 \varepsilon_0^4\right)
%\le &~\frac{c_4\eta^2\kappa^2r}{d}\left(1-\frac{\eta B}{2d\kappa}\right)^{-2t-2}\left( \left(1-\frac{\eta B}{4d\kappa}\right)^{2t}e_0^4+\left[1-\left(1-\frac{\eta B}{4d\kappa}\right)^{t}\right]4\varepsilon_0^2 +\frac{\varepsilon_0^4}{r^2}\right),
  \end{align*}
for some universal constants $\tilde{c}_4, c_4>0$, provided $\eta\le\frac{c_1}{\kappa r}$, $B\le \min\{40\kappa^{2/3}, d\}$, and $\varepsilon_0\in (0,1)$. In the second inequality, we have used \eqref{ineq:EgradEG}, \eqref{ineq:Egrad4}, \eqref{ineq:Egrad22} and the fact 
\begin{align*}
   &~\E\left[L_{C,\eta}\E\lV \nabla_{\bU}\ell_{t+1}(\bU_t)\rV_{\fro}^2+L_{C,\eta}\lV \nabla_{\bU}\ell_{t+1}(\bU_t)\rV_{\fro}^2
   \right]^2\cdot 1_{\mathfrak{E}_t}\cr
   =&~L_{C,\eta}^2\E\lV \nabla_{\bU}\ell_{t+1}(\bU_t)\rV_{\fro}^4\cdot 1_{\mathfrak{E}_t}+3L_{C,\eta}^2\left[\E\lV \nabla_{\bU}\ell_{t+1}(\bU_t)\rV_{\fro}^2
   \right]^2\cdot 1_{\mathfrak{E}_t},
\end{align*}
and in the last inequality, we have used the fact that $\left(1-\frac{\eta B}{4d\kappa}\right)^{-2}<2$ as $c_1$ is sufficiently small.
\end{proof}

\begin{lemma}\label{lemma:propconverge}
   Let $\bU_0\in \mathcal{E}(\brho_\star,\delta)$. It holds
    \begin{align}\label{event:propconverge}
        \P\left(e_{t}^2\cdot 1_{\mathfrak{E}_{t-1}}>\left(1-\frac{\eta B}{4\kappa d}\right)^t 2e_0^2+\left[1-\left(1-\frac{\eta B}{2d\kappa}\right)^{t}\right]\frac{\varepsilon_0^2}{2}\right)\le d^{-10}.
    \end{align}  
provided $\eta \le \frac{c_2}{\kappa r\log d}$ and $B\le 40\kappa^{2/3}$ for some sufficiently small numerical constant $c_2>0$. 
\end{lemma}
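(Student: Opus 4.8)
The plan is to apply the Azuma--Bernstein (Freedman-type) martingale inequality to the supermartingale $F_t$ of \Cref{lemma:supermart}, fed by the increment and conditional-variance bounds of \Cref{lemma:boundvar}. We work throughout on the event \eqref{ineq:boundzt} at all rounds $\tau\le t$, which holds with probability at least $1-2td^{-10}$ and is what the proof of \Cref{thm:propconvergence} folds into its $\tfrac{3T}{d^{10}}$ budget; write $\rho_1:=1-\tfrac{\eta B}{2d\kappa}$ and $\rho_2:=1-\tfrac{\eta B}{4d\kappa}$, so $\rho_1<\rho_2<1$.

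First I would recast the target event. With $D_s:=F_s-\E[F_s\mid\mathcal F_{s-1}]$, the partial sums $M_s:=\sum_{j\le s}D_j$ form a martingale ($M_0=0$), and $\E[F_s\mid\mathcal F_{s-1}]\le F_{s-1}$ gives the pathwise bound $F_t\le F_0+M_t$ with $F_0\le e_0^2$. Since $\rho_1^{t}F_t=\max\{e_t^2 1_{\mathfrak{E}_{t-1}}-\tfrac{\varepsilon_0^2}{4},0\}$, the event inside \eqref{event:propconverge}, namely $e_t^2 1_{\mathfrak{E}_{t-1}}>\theta_t$ with $\theta_t:=\rho_2^t 2e_0^2+(1-\rho_1^t)\tfrac{\varepsilon_0^2}{2}$, forces (using $\theta_t\ge\varepsilon_0^2/4$, which always holds once $\rho_1^t\le\tfrac12$ and in the remaining small-$t$ range whenever $e_0\gtrsim\varepsilon_0$, the only regime in which \eqref{event:propconverge} is not vacuous) that $\rho_1^t F_t=e_t^2 1_{\mathfrak{E}_{t-1}}-\tfrac{\varepsilon_0^2}{4}>\theta_t-\tfrac{\varepsilon_0^2}{4}$, hence
\[
  M_t\ \ge\ F_t-F_0\ >\ \rho_1^{-t}\Bigl(\theta_t-\tfrac{\varepsilon_0^2}{4}\Bigr)-F_0\ =:\ \lambda_t,
  \qquad \lambda_t\asymp\Bigl(\tfrac{\rho_2}{\rho_1}\Bigr)^{t}e_0^2+\rho_1^{-t}\varepsilon_0^2,
\]
where the two-sided estimate for $\lambda_t$ uses $F_0\le e_0^2$, $(\rho_2/\rho_1)^t\ge1$ and $\rho_1^{-t}\ge1$. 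So it remains to prove $\P(M_t\ge\lambda_t)\le d^{-10}$.

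For that I would use $\P(M_t\ge\lambda_t)\le\exp\!\bigl(-\tfrac{\lambda_t^2/2}{V_t+R_t\lambda_t/3}\bigr)$, where $R_t$ is the largest of the almost-sure bounds on $|D_s|$, $s\le t$, and $V_t:=\sum_{s\le t}\mathrm{Var}[F_s\mid\mathcal F_{s-1}]$. \Cref{lemma:boundvar} bounds $|D_s|$ by $c_3\eta\kappa r\,\rho_1^{-s}\bigl(\rho_2^s 2e_0^2+(1-\rho_2^s)\varepsilon_0^2\bigr)$, which is nondecreasing in $s$, so $R_t\lesssim\eta\kappa r\bigl((\rho_2/\rho_1)^t e_0^2+\rho_1^{-t}\varepsilon_0^2\bigr)\asymp\eta\kappa r\,\lambda_t$; and it bounds $\mathrm{Var}[F_s\mid\mathcal F_{s-1}]$ by $\tfrac{c_4\eta^2 Br}{d}\rho_1^{-2s}\bigl(\rho_2^{2s}2e_0^4+(1-\rho_2^s)^2\varepsilon_0^4\bigr)$, and summing the two geometric series $\sum_s(\rho_2^2/\rho_1^2)^s$, $\sum_s\rho_1^{-2s}$ — each dominated by its last term up to the factor $(\rho_2^2/\rho_1^2-1)^{-1}\asymp(\rho_1^{-2}-1)^{-1}\asymp d\kappa/(\eta B)$ — yields $V_t\lesssim\eta\kappa r\bigl((\rho_2/\rho_1)^{2t}e_0^4+\rho_1^{-2t}\varepsilon_0^4\bigr)\asymp\eta\kappa r\,\lambda_t^2$. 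Therefore $V_t+R_t\lambda_t/3\lesssim\eta\kappa r\,\lambda_t^2$, the exponent is $\gtrsim 1/(\eta\kappa r)$, and with $\eta\le c_2/(\kappa r\log d)$ it is at least $(\log d)/c_2\ge 10\log d$ once $c_2$ is a small enough absolute constant, giving $\P(M_t\ge\lambda_t)\le d^{-10}$; the hypothesis $B\le 40\kappa^{2/3}$ is precisely what makes the variance estimate of \Cref{lemma:boundvar} valid at this step size.

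The only real obstacle is the bookkeeping in the last step: one must check that the $t$-dependent blow-ups $\rho_1^{-t}$ and $(\rho_2/\rho_1)^t$ carried by the increment and variance bounds are exactly matched by the growth of the admissible deviation $\lambda_t$, and that the ``signal'' ($e_0^2$-) and ``noise-floor'' ($\varepsilon_0^2$-) contributions stay in the same proportion inside $\lambda_t$, $R_t$ and $V_t$, so that every power of $\eta$, $B$, $\kappa$, $d$ cancels and the Azuma--Bernstein exponent comes out as $\Theta(1/(\eta\kappa r))$. Everything else is a routine application of the cited lemmas.
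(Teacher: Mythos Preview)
Your proposal is correct and follows essentially the same route as the paper: both apply the Azuma--Bernstein/Freedman inequality to the supermartingale $F_t$ from \Cref{lemma:supermart}, feed it the increment and conditional-variance bounds of \Cref{lemma:boundvar}, and sum the resulting geometric series to show the deviation threshold grows exactly in step with the allowed fluctuation. The only cosmetic difference is that the paper fixes $\beta$ to make the tail probability equal $d^{-10}$ and then verifies $\rho_1^{t}\beta$ fits under the target bound, whereas you fix the target deviation $\lambda_t$ and verify the Freedman exponent exceeds $10\log d$; these are equivalent reorganizations of the same computation.
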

\begin{proof}
  Let $\sigma^2 = \sum_{\tau=1}^t \mathrm{Var}\left[F_{\tau}|\mathcal{F}_{\tau-1}\right]$ and let $R$ satisfies  $ \lv\E\left[F_{\tau}|\mathcal{F}_{\tau-1}\right]-F_{\tau}\rv\le R$ almost surely for all $\tau\in [t]$.  By the standard Azuma-Bernstein inequality for supermartingales, we have
  \begin{align*}
      \P\left(F_t\ge F_0+\beta\right)\le \exp\left(\frac{-\beta^2/2}{\sigma^2+R\beta/3}\right),
  \end{align*} 
which implies
\begin{align*}
      \P\left[\max\left\{e_t^2\cdot 1_{\mathfrak{E}_{t-1}}-\frac{\varepsilon_0^2}{4},0\right\} \ge \left(1-\frac{\eta B}{2d\kappa}\right)^{t}\max\left\{e_0^2-\frac{\varepsilon_0^2}{4},0\right\}+\left(1-\frac{\eta B}{2d\kappa}\right)^{t}\beta\right]\le \exp\left(\frac{-\beta^2/2}{\sigma^2+R\beta/3}\right).
  \end{align*}
We consider only the non-trivial case. For $\beta= \sqrt{10\sigma^2\log d+\frac{100}{36}R^2\log^2 d}$, we have
  \begin{align*}
      \P\left[e_t^2 \cdot 1_{\mathfrak{E}_{t-1}}\ge \left(1-\frac{\eta B}{2d\kappa}\right)^{t}e_0^2+\left(1-\left(1-\frac{\eta B}{2d\kappa}\right)^{t}\right)\frac{\varepsilon_0^2}{4}+\left(1-\frac{\eta B}{2d\kappa}\right)^{t}\beta\right]\le d^{-10}.
  \end{align*}
As $\left(1-\frac{\eta B}{2d\kappa}\right)^{t}e_0^2\le \left(1-\frac{\eta B}{4d\kappa}\right)^{t}e_0^2$,  we only need to show that $\left(1-\frac{\eta B}{2d\kappa}\right)^{t}\beta \le \left(1-\frac{\eta}{4d\kappa}\right)^{t}e_0^2+  \left[1-\left(1-\frac{\eta B}{4d\kappa}\right)^{t}\right]\frac{\varepsilon_0^2}{4}$ in the following.
In fact, provided $\eta\le \frac{c_2}{\kappa r \log d}$ with $c_2$ sufficiently small, by Lemma~\ref{lemma:boundvar} we have 
\begin{align*}
\left(1-\frac{\eta B}{2d\kappa}\right)^{2t}\sigma^2 
=&~ \left(1-\frac{\eta B}{2d\kappa}\right)^{2t}\sum_{\tau=1}^t \mathrm{Var}\left[F_{\tau}|\mathcal{F}_{\tau-1}\right]  \cr
\le&~ \sum_{\tau=1}^t \frac{c_4\eta^2Br}{d}\left(1-\frac{\eta B}{2d\kappa}\right)^{2t-2\tau}\left( \left(1-\frac{\eta B}{4d\kappa}\right)^{2\tau}2e_0^4+\left[1-\left(1-\frac{\eta B}{4d\kappa}\right)^{\tau}\right]^2 \varepsilon_0^4\right)\cr
\le&~  \left(1-\frac{\eta B}{4d\kappa}\right)^{2t}\sum_{\tau=1}^t \frac{c_4\eta^2Br}{d}\left(\frac{1-\frac{\eta B}{2d\kappa}}{1-\frac{\eta B}{4d\kappa}}\right)^{2t-2\tau}2e_0^4\cr
&~+\sum_{\tau=1}^t\left(1-\frac{\eta B}{2d\kappa}\right)^{2t-2\tau}\frac{c_4\eta^2Br}{d}\left[1-\left(1-\frac{\eta B}{4d\kappa}\right)^{t}\right]^2 \varepsilon_0^4 \cr
<&~ \frac{4c_2c_4}{\log d}\left(\left(1-\frac{\eta B}{4d\kappa}\right)^{2t}e_0^4+\left[1-\left(1-\frac{\eta B}{4d\kappa}\right)^{t}\right]^2 \frac{\varepsilon_0^4}{2}\right),
\end{align*}
where the second inequality follows from $\left[1-\left(1-\frac{\eta B}{4d\kappa}\right)^{\tau}\right]^2\le \left[1-\left(1-\frac{\eta B}{4d\kappa}\right)^{t}\right]^2$ for $\tau \le t$, the last inequality follows from the fact that $\sum_{\tau =1}^{t}a^{2t-2\tau}=\frac{1-a^{2t}}{1-a^2}< \frac{1}{1-a^2}$ and hence $\sum_{\tau=1}^t\left(\frac{1-\frac{\eta B}{2d\kappa}}{1-\frac{\eta B}{4d\kappa}}\right)^{2t-2\tau}<\frac{2d\kappa}{\eta B}$, $\sum_{\tau=1}^t\left(1-\frac{\eta B}{2d\kappa}\right)^{2t-2\tau}<\frac{2d\kappa}{\eta B}$. Also, by Lemma~\ref{lemma:boundvar} we have
\begin{align*}
  \left(1-\frac{\eta B}{2d\kappa}\right)^{t}R < \frac{c_2c_3}{\log d}\left( \left(1- \frac{\eta B}{4d\kappa}\right)^{t}2e_0^2+\left[1-\left(1-\frac{\eta B}{4d\kappa}\right)^{t+1}\right]\varepsilon_0^2\right). 
 \end{align*}
Noticing that $\beta\le \sqrt{10\sigma^2\log d}+\frac{10}{6}R\log d$, then for sufficiently small $c_2$ we have
 \begin{align*}
  \left(1-\frac{\eta B}{2d\kappa}\right)^{t}\beta \le \left(1-\frac{\eta B}{4d\kappa}\right)^{t}e_0^2+  \left[1-\left(1-\frac{\eta B}{4d\kappa}\right)^{t}\right]\frac{\varepsilon_0^2}{4}, 
 \end{align*}
 which completes the proof.
\end{proof}
\begin{proof}[Proof of \Cref{thm:propconvergence}]
By Lemma~\ref{lemma:propconverge}, we have for $\bU_0\in \mathcal{E}(\brho_\star,\delta)$ and any $t\ge 1$, it holds
 \begin{align*}
 \P\left(\mathfrak{E}_{t-1} \cap \mathfrak{E}_{t}^c\right) =      \P\left(\mathfrak{E}_{t-1} \cap \left\{e_{t}^2>\left(1-\frac{\eta B}{4\kappa d}\right)^t 2e_0^2+\left[1-\left(1-\frac{\eta B}{2d\kappa}\right)^{t}\right]\frac{\varepsilon_0^2}{2}\right\}\right)\le d^{-10}.
 \end{align*}
Thus, we have
\begin{align*}
 \P\left(\mathfrak{E}_{T}^c \right)\le \sum_{t=1}^T\P\left(\mathfrak{E}_{t-1} \cap \mathfrak{E}_{t}^c\right)\le \frac{T}{d^{10}}.
 \end{align*}
Finally, by a union bound together with event \eqref{ineq:boundzt}  for $t\in[T]$, we complete the proof. 
\end{proof}

\vspace{1em}

\noindent{\bf Acknowledgements.} This work is supported by the National Key Research and Development Program of China
(No. 2024YFE0102500), the National Science Foundation of China (No. 62302346, No.
12125103, No. 12071362, No. 12371424, No. 12371441, No. 12401121), the Hubei Provincial Natural Science
Foundation of China (No. 2024AFA045), the Fundamental Research Funds for the Central Universities (Grant No. 2042025kf0023), the Shenzhen Municipal Stability Support Program Project (No. 8960117/0123), and the Hong Kong Research Grants Council GRF Grants (No. 16306821, No. 16307023, No. 16306124).
\bibliographystyle{IEEEtran}
\bibliography{OnlineQST}

\end{document}